\tikzset{
  bigblue/.style={circle, draw=blue!80,fill=blue!40,thick, inner sep=1.5pt, minimum size=5mm},
  bigred/.style={circle, draw=red!80,fill=red!40,thick, inner sep=1.5pt, minimum size=5mm},
  bigblack/.style={circle, draw=black!100,fill=black!40,thick, inner sep=1.5pt, minimum size=5mm},
  bluevertex/.style={circle, draw=blue!100,fill=blue!100,thick, inner sep=0pt, minimum size=2mm},
  redvertex/.style={circle, draw=red!100,fill=red!100,thick, inner sep=0pt, minimum size=2mm},
  blackvertex/.style={circle, draw=black!100,fill=black!100,thick, inner sep=0pt, minimum size=2mm},  
  whitevertex/.style={circle, draw=black!100,fill=white!100,thick, inner sep=0pt, minimum size=2mm},  
  smallblack/.style={circle, draw=black!100,fill=black!100,thick, inner sep=0pt, minimum size=1mm},  
}
\newtheorem{theorem}{Theorem}
\newtheorem{proposition}[theorem]{Proposition}
\newtheorem{lemma}[theorem]{Lemma}
\newtheorem{corollary}[theorem]{Corollary}
\newcommand{\ecto}{\ensuremath{\stackrel{ec}{\longrightarrow}}}
\newcommand{\SWITCH}[1]{\textsc{SW-#1-Colouring}\xspace}
\newcommand{\VDEL}[1]{\textsc{VD-#1-Colouring}\xspace}
\newcommand{\EDEL}[1]{\textsc{ED-#1-Colouring}\xspace}
\newcommand{\Ptime}{\textsf{P}\xspace}
\newcommand{\W}{\textsf{W[1]}\xspace}
\newcommand{\NP}{\textsf{NP}\xspace}
\newcommand{\XP}{\textsf{XP}\xspace}
\newcommand{\set}[1]{\left\{#1\right\}}
\newcommand{\abs}[1]{\left|#1\right|}
\newlength{\atextwidth}
\newcommand{\problemparam}[4]{
  \vspace{1mm}
\noindent\fbox{
  \begin{minipage}{\atextwidth}
  \begin{tabular*}{\textwidth}{@{\extracolsep{\fill}}lr} #1 & {\bf{Parameter:}} #3 \\ \end{tabular*}
  {\bf{Input:}} #2  \\
  {\bf{Question:}} #4
  \end{minipage}
  }
  \vspace{1mm}
}
\newcommand{\specialcell}[2][c]{%
  \begin{tabular}[#1]{@{}l@{}}#2\end{tabular}}
\newcommand{\footremember}[2]{%
    \footnote{#2}
    \newcounter{#1}
    \setcounter{#1}{\value{footnote}}%
}
\newcommand{\footrecall}[1]{%
    \footnotemark[\value{#1}]%
}
\title{Graph modification for edge-coloured and signed graph homomorphism problems: parameterized and classical complexity\thanks{This research was financed by the ANR project HOSIGRA (ANR-17-CE40-0022) and the IFCAM project ``Applications of graph homomorphisms'' (MA/IFCAM/18/39). A shorter version of this paper appeared in the proceedings of the conference IPEC'19~\cite{IPEC}.}}
\author{%
  Florent Foucaud\footremember{limos}{LIMOS, CNRS UMR 6158, Universit\'e Clermont Auvergne, Aubi\`ere, France.}\footremember{labri}{Univ. Bordeaux, Bordeaux INP, CNRS, LaBRI, UMR5800, F-33400 Talence, France.}\footremember{orleans}{Univ. Orléans, INSA Centre Val de Loire, LIFO EA 4022, F-45067 Orléans Cedex 2, France.}
  \and Herv\'e Hocquard\footrecall{labri}
  \and Dimitri Lajou\footrecall{labri}
  \and Valia Mitsou\footremember{irif}{Université Paris-Diderot, IRIF, CNRS, 75205, Paris, France.}
  \and Th\'eo Pierron\footremember{liris}{Univ Lyon, Universit\'e Claude Bernard, CNRS, LIRIS - UMR 5205, F69622, France.}\footrecall{labri}~\footremember{cz}{Faculty of Informatics, Masaryk University, Botanick\'a 68A, 602 00 Brno, Czech Republic.}
  }
\begin{document}

\maketitle

\begin{abstract}
  We study the complexity of graph modification problems with respect to homomorphism-based colouring properties of edge-coloured graphs. A homomorphism from an edge-coloured graph $G$ to an edge-coloured graph $H$ is a vertex-mapping from $G$ to $H$ that preserves adjacencies and edge-colours. We consider the property of having a homomorphism to a fixed edge-coloured graph $H$, which generalises the classic vertex-colourability property. The question we are interested in is the following: given an edge-coloured graph $G$, can we perform $k$ graph operations so that the resulting graph admits a homomorphism to $H$? The operations we consider are vertex-deletion, edge-deletion and switching (an operation that permutes the colours of the edges incident to a given vertex). Switching plays an important role in the theory of signed graphs, that are $2$-edge-coloured graphs whose colours are the signs $+$ and~$-$. We denote the corresponding problems (parameterized by $k$) by \VDEL{$H$}, \EDEL{$H$} and \SWITCH{$H$}. These problems generalise the extensively studied \textsc{$H$-Colouring} problem (where one has to decide if an input graph admits a homomorphism to a fixed target $H$). For $2$-edge-coloured $H$, it is known that \textsc{$H$-Colouring} already captures the complexity of all fixed-target Constraint Satisfaction Problems.

  Our main focus is on the case where $H$ is an edge-coloured graph with at most two vertices, a case that is already interesting since it includes standard problems such as \textsc{Vertex Cover}, \textsc{Odd Cycle Transversal} and \textsc{Edge Bipartization}. For such a graph $H$, we give a \Ptime/\NP-complete complexity dichotomy for all three \VDEL{$H$}, \EDEL{$H$} and \SWITCH{$H$} problems. Then, we address their parameterized complexity. We show that all \VDEL{$H$} and \EDEL{$H$} problems for such $H$ are FPT. This is in contrast with the fact that already for some $H$ of order~$3$, unless \Ptime = \NP, none of the three considered problems is in \XP, since \textsc{$3$-Colouring} is \NP-complete. We show that the situation is different for \SWITCH{$H$}: there are three $2$-edge-coloured graphs $H$ of order~$2$ for which \SWITCH{$H$} is \W-hard, and assuming the ETH, admits no algorithm in time $f(k)n^{o(k)}$ for inputs of size $n$ and for any computable function $f$. For the other cases, \SWITCH{$H$} is FPT.
\end{abstract}

\section{Introduction}
Graph colouring problems such as \textsc{$k$-Colouring} are among the most fundamental problems in algorithmic graph theory. The problem of \textsc{$H$-Colouring} is a homomorphism-based generalisation of \textsc{$k$-Colouring} that is extensively studied~\cite{B18,FV98,HN90,Marx}. Considering a fixed graph $H$, in \textsc{$H$-Colouring} one asks whether an input graph $G$ admits a homomorphism (an edge-preserving vertex-mapping) to $H$. Observe that \textsc{$k$-Colouring} is the same problem as \textsc{$K_k$-Colouring}, where $K_k$ is the complete graph of order $k$ (the order of a graph is its number of vertices).

We will consider parameterized variants of \textsc{$H$-Colouring} where $H$ is an edge-coloured graph. We say that a graph is \emph{$t$-edge-coloured} if its edges are coloured with at most $t$ colours. In this paper, all (edge-coloured) graphs may have loops and multiple edges, but multiple edges of the same colour are irrelevant. (Thus by \emph{graph} we effectively mean \emph{multigraph}.) We sometimes give actual colour names to the colours: red, blue, green. For $2$-edge-coloured graphs, we will use red and blue as the two edge colours. A standard uncoloured graph can be seen as $1$-edge-coloured. For two edge-coloured graphs $G$ and $H$, a {\em homomorphism} from $G$ to $H$ is a vertex-mapping $\varphi:V(G)\to V(H)$ such that, if $xy$ is an edge of colour~$i$ in $G$, then $\varphi(x)\varphi(y)$ is an edge of colour~$i$ in $H$. Whenever such a $\varphi$ exists, we say that $G$ maps to $H$, and we write $G\ecto H$. 

The \textsc{$H$-Colouring} problems are well-studied, see for example~\cite{BBM05,Bthesis,B94,BDHQ05,SignedDicho}. They are special cases of \emph{Constraint Satisfaction Problems} (CSPs). A large set of CSPs can be modeled by homomorphisms of general relational structures to a fixed relational structure $H$~\cite{FV98}. 
The corresponding decision problem is noted as \textsc{$H$-CSP}. When $H$ has only binary relations, $H$ can be seen as an edge-coloured graph (a relation corresponds to the set of edges of a given colour) and \textsc{$H$-CSP} is exactly \textsc{$H$-Colouring}. The complexity of \textsc{$H$-CSP} has been the subject of intensive research in the last decades, since Feder and Vardi conjectured in~\cite{FV98} that \textsc{$H$-CSP} is either in \Ptime or is \NP-complete --- a statement that became known as the Dichotomy Conjecture. The latter conjecture was solved in 2017 in~\cite{B17,Z20} independently; the criterion for \textsc{$H$-CSP} to be in \Ptime is based on certain algebraic properties of $H$. Nevertheless, determining whether a structure $H$ satisfies this criterion is not an easy task (even for targets as simple as oriented trees~\cite{B18}). Thus, the study of more simple and elegant complexity classifications for relevant special cases is of high importance.

The complexity of \textsc{$H$-Colouring} when $H$ is uncoloured is well-understood: it is in \Ptime if $H$ contains a loop or is bipartite; otherwise it is \NP-complete~\cite{HN90}. This was one of the early dichotomy results in the area. On the other hand, when $H$ is a $2$-edge-coloured graph, it was proved that the class of \textsc{$H$-Colouring} problems captures the difficulty of the whole class of \textsc{$H$-CSP} problems~\cite{SignedDicho}, and thus the dichotomy classification for this class of problems is expected to be much more intricate.

Our goal is to study generalisations of \textsc{$H$-Colouring} problems for edge-coloured graphs by enhancing them as \emph{modification problems}. In this setting, given a graph property $\mathcal P$ and a graph operation $\pi$, the graph modification problem for $\mathcal P$ and $\pi$ asks whether an input graph $G$ can be made to satisfy property $\mathcal P$ after applying operation $\pi$ a given number $k$ of times. This is a classic setting studied extensively both in the realms of classical and parameterized complexity, see for example~\cite{Cai96,survey-edge-mod,KR02,LY80,Y81}. In this context, the most studied graph operations are vertex-deletion  and edge-deletion, see the seminal papers~\cite{LY80,Y81}. 

For a fixed graph $H$, let $\mathcal P(H)$ denote the property of admitting a homomorphism to $H$. Certain standard computational problems can be stated as graph modification problems to $\mathcal P(H)$. For example, \textsc{Vertex Cover} is the graph modification problem for property $\mathcal P(K_1)$ and operation vertex-deletion. Similarly, \textsc{Odd Cycle Transversal} and \textsc{Edge Bipartization} are the graph modification problems for $\mathcal P(K_2)$ and vertex-deletion, and $\mathcal P(K_2)$ and edge-deletion, respectively.

When considering edge-coloured graphs with only two edge-colours, another operation of interest is \emph{switching}: to switch at a vertex $v$ is to change the colour of all edges incident with $v$. (Note that a loop does not change its colour under switching.) This operation is of prime importance in the context of signed graphs. A \emph{signed graph} is a $2$-edge-coloured graph in which the two colours are denoted by signs ($+$ and $-$). 
A graph is called \emph{balanced} if it can be switched to be all-positive. The concepts of signed graphs, balance and switching, were introduced and developed in~\cite{H53,Z82} and have many interesting applications, in particular in social networks and biological dynamical systems (see~\cite{HBN10} and the references therein).

The switching operation plays an important role in the study of homomorphisms of signed graphs, a concept defined in~\cite{NRS15} which has many connections to deep questions in structural graph theory. In their definition, before mapping the vertices, one may perform any number of switchings. (Note that when switching at a set $S$ of vertices of a signed graph $G$, the order does not matter: ultimately, only the edges between $S$ and its complement $V(G)\setminus S$ change their sign.) The algorithmic complexity of this problem was studied in~\cite{SignedDicho,BS18,planar,FN14}. Herein, we will consider edge-coloured graph modification problems for property $\mathcal P(H)$ (for fixed edge-coloured graphs $H$) and for graph operations vertex-deletion, edge-deletion and switching.

A parameterized problem is a decision problem where in addition to the input, a parameter is also considered (the parameter is an integer function of the input). Such a problem is \emph{fixed parameter tractable} (FPT) if for any input $I$ with parameter value $k$, it can be solved in time $O(f(k)|I|^{c})$ for a computable function $f$ and integer $c$. It is in the class \XP if it can be solved in time $O(|I|^{g(k)})$ for a computable function $g$. It is \W-hard if all problems in the class \W can be reduced in FPT time to it. For more details, see the books~\cite{BookParamAlgo,DF13}. 

Let us now formally define the problems of interest to us, where $H$ is a fixed edge-coloured graph (the parameter is always $k$).

\problemparam{\VDEL{$H$}}{An edge-coloured graph $G$, an integer $k$.}{$k$.}{Is there a set $S$ of at most $k$ vertices of $G$ such that $(G-S)\ecto H$?}

\problemparam{\EDEL{$H$}}{An edge-coloured graph $G$, an integer $k$.}{$k$.}{Is there a set $S$ of at most $k$ edges of $G$ such that $(G-S)\ecto H$?}

\problemparam{\SWITCH{$H$}}{A $2$-edge-coloured graph $G$, an integer $k$.}{$k$.}{Is there a set $S$ of $k$ vertices of $G$ such that the $2$-edge-coloured graph $G'$ obtained from $G$ by switching at every vertex of $S$ satisfies $G'\ecto H$?}

In the study of the three above problems, one may assume that $H$ is a \emph{core} (that is, $H$ does not have a homomorphism to a proper subgraph of itself). Indeed, it is well-known that for any subgraph $H'$ of $H$ with $H\ecto H'$, we have $G\ecto H$ if and only if $G\ecto H'$~\cite{B94}.

Of course, whenever \textsc{$H$-Colouring} is \NP-complete, all three above problems are \NP-complete, even when $k=0$, and so they are not in \XP (unless $\Ptime=\NP$). This is for example the case when $H$ is a monochromatic triangle: then this is the problem \textsc{$3$-Colouring}. Thus, from the point of view of parameterized complexity, it is of primary interest to consider these problems for edge-coloured graphs $H$ such that \textsc{$H$-Colouring} is in \Ptime. In that case a simple brute-force algorithm iterating over all $k$-subsets of vertices of $G$ implies that the three problems are in \XP and hence the interesting question is whether these problems are FPT or not. For undirected graphs, the only cores $H$ for which \textsc{$H$-Colouring} is in \Ptime are the three connected graphs with at most one edge~\cite{H53} (a single vertex with no edge, a single vertex with a loop, two vertices joined by an edge), so in that case the interest of these problems is limited. However, for many interesting families of edge-coloured graphs $H$, the problem \textsc{$H$-Colouring} is in \Ptime, and the class of such graphs $H$ is not very well-understood, see~\cite{Bthesis,B94,BDHQ05}. Even when $H$ is a $2$-edge-coloured cycle, tree or complete graph, there are infinitely many $H$ with \textsc{$H$-Colouring} \NP-complete and infinitely many $H$ where it is in \Ptime~\cite{Bthesis}.

Recall that when $H$ is a single vertex with no loop, \VDEL{$H$} is exactly \textsc{Vertex Cover}. If $H$ has a single edge, \VDEL{$H$} and \EDEL{$H$} are \textsc{Odd Cycle Transversal} and \textsc{Edge Bipartization}, respectively. For $H$ consisting of a single (blue) loop, \SWITCH{$H$} for $k=|V(G)|$ consists in checking whether the given $2$-edge-coloured graph $G$ is balanced (a problem that is in \Ptime~\cite{SignedDicho}). More generally, \SWITCH{$H$} for $2$-edge-coloured graphs $H$ and $k=|V(G)|$ (that is, the number of switchings is unrestricted) is exactly the problem \textsc{Signed $H$-Colouring} studied in~\cite{SignedDicho,BS18,FN14}.

\paragraph{Related work.} Several works address the parameterized complexity of graph colouring problems. Graph colouring problems parameterized by structural parameters are considered in~\cite{JJ17}. In~\cite{Chitnis2017}, the vertex-deletion variant of \textsc{$H$-List-Colouring} is studied. Graph modification problems for \textsc{Colouring} in specific graph classes and for operations vertex-deletion and edge-deletion are considered, for example in~\cite{Cai} (bipartite graphs, split graphs) and~\cite{TH06} (comparability graphs).

Every problem \VDEL{H} can be encoded as a special \emph{weighted homomorphism} problem \textsc{$H'$-Weighted-Colouring}, as considered in~\cite{weighted}. In that setting, the target $H'$ is a graph with integer weights, and the goal is to find a homomorphism of some input graph $G$ whose weight (i.e. the sum of weights of the images of the vertices of $G$) is at most some given integer $k$. In our setting, we could generalize this problem to edge-coloured graphs and build $H'$ from $H$ by setting all weights to $0$ and adding a new vertex $x$ adjacent to all vertices of $H$ with weight $1$. Now finding a weighted homomorphism of $G$ to $H$ with weight as most $k$ is the same as having a positive solution to \VDEL{$H$} (vertices mapped to $x$ represent the deleted vertices in $S$). A similar notion was studied for general CSPs in~\cite{BM14}. In that setting, only one ``free'' target vertex has weight $0$ and all the others, weight $1$, and the goal is to find a homomorphism of weight at most a given integer $k$. The Boolean CSP version where there are only two target values, $0$ and $1$, and we wish to minimize the number of variables set to $1$, is called the \textsc{Min Ones} problem~\cite{MinOnes}.

Algorithmic problems relative to the operation of \emph{Seidel switching}, similar to our switching, have been considered. Given a (simple) graph $G$, the Seidel switching operation performed at a vertex exchanges all adjacencies and non-adjacencies of $v$. This can be seen as performing a switching operation in a $2$-edge-coloured complete graph, where blue edges are the actual edges of $G$, and red edges are its non-edges. In~\cite{EHHR,Seidel}, the complexity of graph modification problems with respect to the Seidel switching operation and the property of being a member of certain graph classes has been studied. Our work on \SWITCH{$H$} problems can be seen as a variation of these problems, generalised to arbitrary $2$-edge-coloured graphs.

A related switching problem is as follows: given a signed graph $G$ and a positive integer $k$, can it be switched so that there are at most $k$ negative edges? This is shown to be \NP-complete in~\cite{HBN10}.

\paragraph{Our results.} We study the classical and parameterized complexities of the three problems \VDEL{$H$}, \EDEL{$H$} and \SWITCH{$H$}. Our focus is on $t$-edge-coloured graphs $H$ of order at most~$2$ with $t$ an integer ($t=2$ for \SWITCH{$H$}). Despite having just two vertices, \textsc{$H$-Colouring} for such $H$ is interesting and nontrivial; it is proved to be in \Ptime by two different nontrivial methods, see~\cite{BBM05,BDHQ05}. Thus, the three considered problems are in \XP for such $H$. (Recall that for suitable $1$-edge-coloured graphs $H$ of order $1$ or $2$, \VDEL{$H$} and \EDEL{$H$} include \textsc{Vertex Cover} and \textsc{Odd Cycle Transversal}.)

We completely classify the classical complexity of \VDEL{$H$} when $H$ is a $t$-edge-coloured graph of arbitrary order: it is either trivially in \Ptime or \NP-complete. 
It turns out that all \VDEL{$H$} 
problems are FPT when $H$ has order at most $2$. To prove this, we extend a method from~\cite{BDHQ05} and reduce the problem to an FPT variant of \textsc{$2$-Sat}.

%

For \EDEL{$H$}, 
a classical complexity dichotomy seems more difficult to obtain, as there are nontrivial \Ptime cases. We perform such a classification when $H$ is a $t$-edge-coloured graph of order at most~$2$. Similar \textsc{$2$-Sat}-based arguments as for \VDEL{$H$} give a FPT algorithm for \EDEL{$H$} when $H$ has order at most $2$.

For \SWITCH{$H$} when $H$ is a $2$-edge-coloured graph,
the classical dichotomy is again more difficult to obtain. We perform such a classification by using some characteristics of the switch operation and by giving some reductions to well-known \NP-complete problems. In contrast to the two previous cases for the parameterized complexity, we show that for three graphs $H$ of order~$2$, \SWITCH{$H$} is already \W-hard (and cannot be solved in time $f(k)|G|^{o(k)}$ for any computable function $f$, assuming the ETH\footnote{The Exponential Time Hypothesis, ETH, postulates that \textsc{$3$-SAT} cannot be solved in time $2^{o(n)}(n+m)^c$, where $n$ and $m$ are the input's number of variables and clauses, and $c$ is any integer~\cite{IPZ01}.}). For all other $2$-edge-coloured graphs of order~$2$, we prove that \SWITCH{$H$} is FPT.

Table~\ref{table:Results} presents a brief overview of our results, and Table~\ref{table:order2} lists the classical and parameterized complexities of the three considered problems for all $2$-edge-coloured graphs of order at most~$2$.

\begin{table}
\centering
\begin{tabular}{|l|l|l|l|}
\hline
Problem & \VDEL{$H$} & \EDEL{$H$} & \SWITCH{$H$}\\
\hline
\Ptime vs \NP-hard & \specialcell{Dichotomy for\\all graphs (Cor. \ref{cor:VDEL-classification})} & \specialcell{Dichotomy when\\ $|V(H)| \leq 2$ (Thm. \ref{thm:EDEL-order2-poly/NPc})} & \specialcell{Dichotomy when\\ $|V(H)| \leq 2$ (Thm. \ref{thm:SWITCH-order2-poly/NPc})}\\
\hline
\specialcell{FPT vs. \W-hard\\ when $|V(H)| \leq 2$} & All FPT (Thm. \ref{thm:VDEL-order2-FPT}) & All FPT (Thm. \ref{thm:VDEL-order2-FPT}) & Dichotomy (Thms. \ref{thm:SWITCH-order2-FPT}, \ref{thm:W1-general})\\
\hline
\end{tabular}
\caption{Overview of our main results, sorted by problem and by type of classification.}
\label{table:Results}
\end{table}

\begin{table}
\centering
\resizebox{\columnwidth}{!}{%
\begin{tabular}{|c|c|c|c|}
  \hline
  Graph $H$ & \VDEL{$H$} & \EDEL{$H$} & \SWITCH{$H$}\\
  \hline
\scalebox{0.9}{\begin{tikzpicture}[every loop/.style={},scale=1.1]
    \node[blackvertex] (u) at (0,0) {};
    \path[thick,dashed,red] (u)   edge[out=20,in=80,loop, min distance=5mm] node  {} (u);
    \path[thick,blue] (u)   edge[out=100,in=160,loop, min distance=5mm] node  {} (u);
\end{tikzpicture}} & \Ptime & \Ptime &  \Ptime\\
$H^{1}_{rb}$ &&&\\
\hline

\scalebox{0.9}{\begin{tikzpicture}[every loop/.style={},scale=1.1]
    \node[blackvertex] (u) at (0,0) {};
    \path[thick,blue] (u)   edge[out=60,in=120,loop, min distance=5mm] node  {} (u);    
\end{tikzpicture}} & \NP-hard but FPT &  \Ptime &  \Ptime\\
$H^{1}_{b}$ &&&\\
\hline
  
\scalebox{0.9}{\begin{tikzpicture}[every loop/.style={},scale=1.1]    
    \node[blackvertex] (u) at (0,0) {};
    \path (u)+(0,0.2) node {};
  \end{tikzpicture}} & \NP-hard but FPT & \Ptime & \Ptime\\
$H^{1}_-$ &&&\\
\hline

\scalebox{0.9}{\begin{tikzpicture}[every loop/.style={},scale=1.1]
    \node[blackvertex] (u) at (0,0) {};
    \node[blackvertex] (v) at (0.5,0) {};
    \path[thick,red,dashed] (u)   edge[out=60,in=120,loop, min distance=5mm] node  {} (u);
    \path[thick,blue] (v)   edge[out=60,in=120,loop, min distance=5mm] node  {} (v);
   \end{tikzpicture}} & \NP-hard but FPT & \Ptime & \Ptime\\
$H^{2-}_{r,b}$ &&&\\
\hline

\scalebox{0.9}{\begin{tikzpicture}[every loop/.style={},scale=1.1]
    \node[blackvertex] (u) at (0,0) {};
    \node[blackvertex] (v) at (1.5,0) {};
    \path (u)+(0,0.2) node {};
    \draw[thick,blue] (u)--(v);
  \end{tikzpicture}} & \NP-hard but FPT & \NP-hard but FPT & \Ptime\\
$H^{2b}_{-,-}$ &&&\\
\hline

\scalebox{0.9}{\begin{tikzpicture}[every loop/.style={},scale=1.1]
    \node[blackvertex] (u) at (0,0) {};
    \node[blackvertex] (v) at (1.5,0) {};
    \draw[thick,blue] (u)--(v);
    \path[thick,dashed,red] (u)   edge[out=60,in=120,loop, min distance=5mm] node  {} (u);
    \path[thick,blue] (v)   edge[out=60,in=120,loop, min distance=5mm] node  {} (v);
  \end{tikzpicture}} & \NP-hard but FPT & \NP-hard but FPT & \NP-hard but FPT\\
$H^{2b}_{r,b}$ &&&\\
\hline

\scalebox{0.9}{\begin{tikzpicture}[every loop/.style={},scale=1.1]
    \node[blackvertex] (u) at (0,0) {};
    \node[blackvertex] (v) at (1.5,0) {};
    \draw[thick,blue] (u)--(v);
    \path[thick,dashed,red] (u)   edge[out=60,in=120,loop, min distance=5mm] node  {} (u);
\end{tikzpicture}} & \NP-hard but FPT & \NP-hard but FPT & \NP-hard but FPT \\
$H^{2b}_{r,-}$ & & &\\
\hline

\scalebox{0.9}{\begin{tikzpicture}[every loop/.style={},scale=1.1]
    \node[blackvertex] (u) at (0,0) {};
    \node[blackvertex] (v) at (1.5,0) {};
    \draw[thick,blue] (u)--(v);
    \path[thick,dashed,red] (u)   edge[out=60,in=120,loop, min distance=5mm] node  {} (u);
    \path[thick,dashed,red] (v)   edge[out=60,in=120,loop, min distance=5mm] node  {} (v);
\end{tikzpicture}} & \NP-hard but FPT & \NP-hard but FPT & \Ptime\\
$H^{2b}_{r,r}$ & & &\\
\hline

\scalebox{0.9}{\begin{tikzpicture}[every loop/.style={},scale=1.1]
    \node[blackvertex] (u) at (0,0) {};
    \node[blackvertex] (v) at (1.5,0) {};
    \path (u)+(0,0.3) node {};
    \draw[thick,blue] (u) to[bend right=20] (v);
    \draw[thick,dashed,red] (v) to[bend right=20] (u);   
  \end{tikzpicture}} & \NP-hard but FPT & \NP-hard but FPT & \Ptime\\
$H^{2rb}_{-,-}$ &&&\\
\hline

\scalebox{0.9}{\begin{tikzpicture}[every loop/.style={},scale=1.1]
    \node[blackvertex] (u) at (0,0) {};
    \node[blackvertex] (v) at (1.5,0) {};
    \draw[thick,blue] (u) to[bend right=20] (v);
    \draw[thick,dashed,red] (v) to[bend right=20] (u);  
    \path[thick,dashed,red] (u)   edge[out=60,in=120,loop, min distance=5mm] node  {} (u);
    \path[thick,blue] (v)   edge[out=60,in=120,loop, min distance=5mm] node  {} (v);
   \end{tikzpicture}} & \NP-hard but FPT & \NP-hard but FPT & \NP-hard and \W-h \\
$H^{2rb}_{r,b}$ &&&\\
\hline

\scalebox{0.9}{\begin{tikzpicture}[every loop/.style={},scale=1.1]
    \node[blackvertex] (u) at (0,0) {};
    \node[blackvertex] (v) at (1.5,0) {};
    \draw[thick,blue] (u) to[bend right=20] (v);
    \draw[thick,dashed,red] (v) to[bend right=20] (u);  
    \path[thick,dashed,red] (u)   edge[out=60,in=120,loop, min distance=5mm] node  {} (u);
  \end{tikzpicture}} & \NP-hard but FPT & \NP-hard but FPT & \NP-hard and \W-h\\
$H^{2rb}_{r,-}$ &&&\\
\hline

\scalebox{0.9}{\begin{tikzpicture}[every loop/.style={},scale=1.1]
    \node[blackvertex] (u) at (0,0) {};
    \node[blackvertex] (v) at (1.5,0) {};
    \draw[thick,blue] (u) to[bend right=20] (v);
    \draw[thick,dashed,red] (v) to[bend right=20] (u); 
    \path[thick,dashed,red] (u)   edge[out=60,in=120,loop, min distance=5mm] node  {} (u);
    \path[thick,dashed,red] (v)   edge[out=60,in=120,loop, min distance=5mm] node  {} (v);
\end{tikzpicture}} & \NP-hard but FPT & \NP-hard but FPT & \NP-hard and \W-h\\
$H^{2rb}_{r,r}$ &&&\\
\hline

\end{tabular}
}
\caption{Our results for target graphs $H$ of order at most~$2$ (up to inversion of edge-colours, there are twelve such graphs, see Section~\ref{sec:classical}).}
\label{table:order2}
\end{table}

%

Our paper is structured as follows. In Section~\ref{sec:prelim}, we state some definitions and make some preliminary observations in relation with the literature. 
In Section~\ref{sec:classical}, we study the classical complexity of the three considered problems. We address their parameterized complexity in Section~\ref{sec:param}. Finally, we conclude in Section~\ref{sec:conclu}.

\section{Preliminaries and known results}\label{sec:prelim}



\subsection{Some known complexity dichotomies}

Recall that whenever \textsc{$H$-Colouring} is \NP-complete, \VDEL{$H$}, \EDEL{$H$} and \SWITCH{$H$} are \NP-complete (even for $k=0$), and thus are not in \XP, unless \Ptime = \NP. For example, this is the case when $H$ is a monochromatic triangle. When \textsc{Signed $H$-Colouring} (this is \SWITCH{$H$} for $k=|V(G)|$, see~\cite{SignedDicho}) is \NP-complete, then \SWITCH{$H$} is \NP-complete (but could still be in \XP or FPT).

On the other hand, when \textsc{$H$-Colouring} is in \Ptime, all three problems are in \XP for parameter $k$ (by a brute-force algorithm iterating over all $k$-subsets of vertices of $G$, performing the operation on these $k$ vertices, and then solving \textsc{$H$-Colouring}):

\begin{proposition}
Let $H$ be an edge-coloured graph such that \textsc{$H$-Colouring} is in \Ptime. Then, \VDEL{$H$}, \EDEL{$H$} and \SWITCH{$H$} can be solved in time $|G|^{O(k)}$.
\end{proposition}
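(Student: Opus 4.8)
The plan is a direct brute-force argument, carried out uniformly for the three operations. First I would observe that in each case the sought set $S$ is drawn from a ground set of polynomial size: for \VDEL{$H$} and \SWITCH{$H$}, $S$ is a set of $k$ vertices, so there are at most $\binom{|V(G)|}{k} = O(|G|^k)$ candidates; for \EDEL{$H$}, $S$ is a set of $k$ edges, and since the edge set of $G$ has polynomial size (loops and parallel edges are allowed, but parallel edges of the same colour are irrelevant, so $|E(G)| \le |V(G)|^2$ up to this reduction), there are at most $\binom{|E(G)|}{k} = |G|^{O(k)}$ candidates. Enumerating all of them takes time $|G|^{O(k)}$.

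Next, for each candidate $S$ I would construct the modified graph in polynomial time. For vertex or edge deletion, this is just $G-S$. For switching, the graph $G'$ obtained by switching at every vertex of $S$ is well-defined: as noted in the introduction, the order of the individual switchings is irrelevant, since ultimately only the edges of the cut $(S, V(G)\setminus S)$ change colour, so $G'$ can be computed by a single scan over $E(G)$. In all three cases the resulting graph has size $O(|G|)$ and is produced in time polynomial in $|G|$.

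Finally, for each modified graph $G^\star$ (that is, $G-S$ or $G'$), I would test whether $G^\star \ecto H$. By hypothesis \textsc{$H$-Colouring} is in \Ptime, so this test runs in time polynomial in $|G^\star| = O(|G|)$, hence in $|G|^{O(1)}$. The algorithm returns ``yes'' if and only if some candidate $S$ passes this test. The total running time is the number of candidates times the per-candidate cost, namely $|G|^{O(k)} \cdot |G|^{O(1)} = |G|^{O(k)}$, which in particular places all three problems in \XP.

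There is essentially no real obstacle in this proof; the only points deserving a word of care are that the edge set of a (possibly loopy, multi-)graph is still polynomially bounded so that the edge-deletion enumeration stays within $|G|^{O(k)}$, and that switching at a set of vertices is a well-defined operation, independent of the order in which the individual switchings are performed.
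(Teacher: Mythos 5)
Your proposal is correct and matches the paper's argument exactly: the paper justifies this proposition by the same brute-force enumeration of all $k$-subsets of vertices (or edges), applying the operation, and then invoking the polynomial-time \textsc{$H$-Colouring} test. You simply spell out the routine details (polynomial size of the candidate pool, well-definedness of switching at a set) that the paper leaves implicit.
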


When $k=0$ and $H$ is $1$-coloured, we have the following classic theorem.

\begin{theorem}[Hell and Ne\v{s}et\v{r}il \cite{HN90}]\label{thm:HNdicho}
Let $H$ be a $1$-edge-coloured graph. \textsc{$H$-Colouring} is in \Ptime if the core of $H$ has at most one edge ($H$ is bipartite or has a loop), and \NP-complete otherwise.
\end{theorem}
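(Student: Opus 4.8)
\emph{Proof idea.} The plan is to handle the two directions separately; only the \NP-hardness direction is substantial. For the polynomial cases, I would first note that ``the core of $H$ has at most one edge'' is equivalent to ``$H$ has a loop or $H$ is bipartite'': if $H$ has a loop then $\mathrm{core}(H)$ is a single looped vertex; if $H$ is loopless and bipartite then $H\ecto K_2$, so $\mathrm{core}(H)\in\{K_1,K_2\}$; and conversely the only graphs with at most one edge that can be cores are the isolated vertex, the looped vertex and $K_2$, each of which is looped or bipartite. Since $G\ecto H$ if and only if $G\ecto\mathrm{core}(H)$ (recalled in the excerpt), in each of these three cases \textsc{$H$-Colouring} reduces to a trivial polynomial test: ``is $G$ edgeless?'' (for $\mathrm{core}(H)=K_1$), ``always yes'' (for $\mathrm{core}(H)$ the looped vertex), or ``is $G$ bipartite?'' (for $\mathrm{core}(H)=K_2$).

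For the \NP-complete direction, suppose the core of $H$ has at least two edges. Replacing $H$ by its core, I may assume $H$ is itself a loopless, non-bipartite core; it then contains an odd cycle, and I write $g$ for the length of a shortest one. Membership in \NP is clear. For \NP-hardness I would use the \emph{indicator} and \emph{sub-indicator} constructions. Given a connected graph $I$ with two distinguished vertices $i,j$ admitting an automorphism exchanging them, define $H^{\star I}$ on vertex set $V(H)$ by joining $u$ to $v$ exactly when some homomorphism $I\ecto H$ sends $i\mapsto u$ and $j\mapsto v$; then \textsc{$H^{\star I}$-Colouring} reduces in polynomial time to \textsc{$H$-Colouring}, since from an instance $G$ one may substitute a fresh copy of $I$ for every edge of $G$ (identifying its endpoints with $i$ and $j$) to obtain a graph that maps to $H$ if and only if $G\ecto H^{\star I}$. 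The sub-indicator construction is the analogous device producing certain induced subgraphs of $H$ by ``pinning'' vertices. A standard preliminary argument, using that distinct components of a core are pairwise non-homomorphic, lets me assume in addition that $H$ is connected.

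The strategy is then to apply a well-chosen finite sequence of these constructions to turn $H$ into a complete graph $K_n$ with $n\ge 3$, and to invoke the classical \NP-completeness of \textsc{$n$-Colouring}. The prototypical computation is the case $H=C_{2k+1}$: taking $I$ to be the path on $2k$ vertices with its two endpoints distinguished, a walk of length $2k-1$ in $C_{2k+1}$ cannot be closed (being an odd walk shorter than the odd girth) yet reaches every other vertex, so $C_{2k+1}^{\star I}=K_{2k+1}$, whence \textsc{$C_{2k+1}$-Colouring} is \NP-complete. For a general connected non-bipartite core $H$ of odd girth $g$, I would first use sub-indicator constructions to peel $H$ down to a sufficiently structured subgraph organised around a shortest odd cycle, and then an indicator construction of this ``short-walk'' type to collapse that subgraph to a non-trivial complete graph.

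The main obstacle is precisely this last structural step: for an arbitrary connected non-bipartite core one must choose the indicator and sub-indicator graphs so that the target produced is a complete graph on at least three vertices --- in particular one must avoid creating a loop (which would make the target colouring problem trivially solvable) and avoid collapsing to a bipartite or edgeless graph. Making this work uniformly over all possible shapes of $H$ is the heart of the Hell--Ne\v{s}et\v{r}il argument~\cite{HN90}; by comparison, the polynomial cases and \NP-membership are routine.
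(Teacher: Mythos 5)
This statement is imported by the paper as a known classical result and is not proved there at all (it is attributed to Hell and Ne\v{s}et\v{r}il and cited as~\cite{HN90}), so there is no in-paper argument to compare yours against. Judged on its own terms, your writeup is a faithful outline of the actual Hell--Ne\v{s}et\v{r}il proof, and the parts you carry out in detail are correct: the equivalence between ``the core has at most one edge'' and ``$H$ has a loop or is bipartite'', the three trivial polynomial tests, the correctness of the indicator reduction (including the role of the $i\leftrightarrow j$ automorphism in making $H^{\star I}$ well-defined and the direction of the reduction, from \textsc{$H^{\star I}$-Colouring} to \textsc{$H$-Colouring}), and the computation $C_{2k+1}^{\star I}=K_{2k+1}$ for $I$ a path with $2k-1$ edges, which is the prototypical instance of the method.

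However, as a proof of the theorem the proposal has a genuine gap, and you name it yourself: for a general connected loopless non-bipartite core $H$, you never exhibit the sequence of indicator and sub-indicator constructions that produces a loop-free complete graph $K_n$ with $n\geq 3$. This is not a routine verification that can be waved through --- it is the entire content of the hardness direction, and it occupies most of the original paper (the case analysis around the shortest odd cycle, the treatment of vertices of differing degrees, and the delicate avoidance of loops in the derived target). Everything you have written before that point reduces the theorem to this step without making progress on it, so what you have is an accurate roadmap of~\cite{HN90} rather than a proof. If the intent is simply to justify citing the theorem, the sketch is adequate; if the intent is to supply a self-contained proof, the structural step must actually be executed (or one could substitute one of the later, shorter proofs of the dichotomy, e.g.\ the algebraic ones, which organize this step differently).
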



There is no analogue of Theorem~\ref{thm:HNdicho} for edge-coloured graphs. In fact, it is proved in~\cite[Section~3]{SignedDicho} that a dichotomy classification for \textsc{$H$-Colouring} restricted to $2$-edge-coloured $H$ would imply a dichotomy for all fixed-target CSP problems. Thus, no simple combinatorial classification is expected to exist. Even for trees, cycles or complete graphs, such classifications are far from trivial, see the PhD thesis~\cite{Bthesis} for an overview of some partial results highlighting the difficulty of the problem. Some classifications exist for certain classes of graphs $H$, such as those of order at most~$2$ (see~\cite[Section 3]{BBM05} and \cite[Section 2.1]{BDHQ05}) or paths~\cite[Sections 2 and 3]{B94}.

For \SWITCH{$H$} with $k=|V(G)|$, (that is, \textsc{Signed $H$-Colouring}), we have the following (where the \emph{switching core} of a $2$-edge-coloured graph is a notion of core where an arbitrary number of switchings can be performed before the self-mapping).

\begin{theorem}[Brewster et al. \cite{SignedDicho,BS18}]\label{thm:dicho-signed-col}
Let $H$ be a signed graph. \textsc{Signed $H$-Colouring} is in \Ptime if the switching core of $H$ has at most two edges, and \NP-complete otherwise.
\end{theorem}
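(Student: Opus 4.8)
Since a signed graph $G$ admits a signed homomorphism to $H$ exactly when it admits one to the switching core of $H$, the plan is to assume throughout that $H$ is its own switching core, and then to treat the two directions separately.

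For the polynomial direction I would enumerate the signed graphs that are their own switching core and have at most two edges (loops counted), and check that \textsc{Signed $H$-Colouring} reduces to a polynomially verifiable condition in each case. Up to switching-isomorphism this list is short: a single vertex with no edge (the condition is ``$G$ is edgeless''); the two-vertex signed graph with one positive edge (``$G$ is bipartite and balanced'', both decidable in polynomial time, balance via a spanning-tree argument); a single vertex with a positive loop (``$G$ is balanced''); and a single vertex carrying one positive and one negative loop (always a yes-instance, since one may send the positive edges and loops of $G$ to the positive loop and the negative ones to the negative loop). The only real work is checking that nothing else survives as a switching core of size at most two, which is a routine finite verification.

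For the hardness direction one must show that \emph{every} signed graph that is its own switching core and has at least three edges yields an \NP-complete problem, and this is where the effort lies. I would start from the observation that a signed homomorphism $G\to H$ amounts to choosing a ``switch'' function $s\colon V(G)\to\{0,1\}$ together with a map $\varphi\colon V(G)\to V(H)$ so that every edge $xy$ of $G$ of sign $\varepsilon$ is sent to an edge of $H$ of sign $\varepsilon\cdot(-1)^{s(x)+s(y)}$. Encoding the pair $(\varphi(x),s(x))$ as the image of $x$ turns \textsc{Signed $H$-Colouring} into \textsc{$H^{*}$-Colouring}, where $H^{*}$ is the $2$-edge-coloured graph on vertex set $V(H)\times\{0,1\}$ whose two edge colours record the ``$+$''- and ``$-$''-constraints. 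Since $H^{*}$-Colouring for $2$-edge-coloured $H^{*}$ is a fixed-template CSP, it then suffices to prove, via the algebraic dichotomy of~\cite{B17,Z17}, that $H^{*}$ admits no Taylor polymorphism whenever $H$ is a switching core with at least three edges --- for instance by exhibiting a primitive-positive construction of \textsc{$3$-Colouring} or of \textsc{Not-All-Equal $3$-Sat} inside $H^{*}$.

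A more hands-on alternative, which I would try first in order to keep the argument combinatorial, is to classify $H$ according to whether it has a loop, whether it is switching-bipartite, and which small signed graphs it contains, and then in each case to locate a hard substructure --- a signed triangle, an unbalanced even cycle, or an uncoloured non-bipartite core covered by Theorem~\ref{thm:HNdicho} --- and to transport the corresponding hardness into $H$ via a Hell--Ne\v{s}et\v{r}il-style indicator and sub-indicator construction that forces prescribed input vertices onto that substructure. Either way, the hard part will be uniformity: proving that ``at least three edges'' is \emph{exactly} the threshold past which no polynomial invariant built from bipartiteness and balance can decide the problem, and in particular handling the switching-bipartite-but-unbalanced cores (the unbalanced even cycles and their relatives), for which neither a naive reduction from \textsc{$3$-Colouring} nor a direct balance argument is available. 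The polynomial direction and the reduction to a fixed-template CSP are, by comparison, routine.
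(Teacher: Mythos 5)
This statement is not proved in the paper at all: it is Theorem~\ref{thm:dicho-signed-col}, quoted from Brewster et al.~\cite{SignedDicho,BS18}, and the paper only uses it as a black box (e.g.\ in the proof of Theorem~\ref{thm:SWITCH-order2-poly/NPc}). So there is no internal proof to compare against; your proposal has to stand on its own as a proof of the cited dichotomy, and as such it has a genuine gap. The entire content of the theorem lies in the hardness direction --- showing that \emph{every} switching core with at least three edges gives an \NP-complete problem --- and your text does not prove this; it only names two candidate strategies (rule out Taylor polymorphisms of the doubled template $H^{*}$ via the algebraic dichotomy of~\cite{B17,Z17}, or run a combinatorial case analysis with indicator constructions) and then explicitly concedes that the uniformity over all such $H$, and in particular the switching-bipartite but unbalanced cores, is unresolved. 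That concession is precisely the theorem. The reduction of \textsc{Signed $H$-Colouring} to \textsc{$H^{*}$-Colouring} on $V(H)\times\{0,1\}$ is standard and correct, but it only recasts the problem as a fixed-template CSP; by the paper's own discussion (Section~\ref{sec:prelim}), $2$-edge-coloured homomorphism problems are as rich as all fixed-target CSPs, so no dichotomy comes for free from that translation.

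There is also a smaller but concrete error in the polynomial direction: your enumeration of switching cores with at most two edges is incomplete. The paper lists, besides the one-vertex cores and the single-edge $K_2$, the two-vertex core with two parallel edges of opposite signs (the graph $H^{2rb}_{-,-}$ of Figure~\ref{fig:order2}); it is its own switching core with exactly two edges, and you omit it. The corresponding problem is easy (a signed graph maps to it, up to switching, if and only if its underlying graph is bipartite), so the omission is repairable, but it contradicts your claim that the finite verification is routine and already done. To actually establish the theorem you would need to either carry out the polymorphism-free argument for $H^{*}$ for every core with at least three edges, or reproduce the case analysis of~\cite{BS18}; as written, the hard half is a plan rather than a proof.
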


Note that $2$-edge-coloured graphs where the switching core has at most two edges either have one vertex (with zero loop, one loop or two loops of different colours), or two vertices (with either one edge or two parallel edges of different colours joining them)~\cite{SignedDicho}. If there are two vertices joined by one edge and a loop at one of the vertices, we can switch at the non-loop vertex if necessary to obtain one edge-colour, and then retract the whole graph to the loop-vertex, so this is not a core.





\subsection{Homomorphism dualities and FPT time}
\label{sec:duality}

For a $t$-edge-coloured graph $H$, we say that $H$ has the \emph{duality property} if there is a set $\mathcal F(H)$ of $t$-edge-coloured graphs such that, for any $t$-edge-coloured graph $G$, $G\ecto H$ if and only if no graph $F$ of $\mathcal F(H)$ satisfies $F\ecto G$. If $\mathcal F(H)$ is finite, we say that $H$ has the \emph{finite duality property}.
If checking whether any graph $F$ in $\mathcal F(H)$ satisfies $F\ecto G$ (for an input edge-coloured graph $G$) is in \Ptime, we say that $H$ has the \emph{polynomial duality property}. This is in particular the case when $\mathcal F(H)$ is finite. For such $H$, \textsc{$H$-Colouring} is in \Ptime. This topic is explored in detail for edge-coloured graphs in~\cite{BBM05}. By a simple bounded search tree argument, we get the following:

\begin{proposition}
\label{prop:finite-duality-FPT}
Let $H$ be an edge-coloured graph with the finite duality property. Let $c=\max\{|V(F)|, F\in\mathcal F(H)\}$. 
The problems \VDEL{$H$} and \SWITCH{$H$} can be solved in time $O(f(\mathcal F(H))n^{c})$ for some computable function $f$.
The problem \EDEL{$H$}can be solved in time $O(f(\mathcal F(H))n^{c^2})$ for some computable function $f$.
\end{proposition}
\begin{proof}
First, we search for all appearances of homomorphic images of graphs in $\mathcal F(H)$ (there are at most $f(\mathcal F(H))$ such images for some exponential function $f$), which we call \emph{obstructions}. This takes time at most $O(f(\mathcal F(H))n^{c})$, where $c=\max\{|V(F)|, F\in\mathcal F(H)\}$. Then, we need to get rid of each obstruction. For \VDEL{$H$} (resp. \EDEL{$H$}), we need to delete at least one vertex (resp. edge) in each obstruction, thus we can branch on all $c$ (resp. $c^2$) possibilities. For \SWITCH{$H$}, we need to switch at least one of the vertices of the obstruction (but then update the list of obstructions, as we may have created a new one). In all cases, this gives a search tree of height $k$ and degree bounded by a function of $\mathcal F(H)$, which is FPT.
\end{proof}

Some dualities have been obtained for small edge-coloured graphs. The following theorem from~\cite[Section 3]{BBM05} is crucial for our techniques.

\begin{theorem}[{Brewster et al. \cite[Section 3]{BBM05}}]\label{thm:dualities-order2}
Let $H$ be an edge-coloured graph of order at most $2$. Then, $H$ has the polynomial duality property. If $H$ has order $1$, then $H$ has the finite duality property.
\end{theorem}

We next describe the duality sets for some special cases that will be used in our proofs.

\begin{lemma}[{Brewster et al. \cite[Proposition 3.3]{BBM05}}]\label{lemma:duality-H_{r,r}^{2b}}
A $2$-edge-coloured graph has a homomorphism to $H_{r,r}^{2b}$ if and only if it contains no homomorphic image of cycles with an odd number of blue edges. 
\end{lemma}

We present a brief proof of their result. Note that  homomorphic images of paths are walks and that homomorphic images of cycles are closed walks.

\begin{proof}
  Let $G$ be a $2$-edge-coloured graph which admits a homomorphism $\phi$ to $H_{r,r}^{2b}$. Suppose that $G$ contains a homomorphic image of some cycle with an odd number of blue edges, that is to say $G$ contains a closed walk $W$ with an odd number of blue edges. Note that  if $uv$ is a blue edge, then $\phi(u) \neq \phi(v)$ and if $uv$ is a red edge, then $\phi(u) = \phi(v)$. By going around the closed walk, we obtain $\phi(u) \neq \phi(u)$ for any vertex $u$ of $W$, a contradiction.
  
  Let $G$ be a $2$-edge-coloured graph which contains no homomorphic image of cycles with an odd number of blue edges. We identify every connected red components of $G$. The graph that we obtain has red loops but no other red edges, moreover the graph induced by the blue components is bipartite (otherwise there would be a cycle with an odd number of blue edges in $G$). Hence by identifying the vertices of each part in the bipartition, we obtain $H_{r,r}^{2b}$. Hence $G \ecto H_{r,r}^{2b}$.
\end{proof}

\begin{lemma}[{Brewster et al. \cite[Proposition 3.4]{BBM05}}]\label{lemma:duality-H_{r,b}^{2b}}
A $2$-edge-coloured graph has a homomorphism to $H_{r,b}^{2b}$ if and only if it contains no homomorphic image of a red-blue-red 4-vertex path.
\end{lemma}

\begin{proof}
Let $u$ be the vertex of $H_{r,b}^{2b}$ with a red loop, and $v$ the vertex with a blue loop. Given a $2$-edge-coloured graph $G$, map all the vertices incident with a red edge to $u$, and map all others to $v$. This is a homomorphism unless two vertices mapped to $u$ are joined by a blue edge. But in this case, we can find a homomorphic image of a red-blue-red walk in $G$.
Conversely, note that a red-blue-red path has no homomorphism to $H_{r,b}^{2b}$.
\end{proof}

\begin{lemma}[{Brewster et al. \cite[Theorem~3.5]{BBM05}}]\label{lemma:duality-H_{r,-}^{2b}}
A $2$-edge-coloured graph has a homomorphism to $H_{r,-}^{2b}$ if and only if it contains no homomorphic image of a path of the form $RB^{2p-1}R$ (where $R$ is a red edge, $B$ a blue edge and $p\geq 1$ is an integer) or of cycles with an odd number of blue edges.
\end{lemma}

\begin{proof}[Proof (sketch)]
  First note that none of the two obstructions admit a homomorphism to $H_{r,-}^{2b}$. If a $2$-edge-coloured graph $G$ has none of these homomorphic images then by identifying every vertex incident with a red edge of $G$, we obtain a bipartite graph on the blue edges for which one of the two parts contains every vertex incident with a red loop. By mapping this part to the vertex of $H_{r,-}^{2b}$ with the red loop and the other part to the other vertex, we obtain our homomorphism.
\end{proof}

\begin{lemma}[{Brewster et al. \cite[Theorem 3.7]{BBM05}}]\label{lemma:duality-H_{r,r}^{2rb}}
A $2$-edge-coloured graph has a homomorphism to $H_{r,r}^{2rb}$ if and only if it contains no homomorphic image of an all-blue odd cycle.
\end{lemma}

\begin{proof}[Proof (sketch)]
  The idea is to note that the graph induced by the blue edges is bipartite and that the red edges do not create any constraints.
\end{proof}

The proof of the following results are more complicated, hence we refer the reader to~\cite{BBM05} for the details.
In a $2$-edge-coloured graph, a closed walk $v_0v_1\dots v_t$ is \emph{alternating} if for every $i < t-1$, $v_iv_{i+1}$ and $v_{i+1}v_{i+2}$ do not have the same colour. An \emph{odd figure eight} is a closed walk of the form $v_0$, $v_1$,  \dots, $v_{2j}$, $v_0$, $v_{2j+2}$, \dots, $v_{2p-1}$, $v_0$, \textit{i.e.} two odd cycles which share a vertex $v_0$.

\begin{lemma}[{Brewster et al. \cite[Theorem 3.7]{BBM05}}]\label{lemma:duality-H_{r,-}^{2rb}}
A $2$-edge-coloured graph has a homomorphism to $H_{r,-}^{2rb}$ if and only if it contains no homomorphic image of an odd figure eight $v_0$, $v_1$,  \dots, $v_{2j}$, $v_0$, $v_{2j+2}$, \dots, $v_{2p-1}$, $v_0$ for which all edges $v_{2i}v_{2i+1}$ are blue.
\end{lemma}

\begin{lemma}[{Brewster et al. \cite[Theorem 3.7]{BBM05}}]\label{lemma:duality-H_{r,b}^{2rb}}
A $2$-edge-coloured graph has a homomorphism to $H_{r,b}^{2rb}$ if and only if it contains no homomorphic image of \emph{alternating odd figure eight}, that is, an alternating
  closed walk $v_0$, $v_1$,  \dots, $v_{2j}$, $v_0$, $v_{2j+2}$, \dots, $v_{2p-1}$, $v_0$.
\end{lemma}





\section{\Ptime/\NP-complete complexity dichotomies}\label{sec:classical}


In this section, we prove some results about the classical complexity of \VDEL{$H$}, \EDEL{$H$} and \SWITCH{$H$}. 
We first adapt a general method from~\cite{LY80} to show that \VDEL{$H$} is either trivial, or \NP-complete in Section~\ref{sec:vdclassification}.

For \EDEL{$H$} and \SWITCH{$H$}, we cannot use this technique (in fact there exist nontrivial \Ptime cases). Thus, we turn our attention to edge-coloured graphs of order~$2$ (note that for every edge-coloured graph $H$ of order at most~$2$, \textsc{$H$-Colouring} is in \Ptime~\cite{BBM05,BDHQ05}). 
Recall that \SWITCH{$H$} is defined only on $2$-edge-coloured graphs, so our focus is on this case (but for \EDEL{$H$} our results hold for any number of colours).
In Section~\ref{sec:edclassification}, we prove a dichotomy result for graphs of order at most~$2$ for the \EDEL{$H$} problem. The \SWITCH{$H$} problem is treated in Section~\ref{sec:swclassification}, where we also prove a dichotomy result.

The twelve $2$-edge-coloured graphs of order at most~$2$ that are cores (up to symmetries of the colours) are depicted in Figure~\ref{fig:order2}. The two colours are red (dashed edges) and blue (solid edges). We use the terminology of~\cite{BBM05}: for $\alpha\in\{-,r,b,rb\}$, the $2$-edge-coloured graph $H^1_\alpha$ is the graph of order $1$ with no loop, a red loop, a blue loop, and both kinds of loops, respectively. Similarly, for $\alpha\in\{-,r,b,rb\}$ and $\beta,\gamma\in\{-,r,b\}$, the graph $H^{2\alpha}_{\beta,\gamma}$ denotes the graph of order $2$ with vertex set $\{0,1\}$. The string $\alpha$ indicates the presence of an edge between $0$ and $1$: no edge, a red edge, a blue edge and both edges for $-$, $r$, $b$ and $rb$, respectively. Similarly, $\beta$ and $\gamma$ denote the presence of a loop at vertices $0$ and $1$, respectively ($-$ for no loop, $r$ for a red loop, $b$ for a blue loop).


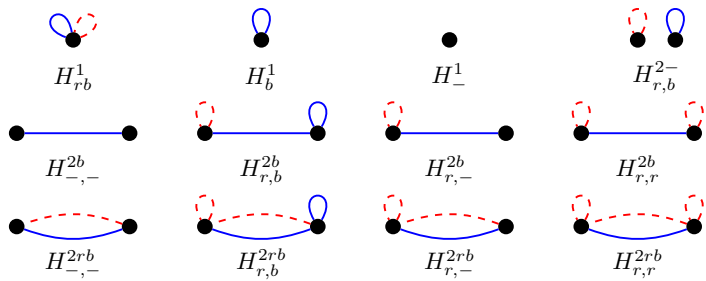
\begin{figure}[!htpb]
\centering
\scalebox{0.9}{\begin{tikzpicture}[every loop/.style={},scale=1.1]
  \begin{scope}[xshift=0.75cm]
    \node[blackvertex] (u) at (0,0) {};
    \path[thick,dashed,red] (u)   edge[out=20,in=80,loop, min distance=5mm] node  {} (u);
    \path[thick,blue] (u)   edge[out=100,in=160,loop, min distance=5mm] node  {} (u);
    
    \path (u)+(0,-0.5) node {$H^{1}_{rb}$};
  \end{scope}

    \begin{scope}[xshift=3.25cm]
    \node[blackvertex] (u) at (0,0) {};
    \path[thick,blue] (u)   edge[out=60,in=120,loop, min distance=5mm] node  {} (u);
    
    \path (u)+(0,-0.5) node {$H^{1}_{b}$};
  \end{scope}
  
  \begin{scope}[xshift=5.75cm]
    \node[blackvertex] (u) at (0,0) {};
    \path (u)+(0,-0.5) node {$H^{1}_-$};
  \end{scope}

  \begin{scope}[xshift=8.25cm]
    \node[blackvertex] (u) at (0,0) {};
    \node[blackvertex] (v) at (0.5,0) {};
    \path[thick,red,dashed] (u)   edge[out=60,in=120,loop, min distance=5mm] node  {} (u);
    \path[thick,blue] (v)   edge[out=60,in=120,loop, min distance=5mm] node  {} (v);
    
    \path (u)+(0.25,-0.5) node {$H^{2-}_{r,b}$};
  \end{scope}

  \begin{scope}[xshift=0cm, yshift=-1.25cm]
    \node[blackvertex] (u) at (0,0) {};
    \node[blackvertex] (v) at (1.5,0) {};
    \draw[thick,blue] (u)--(v);
    \path (u)+(0.75,-0.5) node {$H^{2b}_{-,-}$};
  \end{scope}

  \begin{scope}[xshift=2.5cm,yshift=-1.25cm]
    \node[blackvertex] (u) at (0,0) {};
    \node[blackvertex] (v) at (1.5,0) {};
    \draw[thick,blue] (u)--(v);
    \path[thick,dashed,red] (u)   edge[out=60,in=120,loop, min distance=5mm] node  {} (u);
    \path[thick,blue] (v)   edge[out=60,in=120,loop, min distance=5mm] node  {} (v);
    \path (u)+(0.75,-0.5) node {$H^{2b}_{r,b}$};
  \end{scope}

   \begin{scope}[xshift=5cm,yshift=-1.25cm]
    \node[blackvertex] (u) at (0,0) {};
    \node[blackvertex] (v) at (1.5,0) {};
    \draw[thick,blue] (u)--(v);
    \path[thick,dashed,red] (u)   edge[out=60,in=120,loop, min distance=5mm] node  {} (u);
    \path (u)+(0.75,-0.5) node {$H^{2b}_{r,-}$};
  \end{scope}

   \begin{scope}[xshift=7.5cm,yshift=-1.25cm]
    \node[blackvertex] (u) at (0,0) {};
    \node[blackvertex] (v) at (1.5,0) {};
    \draw[thick,blue] (u)--(v);
    \path[thick,dashed,red] (u)   edge[out=60,in=120,loop, min distance=5mm] node  {} (u);
    \path[thick,dashed,red] (v)   edge[out=60,in=120,loop, min distance=5mm] node  {} (v);
    \path (u)+(0.75,-0.5) node {$H^{2b}_{r,r}$};
   \end{scope}

  \begin{scope}[xshift=0cm, yshift=-2.5cm]
    \node[blackvertex] (u) at (0,0) {};
    \node[blackvertex] (v) at (1.5,0) {};
    \draw[thick,blue] (u) to[bend right=20] (v);
    \draw[thick,dashed,red] (v) to[bend right=20] (u);   
    \path (u)+(0.75,-0.5) node {$H^{2rb}_{-,-}$};
  \end{scope}

  \begin{scope}[xshift=2.5cm,yshift=-2.5cm]
    \node[blackvertex] (u) at (0,0) {};
    \node[blackvertex] (v) at (1.5,0) {};
    \draw[thick,blue] (u) to[bend right=20] (v);
    \draw[thick,dashed,red] (v) to[bend right=20] (u);  
    \path[thick,dashed,red] (u)   edge[out=60,in=120,loop, min distance=5mm] node  {} (u);
    \path[thick,blue] (v)   edge[out=60,in=120,loop, min distance=5mm] node  {} (v);
    \path (u)+(0.75,-0.5) node {$H^{2rb}_{r,b}$};
  \end{scope}

   \begin{scope}[xshift=5cm,yshift=-2.5cm]
    \node[blackvertex] (u) at (0,0) {};
    \node[blackvertex] (v) at (1.5,0) {};
    \draw[thick,blue] (u) to[bend right=20] (v);
    \draw[thick,dashed,red] (v) to[bend right=20] (u);  
    \path[thick,dashed,red] (u)   edge[out=60,in=120,loop, min distance=5mm] node  {} (u);
    \path (u)+(0.75,-0.5) node {$H^{2rb}_{r,-}$};
  \end{scope}

   \begin{scope}[xshift=7.5cm,yshift=-2.5cm]
    \node[blackvertex] (u) at (0,0) {};
    \node[blackvertex] (v) at (1.5,0) {};
    \draw[thick,blue] (u) to[bend right=20] (v);
    \draw[thick,dashed,red] (v) to[bend right=20] (u); 
    \path[thick,dashed,red] (u)   edge[out=60,in=120,loop, min distance=5mm] node  {} (u);
    \path[thick,dashed,red] (v)   edge[out=60,in=120,loop, min distance=5mm] node  {} (v);
    \path (u)+(0.75,-0.5) node {$H^{2rb}_{r,r}$};
  \end{scope}

  
  

  



  \end{tikzpicture}}
\caption{The twelve $2$-edge-coloured cores of order at most~$2$ considered in this paper.}\label{fig:order2}  
\end{figure}

\subsection{Dichotomy for \VDEL{$H$} for all $H$}
\label{sec:vdclassification}

Graph modification problems for operations vertex-deletion and edge-deletion have been studied extensively. For a graph property $\mathcal P$, we denote by \textsc{Vertex Deletion-${\mathcal{P}}$} the graph modification problem for property $\mathcal P$ and operation vertex-deletion. A property is {\em hereditary} if $\mathcal P(G)$ implies $\mathcal P(H)$ for all induced subgraphs $H$ of $G$.
Lewis and Yannakakis~\cite{LY80} defined a non-trivial property $\mathcal P$ on graphs as a property true for infinitely many graphs and false for infinitely many graphs. These definitions can be extended to edge-coloured graphs. They showed the following general result.

\begin{theorem}[Lewis and Yannakakis \cite{LY80}]
	\label{th:VDisNP}
	The \textsc{Vertex Deletion-${\mathcal{P}}$} problem for non-trivial graph-properties $\mathcal P$ that are
	hereditary is \NP-hard. 
\end{theorem}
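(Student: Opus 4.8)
The plan is to follow the structure of the Lewis--Yannakakis argument, organised around the \emph{minimal forbidden induced subgraphs} of $\mathcal P$ (which determine $\mathcal P$ precisely because $\mathcal P$ is hereditary). First I would record the reductions forced by nontriviality. Since $\mathcal P$ holds for infinitely many graphs it holds for $K_1$ (otherwise no graph with a vertex satisfies it), so every minimal forbidden induced subgraph has at least two vertices. Next, observe that vertex deletion commutes with taking complements, so \textsc{Vertex Deletion-$\mathcal{P}$} is polynomially equivalent to \textsc{Vertex Deletion-$\overline{\mathcal{P}}$}, where $\overline{\mathcal{P}}=\{\overline{G}:G\in\mathcal P\}$ is again hereditary and nontrivial. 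By Ramsey's theorem a nontrivial (hence infinite) hereditary class cannot have both bounded clique number and bounded independence number, so after possibly replacing $\mathcal P$ by $\overline{\mathcal P}$ I may assume that $\mathcal P$ contains every complete graph.

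The first case is $\overline{K_2}\notin\mathcal P$. Then heredity forces $\mathcal P$ to be exactly the class of complete graphs, so \textsc{Vertex Deletion-$\mathcal{P}$} is the problem of deleting as few vertices as possible so as to be left with a clique, i.e.\ the complement of \textsc{Maximum Clique}; this is \NP-hard. (Dually, this subsumes the case $\mathcal P=$ edgeless graphs, for which the problem is literally \textsc{Vertex Cover}.) So from now on assume $\overline{K_2}\in\mathcal P$, so that the smallest minimal forbidden induced subgraph $F$ has order $t\ge 3$ and is not complete.

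In this main case I would reduce from \textsc{Vertex Cover}. Given a graph $G$, I build $G'$ by attaching, to a system of ``representative'' vertices associated with $V(G)$, one freshly created copy of $F$ per edge of $G$: the copy for $e=xy$ uses $t-2$ new private vertices together with a representative of $x$ and a representative of $y$, arranged so that these $t$ vertices induce exactly $F$; all representative vertices and private sets are then padded with large cliques, which are available inside $\mathcal P$ by the previous paragraph, in order to ``insulate'' the gadgets. The intended equivalence is: a set $S$ with $G'-S\in\mathcal P$ must intersect every planted copy of $F$, and by an exchange argument one may assume $S$ only deletes representatives, whence $S$ is a vertex cover of $G$ of the same size; conversely a vertex cover of $G$ yields such an $S$. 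Since $t$ is a constant, $|V(G')|=O(|V(G)|+|E(G)|)$ and the reduction runs in polynomial time.

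The main obstacle is the ``no unintended obstruction'' step: one must guarantee that once every planted copy of $F$ is destroyed, $G'-S$ contains \emph{no} forbidden induced subgraph whatsoever, even though $\mathcal P$ may forbid infinitely many, arbitrarily large graphs besides $F$. This is precisely where heredity together with the abundance of large cliques inside $\mathcal P$ is used to argue that no interaction between gadgets can create a cheap new obstruction, and it is also the point at which a single uniform construction does not suffice: depending on finer features of the minimal forbidden family (for instance whether all small graphs having both an edge and a non-edge lie in $\mathcal P$), the gadget has to be adjusted and, in some subcases, the reduction is taken from a different \NP-hard problem. Carrying out this insulation and the accompanying subcase analysis in full generality is the technical heart of the theorem; the rest is routine bookkeeping.
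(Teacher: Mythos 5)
Your setup (heredity forces $K_1\in\mathcal P$, complementation plus Ramsey to assume all cliques lie in $\mathcal P$, and the degenerate case where $\mathcal P$ is exactly the class of cliques) is sound, and reducing from \textsc{Vertex Cover} is indeed the right target. But the proposal has a genuine gap, and you have located it yourself: the ``no unintended obstruction'' step is not an implementation detail to be deferred, it is the entire theorem, and the mechanism you gesture at (planting copies of a single smallest forbidden induced subgraph $F$, insulating with large cliques, and doing ad hoc subcases) is not the one that makes the proof go through. The difficulty is that a hereditary $\mathcal P$ may have infinitely many minimal forbidden induced subgraphs of unbounded size, so destroying every planted copy of $F$ gives no control over membership in $\mathcal P$; conversely, the graph $G'$ you build (representatives, private vertices, padding cliques) may fail $\mathcal P$ for reasons having nothing to do with $F$ --- for instance, for a hereditary property defined by bounding some global quantity, two gadgets together may already form an obstruction even though each is individually fine. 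Your write-up ends exactly where the proof has to begin.

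The actual Lewis--Yannakakis argument, whose structure is recalled in this paper's proof sketch of Theorem~\ref{cor:VDEL-NPC}, resolves this with a different device: one defines an order $\beta$ on graphs built from connected-component data (of the graph, of the graph with one vertex removed, and with some components removed), and chooses a graph $J$ that is \emph{minimal} in this order among graphs such that the disjoint union of sufficiently many copies of it violates $\mathcal P$. The vertex- and edge-gadgets are induced subgraphs of $J$, the \textsc{Vertex Cover} instance is first blown up into many disjoint copies, and the soundness direction is proved not by cataloguing forbidden subgraphs but by observing that, after deleting the set corresponding to a vertex cover, what remains is $\beta$-smaller than $J$ and therefore satisfies $\mathcal P$ by the minimality of $J$. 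This extremal choice of a single gadget graph is precisely what replaces your ``insulation and subcase analysis,'' and it is the idea missing from your proposal. (Note also that the paper quotes this statement from~\cite{LY80} rather than reproving it; only its adaptation to edge-coloured graphs, Theorem~\ref{cor:VDEL-NPC}, is sketched there.)
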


By modifying the proof of Theorem~\ref{th:VDisNP}, we can prove the two following results.

\begin{theorem}
	\label{cor:VDEL-NPC}
	Let $\mathcal P$ be a non-trivial property of $t$-edge-coloured graphs that is hereditary and true for all empty graphs. Then, \textsc{Vertex Deletion-${\mathcal{P}}$} is \NP-hard.
\end{theorem}

The proof of this theorem follows the proof of Theorem~\ref{th:VDisNP} from~\cite{LY80}. The only difference is that we work with $t$-edge-coloured graphs instead of undirected graphs.

\begin{proof}
	Let $G$ be a $t$-edge-coloured graph. We denote by $CC(G)$ the set of connected components of $G$. These components are also $t$-edge-coloured graphs. For $x$ and $v$ two vertices of $G$, let $R_v(x)$ be the set of vertices connected to $x$ in $G - v$. For any vertex $v \in V(G)$, let $CC_v(G)$ be the set of connected subgraphs of $G$ induced by the sets of vertices  of the form $R_v(x) \cup \set{x}$ for $x \in V(G-v)$. In other words, $CC_v(G)$ is the set of connected components of $G-v$ where we added the vertex $v$.
	In particular, if $v$ is not a cut-vertex, then $CC_v(G) = \set{G}$.
	
	For a connected $t$-edge-coloured graph $G$ and $v \in G$, let $\alpha_v(G) = (n_1,n_2, \dots n_{t})$ such that $n_1 \geq n_2 \geq \dots \geq n_{t}$ and the multi-sets $\set{n_1,\dots,n_t}$ and  $\set{\abs{V(C)}\ :\ C \in CC_v(G)}$ are equal. In other words, $\alpha_v(G)$ is the ordered sequence of the orders of the $t$-edge-coloured graphs in $CC_v(G)$. Let $\alpha(G)$ be the smallest sequence (for the lexicographic order) $\alpha_v(G)$ over all possible vertices $v \in V(G)$.
	
	For a $t$-edge-coloured graph $G$, let $\beta(G) = (\alpha(G_1),\alpha(G_1), \dots \alpha(G_t))$ such that $\alpha(G_1) \geq_L \alpha(G_2) \geq_L \dots \geq_L \alpha(G_t)$ (where $\geq_L$ is the lexicographical order) and $ CC(G) = \set{G_1,\dots,G_t}$. In other words, $\beta(G)$ is the ordered sequence of $\alpha$-sequences of the connected components of~$G$.
	
	Recall that $\mathcal{P}$ is non-trivial. In particular, $\mathcal P$ has counter-examples. For an integer $p$ and a $t$-edge-coloured graph $G$, we denote by $pG$, the $t$-edge-coloured graph composed of $p$ disjoint copies of $G$. Let $J$ be a $t$-edge-coloured graph such there exists some $k \geq 1$ for which $\mathcal P(kJ)$ is false, and which has the minimum $\beta$-sequence among the $t$-edge-coloured graphs verifying this property. Let $k \geq 1$ such that $\mathcal P(kJ)$ is false and $\mathcal P((k-1)J)$ is true.
	Suppose that $\beta(J) = (\alpha(J_1),\dots, \alpha(J_t))$ where $CC(J) = \set{J_1,\dots,J_t}$. Let $x$ be a vertex of $J_1$ for which $\alpha(J_1) = \alpha_x(J_1)$ and let $J^+$ be the connected $t$-edge-coloured graph of $CC_x(J_1)$ with the greatest number of vertices. Since all empty graphs verify $\mathcal P$, $J$ contains at least one edge. This implies that $J_1$ and $J^+$ contain at least one edge. In particular, $J^+$ contains at least two vertices. Let $y$ be a vertex of $J^+$ which is different from $x$.
	Let $J_1'$ (resp. $J'$) be the $t$-edge-coloured graph obtained from $J_1$ (resp. $J$) by removing the vertices of $V(J^+) \setminus \set{x}$. See Figure~\ref{fig:homfpt:J} for an example.
	
	\begin{figure}
		\centering
		\subcaptionbox{Some possible $4$-edge-coloured graph $J$.}[.32\linewidth]{
			\begin{tikzpicture}
				
				\node[blackvertex] (a) at (2,0) {};
				\node[blackvertex] (b) at (2,1) {};
				\draw[blue] (a) -- (b);
				\node (x) at (2.35,0.5) {$J_2$};
				
				\node[blackvertex] (c) at (0,0.5) {};
				\node[blackvertex] (d) at (0,-0.5) {};
				\node[blackvertex,label=left:$y$] (e) at (0-0.5,1.5) {};
				\node[blackvertex] (f) at (0.5,1.5) {};
				\node (x) at (0.25,0.5) {$x$};
				\node (x) at (-0.45,0.5) {$J_1$};
				\draw (c) edge[blue, bend right] (d) edge[-, black, bend left, densely dotted] (d);
				\draw (c) edge[blue] (e) edge[-,  red, dashed] (f);
				\draw (e) edge[teal, dashdotted] (f);
			\end{tikzpicture}
		}\qquad
		\subcaptionbox{The $4$-edge-coloured graph $J^+$.}[.32\linewidth]{
			\begin{tikzpicture}
				
				\node[blackvertex,label=left:$y$] (e) at (0-0.5,1.5) {};
				\node[blackvertex] (f) at (0.5,1.5) {};
				\node[blackvertex] (c) at (0,0.5) {};
				\node (x) at (0.25,0.5) {$x$};
				\draw (c) edge[blue] (e) edge[-, red,dashed] (f);
				\draw (e) edge[teal, dashdotted] (f);
			\end{tikzpicture}
		}\qquad
		\subcaptionbox{The $4$-edge-coloured graph $J'$.}[.32\linewidth]{
			\begin{tikzpicture}
				
				\node[blackvertex] (a) at (2,0-.5) {};
				\node[blackvertex] (b) at (2,.5) {};
				\draw[blue] (a) -- (b);
				\node (x) at (2.35,0) {$J_2$};
				
				\node[blackvertex] (c) at (0,0.5) {};
				\node[blackvertex] (d) at (0,-0.5) {};
				\node (x) at (0.25,0.5) {$x$};
				\node (x) at (-0.45,0) {$J'_1$};
				\draw (c) edge[blue, bend right] (d) edge[-, black, bend left, densely dotted] (d);
			\end{tikzpicture}
		}
		\caption{An example of a $4$-edge-coloured graph $J$ and its induced subgraphs $J_1$, $J_2$, $J^+$, $J'_1$ and $J'$.}
		\label{fig:homfpt:J}
	\end{figure}
	
	Each induced subgraph of $J$ that we defined will be useful to show that \textsc{Vertex Deletion-${\mathcal{P}}$} is \NP-hard. 
	We reduce \textsc{Vertex Cover} to \textsc{Vertex Deletion-${\mathcal{P}}$}. (Note that it may be complicated to find the $t$-edge-coloured graph $J$, but this is a constant graph that depends only on $\mathcal{P}$ and this fact only makes the reduction non-constructive.) Let $(G,\ell)$ be an instance of \textsc{Vertex Cover} where $G$ is an undirected graph of order $p$ and $\ell$ is an integer.
	
	We construct the $t$-edge-coloured graph $H$ from $G$ as follows. For each vertex $v \in V(G)$, we add a copy $J'_v$ of $J'$ to $H$. 
	For each edge $uv \in E(G)$, we add a copy $J^+_{uv}$ of $J^+$ to $H$. We identify the copy $x_u$ (resp. $x_v$) of $x$ in $J'_u$ (resp. $J'_v$)  with the copy $x_{uv}$ (resp. $y_{uv}$) of $x$ (resp. $y$) in $J^+_{uv}$.
	This concludes the construction of $H$. See Figure~\ref{fig:homfpt:yannakakisEdge} for an example. We construct the $t$-edge-coloured graph $H'$ by taking $pk$ disjoint copies of $H$.
	
	\begin{figure}
		\centering
		\begin{tikzpicture}
			
			\node[blackvertex] (a) at (4,0) {};
			\node[blackvertex] (b) at (4,1) {};
			\draw[blue] (a) -- (b);
			\node (x) at (4.35,0.5) {$J_2$};
			
			\node[blackvertex] (a) at (-2,0) {};
			\node[blackvertex] (b) at (-2,1) {};
			\draw[blue] (a) -- (b);
			\node (x) at (-2.35,0.5) {$J_2$};
			
			\node[blackvertex] (c1) at (0,00) {};
			\node[blackvertex] (d1) at (0,1) {};
			\draw (c1) edge[blue, bend right] (d1) edge[-, black, densely dotted, bend left] (d1);
			\node (x) at (-0.45,0.5) {$J'_1$};
			\node (x) at (-0.5,0) {$x_u$};
			
			\node[blackvertex] (c2) at (2,00) {};
			\node[blackvertex] (d2) at (2,1) {};
			\draw (c2) edge[blue, bend right] (d2) edge[-, black, densely dotted, bend left] (d2);
			\node (x) at (2.45,0.5) {$J'_1$};
			\node (x) at (2.5,0) {$x_v$};
			
			\node[blackvertex] (f) at (1,-0.5) {};
			\node (x) at (1,0.5) {$J^*_{uv}$};
			\draw (c1) edge[blue] (c2) edge[-, red, dashed] (f);
			\draw (c2) edge[teal, dashdotted] (f);
			
			\draw (1.5,1.25) -- (1.5,-0.25) -- (5,-0.25) -- (5,1.25) -- cycle;
			\node (x) at (5.5,0.5) {$J'_v$};
			\draw (0.5,1.25) -- (0.5,-0.25) -- (-3,-0.25) -- (-3,1.25) -- cycle;
			\node (x) at (-3.5,0.5) {$J'_u$};
			
		\end{tikzpicture}
		\caption{An example of the graph $H$ when $J$ is the $3$-edge-coloured graph of Figure~\ref{fig:homfpt:J} and $G$ is just an edge $uv$. Here, we chose to identify $x_u$ with $x_{uv}$ and $x_v$ with $y_{uv}$. Note that if no vertex is removed from $H$, then $H$ contains $J$ as an induced subgraph.}
		\label{fig:homfpt:yannakakisEdge}
	\end{figure}
	
	We claim that $(G,\ell)$ is a positive instance of \textsc{Vertex Cover} if and only if $(H',pk\ell)$ is a positive instance of \textsc{Vertex Deletion-${\mathcal{P}}$}.
	
	Suppose that there is a subset $S$ of vertices of $G$ of size at most $\ell$ that is a vertex cover of $G$.
	We construct  $S' \subseteq V(H')$ as follows. For every copy of $H$ in $H'$ and every vertex $u \in S$, we add the copy of the vertex $x_u$ of $J'_u$ to $S'$. Note that $\abs{S'} \leq pk\ell$. We claim that $H' - S'$ verifies $\mathcal P$. 
	Let $\mathcal J$ be the set of $t$-edge-coloured graphs that can be constructed as follows. Take a copy of $J'_1$ and at most $\Delta(G)$ copies of $J^+$. For each copy of $J^+$, delete one of $x$ or $y$ and identify the other vertex with the copy $x'$ of $x$ in the copy of $J'_1$. The set $\mathcal J$ contains at most $3^{\Delta(G)}$ $t$-edge-coloured graphs, $\mathcal J$ contains all possible maximal connected induced subgraphs of $H$ connected to a vertex $x_u$ when every $x_v$ for $v \in N(u)$ has been removed in $H$.
	
	A connected component $C$ of $H' - S'$ can be of four types.
	\begin{enumerate}
		\item \emph{The connected component $C$ belongs to $\set{J_2,\dots, J_t}$.} 
		\item \emph{The connected component $C$ belongs to $\mathcal J$.} 
		\item \emph{The connected component $C$ is isomorphic to a connected induced subgraph of $J'_1$ where the vertex $x$ has been removed.} 
		\item \emph{The connected component $C$ is isomorphic to a connected induced subgraph of $J^+$ where the vertices $x$ and $y$ have been removed.}
	\end{enumerate}
	
	Let $J^*$ be the $t$-edge-coloured graph composed of disjoint copies of the vertices of $\mathcal J$ and disjoint copies of $J_2,\ldots,J_{t-1}$ and $J_t$. Note that every connected component of $H' - S'$ is an induced subgraph of $J^*$. Let $C \in \mathcal J$, note that $\alpha(C) \leq \alpha_{x'}(C)$ where $x'$ is the copy of $x$ in $J'_1$. Note that $CC_{x'}(C) = CC_{x}(J') \cup X$ where $X$ is the set corresponding to the copies of $J^+$ in $C$ with one of $x$ or $y$ removed. The connected multigraphs of $X$ have order $\abs{V(J^+)} -1$, hence $\alpha_{x'}(C) <_L \alpha_x(J_1)= \alpha(J_1)$.
	Note that $\beta(J^*) <_L \beta(J)$ since for every $C \in \mathcal J$, $\alpha(C) <_L \alpha(J_1)$.
	
	By minimality of $J$, any number of disjoint copies of $J^*$ must verify $\mathcal P$, hence $H' - S'$ verifies $\mathcal P$ and $(H',pk\ell)$ is a positive instance of \textsc{Vertex Deletion-${\mathcal{P}}$}.
	
	Suppose that there is a subset $S'$ of vertices of $H'$ of size at most $pk\ell$ such that $\mathcal P(H'-S')$ holds.
	Note that $H'-S'$ can contain at most $k-1$ copies of the $t$-edge-coloured graph $J$ by definition of $J$. In particular $H'$ has at least $pk - (k-1)$ copies of $H$ for which after removing the vertices of $S'$, the $t$-edge-coloured graph does not contain a copy of $J$. 
	
	Suppose that for one of the copies $H_0$ of $H$, $\abs{V(H_0) \cap S'} \leq \ell$. In this case, we construct $S \subseteq V(G)$ as follows. If $S' \cap V(J'_u) \neq \varnothing$, then add $u$ to $S$. If $S' \cap (V(J^+_{uv}) \setminus \set{x,y}) \neq \varnothing$, then add arbitrarily one of $u$ or $v$ to $S$. Note that $\abs{S} \leq \ell$. Suppose that there is an edge $uv \in E(G)$, such that $u,v \notin S$. Our copy of $H$ contains $J'_u$, $J'_v$ and $J^+_{uv}$ and these  $t$-edge-coloured graphs do not contain vertices from $S$. The vertex $x_{uv}$ has been identified with one of $x_u$ or $x_v$, say $x_u$. The  $t$-edge-coloured graph composed of $J'_u$ and $J^+_{uv}$ with $x_u$ and $x_{uv}$ identified is exactly the $t$-edge-coloured graph $J$. Hence if $H - S'$ does not contain $J$, the set $S$ is a vertex cover of $G$ of size at most $\ell$. 
l	
	Suppose, by contradiction, that for every copy of $H$ either $H - S'$ contains $J$ or verifies $\abs{V(H_0) \cap S'} \geq \ell +1$. In this case, $S'$ has at least $(pk - (k-1)) (\ell +1)$ vertices. Moreover, as $\ell < p$ (otherwise the instance of \textsc{Vertex Cover} is trivial),  $(pk - (k-1)) (\ell +1) \geq pk\ell + \ell +1 + k(p - (\ell -1)) > pk\ell$, a contradiction.
	
	Hence $G$ has a vertex cover of size at most $\ell$.
\end{proof}

For a $t$-edge-coloured graph, the only case where the property of mapping to $H$ is trivial (in this case, always true) is when $H$ has a vertex with all $t$ kinds of loops attached (in which case the core of $H$ is the subgraph induced by that vertex). Thus we obtain the following dichotomy.

\begin{corollary}
\label{cor:VDEL-classification}
Let $H$ be a $t$-edge-coloured graph. \VDEL{$H$} is in \Ptime if $H$ contains a vertex having a loop of each edge-colour, and \NP-complete otherwise.
\end{corollary}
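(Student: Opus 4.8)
The plan is to deduce Corollary~\ref{cor:VDEL-classification} from Theorem~\ref{cor:VDEL-NPC} by splitting into the trivial case and the nontrivial case. First I would verify that the only way the property $\mathcal P(H)$ of admitting a homomorphism to $H$ can be trivial is when $H$ has a vertex $v$ carrying all $t$ loop-colours: in that case every $t$-edge-coloured graph maps to $v$ (send everything there, all edges and loops become loops of the appropriate colour at $v$), so $\mathcal P(H)$ holds for all inputs and \VDEL{$H$} is solved by the all-yes algorithm, hence in \Ptime. Conversely, if no vertex of $H$ has all $t$ loops, I would exhibit infinitely many graphs not mapping to $H$ (so $\mathcal P(H)$ is nontrivial): for instance, since the core of $H$ has no universal-loop vertex, there is a colour $i$ such that $H$ has no $i$-coloured loop at some relevant vertex; more robustly, one can take large monochromatic cliques $K_n^{(i)}$ in a single colour $i$ that is ``deficient'' in $H$ — if $H$ had an $i$-loop at a vertex $w$ one would need to argue separately, so the cleaner route is: $\mathcal P(H)$ is nontrivial whenever $H$ is not the single universal-loop vertex as a core, because then there is at least one $t$-edge-coloured graph $G_0$ with $G_0 \not\ecto H$, and then all disjoint unions of copies together with arbitrarily many isolated vertices witness that the property is false for infinitely many graphs and (trivially) true for infinitely many (e.g. all independent sets, or indeed any single vertex, map to $H$ as long as $H$ is nonempty — and if $H$ is the empty graph the problem is degenerate and we may exclude it or note $\mathcal P(H)$ holds only for the empty graph).

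Next I would check the remaining hypotheses of Theorem~\ref{cor:VDEL-NPC} in the nontrivial case. Heredity under induced subgraphs is immediate: if $G \ecto H$ and $G'$ is an induced subgraph of $G$, then the restriction of the homomorphism witnesses $G' \ecto H$. For the ``true for all independent sets'' requirement I would note that an independent set has no edges, so the constant map to any single vertex of $H$ is a homomorphism (again assuming $H$ is nonempty, which we may). The one genuine subtlety is the loopless hypothesis in Theorem~\ref{cor:VDEL-NPC}: that theorem is stated for properties of \emph{loopless} edge-coloured graphs, whereas an input $G$ to \VDEL{$H$} is allowed loops and multiedges in general. Here I would observe that for the hardness direction it suffices to restrict the input to loopless simple $t$-edge-coloured graphs (the reduction from \textsc{Vertex Cover} produces such graphs), so \NP-hardness of \VDEL{$H$} restricted to loopless inputs already gives \NP-hardness of the general problem; and membership in \NP is clear since a witness is a $k$-subset $S$ together with a homomorphism $G - S \ecto H$, both checkable in polynomial time. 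Combining, \VDEL{$H$} is \NP-complete in the nontrivial case.

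The main obstacle I anticipate is pinning down precisely the boundary between ``trivial'' and ``nontrivial'' and making sure the exceptional degenerate cases are handled: namely the empty target $H$ (no vertices), and targets where $H$ has loops but is still not the universal-loop vertex as a core — one must be careful that the claim ``$\mathcal P(H)$ is true for infinitely many graphs'' really holds (it does, via independent sets, provided $H$ has at least one vertex) and that ``false for infinitely many graphs'' holds, which requires producing at least one non-mapping graph and then padding with components. Producing that one non-mapping graph is where a small argument is needed: if the core of $H$ is not a single universal-loop vertex, then either $H$ is loopless-core (in which case, e.g., a suitable monochromatic odd structure or a large clique in an appropriate colour fails to map, by the uncoloured Hell–Ne\v{s}et\v{r}il-type obstructions applied colour by colour), or $H$ has some loops but misses colour $i$ at every vertex reachable consistently — in all sub-cases one can name an explicit small obstruction. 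I would present this as a short case analysis, keeping it light since the substance of the hardness is entirely carried by Theorem~\ref{cor:VDEL-NPC}, and the corollary is essentially a bookkeeping statement identifying which $H$ make $\mathcal P(H)$ nontrivial.
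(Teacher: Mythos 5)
Your overall strategy is exactly the paper's: deduce the corollary from Theorem~\ref{cor:VDEL-NPC} by checking heredity, truth on independent sets, \NP-membership, and the observation that hardness on loopless inputs suffices; and the trivial direction (a vertex with all $t$ loops makes every graph map to $H$) is handled correctly. The one substantive step, however, is left open: you never actually produce, for an arbitrary $H$ with no vertex carrying all $t$ kinds of loops, a concrete graph that fails to map to $H$. You flag this yourself and propose a colour-by-colour case analysis via monochromatic cliques and ``Hell--Ne\v{s}et\v{r}il-type obstructions applied colour by colour,'' but that route does not work in general: if some vertex of $H$ carries an $i$-coloured loop, then every graph whose edges all have colour $i$ maps to $H$ by the constant map, so no single-colour obstruction exists for colour $i$; since different vertices of $H$ may be missing loops of different colours, any obstruction must genuinely mix colours, and your sketch never says how to build one.

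The paper closes this gap with a single uniform construction: take the complete graph $K_{|H|+1}$ in which every pair of vertices is joined by edges of all $t$ colours. Under any purported homomorphism to $H$, the pigeonhole principle forces two of its $|H|+1$ vertices to share an image $w$, and then $w$ would have to carry a loop of every one of the $t$ colours, contradicting the hypothesis. Padding this one obstruction with isolated vertices (or taking disjoint unions) gives infinitely many loopless NO-instances, so the property is nontrivial and Theorem~\ref{cor:VDEL-NPC} applies. This three-line argument replaces your entire anticipated case analysis and works for every such $H$ at once; everything else in your proposal matches the paper, so you should simply insert this construction where you currently write that ``one can name an explicit small obstruction.''
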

\begin{proof}
For every edge-coloured graph $H$, \VDEL{$H$} is in \NP.
\textsc{$H$-colouring} is a hereditary property and is verified by all independent sets, thus if it has infinitely many NO-instances (on loopless $t$-edge-coloured graphs); it is nontrivial, and thus \NP-hard by Theorem~\ref{cor:VDEL-NPC}. Let us see when this is the case.

We can observe that the problem is actually trivial if $H$ contains a vertex with all $t$-coloured loops, indeed every $t$-edge-coloured graph can be mapped to this vertex (in this case, we return YES). Moreover, if not, then the complete graph $K_{|H|+1}$ with all $t$-coloured edges between each pair of vertices does not map to $H$. Indeed by the pigeonhole principle, two vertices $u$ and $v$ of our input $t$-edge-coloured graph must have the same image vertex $w$ in $H$. As there is an edge coloured $i$ between $u$ and $v$, there must be a loop coloured $i$ on $w$. Thus $w$ should have all $t$-coloured loops, a contradiction. Thus, in all such cases, the property is nontrivial on loopless $t$-edge-coloured graphs and hence the problem is \NP-complete.
\end{proof}

\subsection{Dichotomy for \EDEL{$H$} when $H$ has order~$2$}
\label{sec:edclassification}

No analogue of Theorem~\ref{th:VDisNP} for operation edge-deletion exists nor is expected to exist~\cite{Y81}. We thus restrict our attention to the case of edge-coloured graphs $H$ of order at most~$2$. For this case we classify the complexity of \EDEL{$H$}. Since multiple edges of the same colour are irrelevant, if $H$ has order $2$, for each edge-colour there are three possible edges.

\begin{theorem}\label{thm:EDEL-order2-poly/NPc}
Let $H$ be an edge-coloured core of order at most~$2$. If each colour class of the edges of $H$ contains only loops or contains all three possible edges, then \EDEL{$H$} is in \Ptime; otherwise it is \NP-complete.
\end{theorem}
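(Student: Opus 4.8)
The plan is to split into the polynomial cases and the hardness cases by an explicit case analysis over the twelve cores of Figure~\ref{fig:order2} (and the trivial order-$1$ cores). First I would dispose of the \Ptime direction. If every colour class of $H$ consists only of loops, then $H$ has no edge between distinct vertices in any colour; then $G \ecto H$ if and only if every non-loop edge of $G$ sits between two vertices that receive the "right kind" of loop, and since there are no cross edges the homomorphism decomposes completely across connected components and in fact across edges — an edge $uv$ of colour $i$ is fine precisely when its colour matches a loop available at the common image. So $G - S \ecto H$ reduces, after deleting $S$, to requiring each remaining edge to be "consistent"; the set of edges one must delete is exactly the set of "bad" edges, which can be computed greedily, so \textsc{$H$-Colouring} is edge-hereditary-trivial and \EDEL{$H$} is in \Ptime (indeed $k=0$ suffices to decide, and the minimum deletion set is found directly). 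For the complementary polynomial case — each colour either is loops-only or contains all three edges between the two vertices and (necessarily, for a core) the loops — the relevant examples are $H^{2rb}_{-,-}$, $H^1_{rb}$, $H^1_b$, $H^1_-$, etc.; here I would argue that $G \ecto H$ is equivalent to a local/2-SAT-type condition that is in fact monotone, namely $G-S \ecto H$ iff the appropriate "colour-$i$ subgraph" of $G - S$ is bipartite-or-trivial in the uncoloured sense, so that \EDEL{$H$} becomes \textsc{Edge Bipartization} (polynomial? — no, NP-complete) — wait: the point is the opposite, that when a colour contains \emph{all three} edges that colour imposes no constraint at all, so the only constraints come from loops-only colours, handled as above. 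So in every polynomial case the problem degenerates to deleting a fixed, efficiently computable set of locally-bad edges.

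For the \NP-complete direction I would identify, for each core $H$ not of the above form, a colour $c$ that contains at least one edge between the two distinct vertices $0,1$ but \emph{not} all three possible $c$-edges — so $c$ is missing at least one of the loop at $0$, the loop at $1$, or (if the $01$-edge in colour $c$ is absent) the $01$-edge itself, but since we assumed a $c$-edge between $0$ and $1$ is present, the missing one is a loop, say the $c$-loop at $0$. I would then reduce from \textsc{Edge Bipartization} (for the single-colour sub-case this is literally $H = K_2$, i.e.\ $H^{2b}_{-,-}$, where \EDEL{$H$} \emph{is} \textsc{Edge Bipartization}, known NP-complete~\cite{Y81}) by taking an input graph $G_0$ for \textsc{Edge Bipartization}, colouring all its edges with colour $c$, and — if $H$ has other colours/loops — attaching small gadgets that force any homomorphic image to land on $\{0,1\}$ using only colour-$c$ edges, so that a colour-$c$ homomorphism $G \ecto H$ is exactly a proper $2$-colouring of $G$ in the uncoloured sense (the absent $c$-loop at $0$, and by symmetry or by gadget at $1$, forbids monochromatic $c$-edges). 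Deleting $k$ edges to restore such a homomorphism is then exactly deleting $k$ edges to make $G_0$ bipartite. The gadget design is where one must be slightly careful: one has to ensure the gadgets themselves are not cheaper to "break" than the edges of $G_0$, which is achieved by padding (taking many parallel copies of each gadget edge, as is standard).

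The main obstacle I expect is not any single reduction but the \emph{bookkeeping of the case analysis}: there are twelve cores, several with mixed colour behaviour (one colour loops-only, the other partial), and one must check that "partial in some colour with a cross-edge present" is preserved under taking the core and that the gadget forcing the image onto $\{0,1\}$ via the right colour always exists — in particular handling cores like $H^{2b}_{r,-}$ or $H^{2rb}_{r,-}$ where the blue colour \emph{does} contain all three edges (so blue imposes nothing) but there is an asymmetric red loop, versus $H^{2b}_{r,b}$ where blue contains all three edges and red contains both loops but not the red $01$-edge (loops-only in red!) — so $H^{2b}_{r,b}$ actually falls in the \Ptime side, and one must verify the statement's dichotomy line cuts exactly there. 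Concretely I would tabulate, for each of the twelve cores, whether \emph{each} colour is loops-only or all-three-edges; the cores passing this test are the \Ptime ones and I verify the degeneracy argument above applies; for each failing core I exhibit the colour $c$ with a cross-edge but a missing loop and run the \textsc{Edge Bipartization} reduction with the appropriate forcing gadget. Membership in \NP is immediate since \textsc{$H$-Colouring} is in \Ptime for all order-$\le 2$ targets~\cite{BBM05,BDHQ05}.
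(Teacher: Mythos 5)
There are several genuine gaps here. First, your \Ptime argument for loops-only targets is incorrect: the problem does not decompose edge by edge, and it is certainly not decided at $k=0$ by greedily deleting ``bad'' edges. Take $H=H^{2-}_{r,b}$ (a red loop at one vertex, a blue loop at the other, no cross edge). Every single edge of a $2$-edge-coloured $G$ is individually ``fine'', yet $G\ecto H$ fails as soon as a red edge and a blue edge share an endpoint, and the minimum number of deletions is a nontrivial optimisation (a star with five red and five blue edges needs five deletions). The paper obtains polynomiality by building the conflict graph whose vertices are the edges of $G$, adjacent when they share an endpoint and differ in colour, observing it is bipartite, and solving \textsc{Vertex Cover} there; it also needs a separate reduction to absorb a third loops-only colour having loops at both vertices. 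Your ``the set of edges one must delete is exactly the set of bad edges'' step would fail.

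Second, you misclassify a key core: $H^{2b}_{r,b}$ has blue $=\{$cross edge, loop at $1\}$ and red $=\{$loop at $0\}$, so blue contains a non-loop but not all three edges --- this core lies on the \NP-complete side, and is in fact one of the two cores for which the paper gives an explicit hardness proof, not a \Ptime case as you assert. Third, your hardness strategy (reduce from \textsc{Edge Bipartization} by making a colour-$c$ homomorphism ``exactly a proper $2$-colouring'' via forcing gadgets) breaks precisely when colour $c$ contains a loop, say at vertex $1$, in addition to the cross edge: monochromatic $c$-edges are then permitted (both endpoints map to $1$), so no gadget confining images to $\{0,1\}$ can turn the condition into properness of a $2$-colouring. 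The paper instead reduces from \textsc{Vertex Cover}, exploiting the explicit homomorphism dualities of $H^{2b}_{r,b}$ (no homomorphic image of a red--blue--red $3$-edge path) and of $H^{2rb}_{r,b}$ (no odd figure eight), and then transfers these two reductions to every remaining hard core by restricting attention to two of its colours. Only the sub-case where colour $c$ consists solely of the cross edge reduces to \textsc{Edge Bipartization} in the direct way you describe.
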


We separate the proof of this theorem into several lemmas.

\begin{lemma}
Let $H$ be an edge-coloured core of order at most~$2$. If each colour class of the edges of $H$ contains only loops or contains all three possible edges, then \EDEL{$H$} is in \Ptime.
\end{lemma}

\begin{proof}
First note that if colour $i$ has all three possible edges in $H$, we can simply ignore this colour by removing it from $H$ and $G$ without decreasing the parameter, as it does not provide any constraint on the homomorphisms.
  
  We can therefore suppose that $H$ contains only loops. If two colours induce the same subgraph of $H$, then we can identify these two colours in both $G$ and $H$ as they give the same constraints.
  
  If $G$ has colours that $H$ does not have, then remove each edge with this colour and decrease the parameter for each removed edge. If it goes below zero then we reject.
  
  We can now assume that $H$ has only loops and $G$ has the same colours as $H$. We are left with only a few cases, as $H$ is a core (there is no vertex whose set of loops is included in the set of loops of the other).
  \begin{itemize}
  \item $H$ has a single loop. Then, $G \ecto H$ as $G$ has the same colours as $H$.
  \item $H$ contains two non-incident loops with different colours and two non-incident loops of a third colour.  Up to symmetry, suppose that $H$ has one blue loop and one green loop on the first vertex and has one red loop and one green loop on the second vertex. We will reduce to the problem where we have removed the green loops. Let $p$ be the number of green edges of $G$. We construct $G'$ from $G$ by replacing each green edge by a blue edge and a red edge (we can end up with multiple blue or red edges that way). We claim that \EDEL{$H$} with parameter $k$ and input $G$ is true if and only if \EDEL{$H^{2-}_{r,b}$} with parameter $k +p$ on input $G'$ is true.

  If the first problem has a solution $S$, then remove the corresponding edges from $G'$ (if the corresponding edge of $G$ is green remove the two new edges in $G'$). Each vertex of $G-S$ is set to one component, in particular each green edge is set to a vertex with a blue edge or a red edge. If a green edge $uv$ of $G$ is sent to the first vertex, we remove the edge of $G'$ corresponding to $uv$ which is red. We can check that after removing those edges, $G'$ admits a homomorphism to  $H^{2-}_{r,b}$. We removed at most $k$ edges in the first step plus the number of green edges in $S$ and removed one edge for each green edge left in the second step. Thus, we removed less than $k +p$ edges in $G'$.
  
  If the second problem has a solution $S$, then remove from $G$ all blue and red edges of $S$. Remove the green edges of $G$ only if both were removed in $G'$. Note that $S$ contains at least one edge in $G'$ for each green edge of $G$. Thus we removed less than $k$ edges in $G$. Moreover, $G \ecto H$ by taking the same homomorphism as in $G'$. Indeed, the blue and red edges are sent to one of the two loops while each green connected component is sent to one vertex.

  Using this method we can reduce the problem to \EDEL{$H^{2-}_{r,b}$}, which is our last case.
  
  \item $H$ contains two non-incident loops with different colours; then $H = H^{2-}_{r,b}$. Indeed if there were any other kind of loop, then we would be in the previous case or we could identify two colours. Note that a $2$-edge-coloured graph maps to $H^{2-}_{r,b}$ if and only if it has no red edge incident to a blue edge. Thus,
    solving \EDEL{$H^{2-}_{r,b}$} amounts to splitting $G$ into disconnecting red and blue connected components. This can be done by constructing the following bipartite graph: put a vertex for each edge of $G$; two vertices are adjacent if the corresponding edges in $G$ are adjacent and of different colours. Solving \EDEL{$H^{2-}_{r,b}$} is the same as solving \textsc{Vertex Cover} on this bipartite graph, which is in \Ptime.
  \end{itemize}
  There is no other case as otherwise the set of loops of one vertex would be included in the set of loops of the other. 
\end{proof}

The \NP-completeness proofs are by reductions from \textsc{Vertex Cover}, based on vertex- and edge-gadgets constructed using obstructions to the corresponding homomorphisms from~\cite{BBM05} presented in Section~\ref{sec:duality}.

We start with proving the \NP-hardness of two special cases, and then we will show that we can always reduce the problem from these two cases.

\begin{lemma}\label{lemma11}
The problem \EDEL{$H^{2b}_{r,b}$} is \NP-hard.
\end{lemma}

\begin{proof}
 We reduce from \textsc{Vertex Cover}. Given an input graph
  $G$ of \textsc{Vertex Cover}, we construct a $2$-edge-coloured graph
  $G'$ from $G$ as follows. Take $G$ and colour all edges blue, then
  add a pending red edge $vv'$ to each vertex $v$ of $G$ (see
  Figure~\ref{fig:reduc-H^{2b}_{r,b}}).
  \begin{figure}[!htpb]
    \centering
    \scalebox{0.8}{\begin{tikzpicture}[every loop/.style={},scale=1.2]
      \node[blackvertex,label={270:$u$}] (u) at (2,0) {};
      \node[blackvertex,label={270:$v$}] (v) at (3,0) {};
      \node[blackvertex,label={270:$w$}] (w) at (4,0) {};
      \node[blackvertex,label={0:$u'$}] (u') at (2,1) {};
      \node[blackvertex,label={0:$v'$}] (v') at (3,1) {};
      \node[blackvertex,label={0:$w'$}] (w') at (4,1) {};

      \draw[thick,dashed,red] (u') -- (u);
      \draw[thick,dashed,red] (v') -- (v);
      \draw[thick,dashed,red] (w') -- (w);
      
      \draw[thick,blue] (u) -- (v);
      \draw[thick,blue] (u) to[bend right=40] (w);
      
      \path (0,0) node {$G$};
      \path (1,0) node {\ldots};
      
      \draw[line width=1.1pt] (2,0) ellipse (2.5cm and 0.75cm);  
    \end{tikzpicture}}
    \caption{Reduction from \textsc{Vertex Cover} to  \EDEL{$H^{2b}_{r,b}$}.}\label{fig:reduc-H^{2b}_{r,b}}
  \end{figure}
  By Lemma~\ref{lemma:duality-H_{r,b}^{2b}}, a $2$-edge-coloured graph maps to $H^{2b}_{r,b}$ if and
  only if it does not contain a homomorphic image of a red-blue-red
  $3$-edge-path~\cite{BBM05}.

  Assume that $G$ has a vertex cover $C$ of size at most $k$. When
  removing these vertices in $G'$, the resulting graph is a collection
  of independent red edges and thus maps to $H^{2b}_{r,b}$.
  
  Conversely, assume that we have a set $S$ of $k$ 
   edges of $G'$ such that $(G'-S)\ecto H^{2b}_{r,b}$. In
  particular, for every blue edge $uv$ of $G$, we must have one of
   $uu', uv$ or $vv'$ in
  $S$. Thus we can obtain a vertex cover of $G$ of size $k$ from $S$:
  for a vertex $v$, if  $vv'$ belongs to $S$, we
  add $v$ to that vertex cover. If $uv\in S$, we add
  randomly $u$ or $v$ to the vertex cover.

  We thus have a polynomial-time reduction from \textsc{Vertex Cover}
  to \EDEL{$H^{2b}_{r,b}$}. Therefore this
  problem is \NP-hard.
\end{proof}

\begin{lemma}\label{lemma12}
The problem \EDEL{$H^{2rb}_{r,b}$} is \NP-hard.
\end{lemma}

\begin{proof}
We again reduce from \textsc{Vertex Cover}. For an input graph $G$ of
  \textsc{Vertex Cover}, we construct a $2$-edge-coloured graph $G'$
  from $G$ as follows. We start with a red copy of $G$, then we add a
  pending blue edge $vv'$ for each $v\in G$. Finally, for each edge
  $uv\in G$, we create three new vertices $x_{uv},y_{uv},z_{uv}$ such
  that $u'x_{uv},v'x_{uv},y_{uv}z_{uv}$ are red and
  $x_{uv}y_{uv},x_{uv}z_{uv}$ are blue (see
  Figure~\ref{fig:reduc-VD-H^{2rb}_{r,b}}).

  We then recall Lemma~\ref{lemma:duality-H_{r,b}^{2rb}} proved in~\cite{BBM05}, stating that a
  $2$-edge-coloured graph maps to $H^{2rb}_{r,b}$ if and only if it
  does not contain an \emph{alternating odd figure eight}, that is, an alternating
  closed walk $v_0$, $v_1$,  \dots, $v_{2j}$, $v_0$, $v_{2j+2}$, \dots, $v_{2p-1}$, $v_0$. Note that our construction creates such
  a pattern for each edge of $G$. 

  \begin{figure}[!htpb]
    \centering
        \scalebox{0.8}{\begin{tikzpicture}[every loop/.style={},scale=1.2]

      \node[blackvertex,label={270:$u$}] (u) at (-1,0) {};
      \node[blackvertex,label={270:$v$}] (v) at (0,0) {};
      \node[blackvertex,label={270:$w$}] (w) at (1,-0.5) {};
      \node[blackvertex,label={270:$t$}] (t) at (1,0.5) {};
      \path (0.5,-1.5) node {$G$};

      \draw[thick] (u) -- (v) -- (w);
      \draw[thick] (v) -- (t);
      
      \draw[decorate, decoration={snake},->] (2,0) -- (3,0);

      \begin{scope}
        
        \node[blackvertex,label={270:$u$}] (u) at (4,0) {};
        \node[blackvertex,label={180:$u'$}] (u') at (4,1) {};
        \node[blackvertex,label={180:$x_{uv}$}] (xuv) at (4.5,1.5) {};
        \node[blackvertex,label={90:$y_{uv}$}] (yuv) at (4.2,2) {};
        \node[blackvertex,label={90:$z_{uv}$}] (zuv) at (4.8,2) {};
        \node[blackvertex,label={270:$v$}] (v) at (5,0) {};
        \node[blackvertex,label={180:$v'$}] (v') at (5,1) {};
        \node[blackvertex,label={0:$x_{vt}$}] (xvt) at (5.5,2) {};
        \node[blackvertex,label={90:$y_{vt}$}] (yvt) at (5.2,2.5) {};
        \node[blackvertex,label={90:$z_{vt}$}] (zvt) at (5.8,2.5) {};
        \node[blackvertex,label={270:$w$}] (w) at (6,-0.5) {};
        \node[blackvertex,label={0:$w'$}] (w') at (7,0) {};
        \node[blackvertex,label={270:$t$}] (t) at (6,0.5) {};
        \node[blackvertex,label={0:$t'$}] (t') at (6,1.5) {};
        \node[blackvertex,label={0:$x_{vw}$}] (xvw) at (7,1) {};
        \node[blackvertex,label={90:$y_{vw}$}] (yvw) at (7,1.5) {};
        \node[blackvertex,label={90:$z_{vw}$}] (zvw) at (7.5,1.5) {};

        \path (5,-1.5) node {$G'$};

        \draw[thick,blue] (u) -- (u');
        \draw[thick,blue] (v) -- (v');
        \draw[thick,blue] (w) -- (w');
        \draw[thick,blue] (t) -- (t');
        \draw[thick,red,dashed] (u) -- (v) -- (w);
        \draw[thick,red,dashed] (v) -- (t);
        \draw[thick,red,dashed] (v') -- (xuv) -- (u');
        \draw[thick,red,dashed] (yuv) -- (zuv);
        \draw[thick,blue] (yuv) -- (xuv) -- (zuv);
        \draw[thick,red,dashed] (v') -- (xvt) -- (t');
        \draw[thick,red,dashed] (yvt) -- (zvt);
        \draw[thick,blue] (yvt) -- (xvt) -- (zvt);
        \draw[thick,red,dashed] (v') -- (xvw) -- (w');
        \draw[thick,red,dashed] (yvw) -- (zvw);
        \draw[thick,blue] (yvw) -- (xvw) -- (zvw);
      \end{scope}

    \end{tikzpicture}}
    \caption{Reduction from \textsc{Vertex Cover} to \EDEL{$H^{2rb}_{r,b}$}}
    \label{fig:reduc-VD-H^{2rb}_{r,b}}
  \end{figure}

  Assume that $G$ has a vertex cover $C$ of size at most $k$. Then for
  each $v\in C$, we delete $vv'$ from $G'$. We prove that
  the resulting graph $G''$ contains no alternating odd figure eight. First
  observe that in the graph obtained from $G'$ by removing all edges
  from $G$, all the alternating walks have length at most $7$, hence
  it contains no odd figure eight. Thus, if $G''$ contains an alternating  odd
  figure eight, then it uses an edge $uv$ from $G$. Since $C$ is a
  vertex cover, either $uu'$ or $vv'$ is not
  present in $G''$. Then, either $u$ or $v$ has no incident blue
  edge. This implies that $G''$ has no alternating odd figure eight, and hence
  maps to $H^{2rb}_{r,b}$.

  Conversely, assume that we can remove a set $S$ of $k$ edges from $G'$ so that $G'\setminus S\ecto H^{2rb}_{r,b}$. We
  construct a set $C\subset V(G)$ as follows: if
  $vv'\in S$, then we add $v\in C$. If
  $uv,u'x_{uv},v'x_{uv},x_{uv}y_{uv},x_{uv}z_{uv}$ or $y_{uv}z_{uv}$
  lie in $S$, then we add randomly $u$ or $v$ to $C$. 
  Note that, in
  each case, $|C|\leqslant k$. Moreover, we claim that $C$ is a vertex
  cover of $G$. Assume not, and consider an edge $uv$ in $G$ such that
  $u,v\notin C$. By construction, this means that none of the edges
  $uv,uu',vv',u'x_{uv},v'x_{uv},x_{uv}y_{uv},x_{uv}z_{uv},y_{uv}z_{uv}$ lies in $S$. These
  vertices form an alternating odd figure eight, contradicting that
  $G'\setminus S\ecto H^{2rb}_{r,b}$.

  Therefore, \EDEL{$H^{2rb}_{r,b}$} is 
  \NP-hard.
\end{proof}

\begin{lemma}
 For $H$ an edge-coloured core of order at most $2$, if there exists a colour of $H$ which contains a non-loop and does not contain all three possible edges, then \EDEL{$H$} is \NP-complete. 
\end{lemma}

\begin{proof}
 Take such a graph $H$. If one colour, say blue, contains only one
  edge from the first vertex to the second, then for graphs $G$ which
  are all blue, the problem is equivalent to \textsc{Edge
    Bipartization}, which is \NP-complete.

  Now, if $H$ contains no such edge, then by assumption it must contain
  a colour, say blue, with a loop and an edge from the first vertex to
  the second (and no other edge of this colour). Let $u$ be the vertex with the loop and $v$ be the other vertex. Since $H$ is a core, $H$ does not map to its subgraph induced by $u$. If for every colour of $H$ there was a loop of this colour on $u$, then $H$ would not be a core. Hence there exists a colour, say red, such that there is a red edge in $H$ and $u$ has no loop coloured red. Hence, the graph obtained by removing all edges which are neither blue nor red, is either
  $H^{2b}_{r,b}$ or $H^{2rb}_{r,b}$ up to symmetry.
  %
  Thus, by the previous two Lemmas~\ref{lemma11} and~\ref{lemma12}, the
  problem is \NP-complete using the same reductions (the edges of $H$ that are neither blue nor red can be ignored).
\end{proof}

\subsection{Dichotomy for \SWITCH{$H$} when $H$ has order~$2$}
\label{sec:swclassification}

We now turn our attention to the switching operation.

\begin{theorem}\label{thm:SWITCH-order2-poly/NPc}
Let $H$ be a $2$-edge-coloured graph from Figure~\ref{fig:order2}. If $H$ is one of $H^{2b}_{r,b}$, $H^{2b}_{r,-}$, $H^{2rb}_{r,b}$, $H^{2rb}_{r,-}$ or $H^{2rb}_{r,r}$, then \SWITCH{$H$} is \NP-complete. Otherwise, it is in \Ptime.
\end{theorem}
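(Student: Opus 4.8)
The plan is to handle the twelve cores of Figure~\ref{fig:order2} case by case, splitting them into the easy polynomial cases and the five hard NP-complete cases. For the polynomial side, I would first dispose of the targets where \SWITCH{$H$} is trivial or nearly trivial: if $H = H^1_{rb}$, every graph maps to $H$ (even after zero switches), so the answer is always YES. For $H^1_b$ (or its colour-symmetric $H^1_r$) and $H^{2-}_{r,b}$, note that switching at a vertex $v$ of $G$ does not change the multiset of monochromatic connected components up to recolouring in a controlled way; in fact for $H^1_b$ the question ``can we switch $k$ vertices so that $G$ becomes all-blue'' is a balance-type question restricted to using at most $k$ switches, and I expect this to reduce to a polynomial computation on each connected component (a connected $2$-edge-coloured graph is switchable to all-blue iff it is balanced, and then the set of switchings is essentially unique up to complementation, so one just checks whether $\min(|S|,|V|-|S|)\le k$). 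For $H^{2b}_{-,-}$ (a single blue edge, i.e.\ the signed $K_2$), \SWITCH{$H$} is the signed analogue of \textsc{Odd Cycle Transversal}-by-switching; here I would argue that switching never helps or hurts in a way that changes the relevant parity structure — more precisely, $G$ switches to something mapping to a blue $K_2$ iff after switching $G$ is ``balanced and bipartite'' in the appropriate sense — and reduce to a polynomial-time check. The remaining polynomial targets $H^{2b}_{r,r}$, $H^{2rb}_{-,-}$ should be handled by similar structural reductions (for $H^{2rb}_{-,-}$, the signed $K_2$ with one edge of each colour, note any $2$-edge-coloured graph all of whose vertices have a consistent ``side'' maps in, and switching can always fix one connected bipartite piece).

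For the NP-complete side — $H^{2b}_{r,b}$, $H^{2b}_{r,-}$, $H^{2rb}_{r,b}$, $H^{2rb}_{r,-}$, $H^{2rb}_{r,r}$ — the plan is to give reductions from known NP-complete problems, mirroring the gadget strategy used in the proof of Theorem~\ref{thm:EDEL-order2-poly/NPc}. The key point exploited is that each of these targets has a loop of one colour at a vertex but no loop of the other colour there, so a vertex of $G$ incident with both colours of edges is ``dangerous'': under any switching at it, it still sees both colours (switching swaps the colour sets but never makes them monochromatic unless the vertex was already monochromatic). I would build, for each such $H$, a vertex-gadget forcing a vertex to land on a prescribed vertex of $H$ regardless of whether we switch it, an edge-gadget encoding adjacency constraints, and then reduce from \textsc{Vertex Cover} (as in the \EDEL proofs) so that the $k$ switched vertices correspond to a vertex cover. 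Concretely, for $H^{2b}_{r,b}$ I would attach to each vertex $v$ of the \textsc{Vertex Cover} instance a gadget which, in the non-switched state, contains a red–blue–red alternating obstruction (the obstruction characterising non-mappability to $H^{2b}_{r,b}$, per~\cite{BBM05}), and which is destroyed precisely by switching $v$; the blue edges of $G$ then force, for each edge $uv$, that $u$ or $v$ be switched. The cases $H^{2rb}_{r,b}$ and $H^{2rb}_{r,-}$ additionally have a red edge between the two vertices, so I would adapt the ``odd figure eight'' obstruction of~\cite{BBM05} into switching gadgets; and $H^{2b}_{r,-}$, $H^{2rb}_{r,r}$ should follow by similar constructions using their respective forbidden-pattern characterisations. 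Membership in NP is immediate: guess the switching set $S$ and a homomorphism.

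The hard part will be two-fold. First, on the polynomial side, several of the targets ($H^{2b}_{r,r}$, $H^{2b}_{-,-}$, $H^{2rb}_{-,-}$) require a genuine structural understanding of how switching interacts with the mapping condition — unlike the $k=|V(G)|$ case (Theorem~\ref{thm:dicho-signed-col}), here $k$ is bounded, so I cannot simply appeal to the signed-homomorphism dichotomy and must show that the \emph{minimum} number of switchings needed is computable in polynomial time; I expect this to come down to showing that on each connected component the ``useful'' switching sets form a coset-like family of size at most~$2$, so that the global optimum is a small combinatorial optimisation. Second, on the NP-complete side, the genuine difficulty is designing gadgets that behave correctly under switching: a vertex-gadget must pin a vertex's image for \emph{all} of the up-to-$k$ switching choices the adversary could make elsewhere, and switching at a gadget vertex must not accidentally repair an obstruction in a neighbouring gadget. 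I would control this by making gadget-internal vertices ``heavy'' (attached to enough forced structure that switching them is never advantageous) so that w.l.o.g.\ an optimal solution only switches the designated ``portal'' vertices, reducing the analysis to the non-switched/switched dichotomy at those portals — this is exactly where the bulk of the casework, and the risk of subtle errors, lies.
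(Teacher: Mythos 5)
Your polynomial-time case analysis is essentially the paper's: triviality for $H^1_{rb}$, the balance argument with the two complementary switching sets for $H^1_b$ (applied per component for $H^{2-}_{r,b}$), bipartiteness for $H^{2rb}_{-,-}$ and $H^{2b}_{-,-}$, and a switching-invariance argument for $H^{2b}_{r,r}$ (the paper's key observation there, which you gesture at but do not state, is that the obstruction --- a cycle with an odd number of blue edges --- is preserved under any switching, so one may simply test with $k=0$). That side is fine in outline.

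On the \NP-complete side there is a genuine gap, in two respects. First, you miss the shortcut that disposes of four of the five hard cases in one line: for $H^{2b}_{r,b}$, $H^{2rb}_{r,b}$, $H^{2rb}_{r,-}$ and $H^{2rb}_{r,r}$ the switching core has at least three edges, so by Theorem~\ref{thm:dicho-signed-col} already \textsc{Signed $H$-Colouring} --- which is exactly \SWITCH{$H$} with $k=|V(G)|$ --- is \NP-complete, and no gadget construction is needed. Second, for the one case where a new reduction is genuinely required, $H^{2b}_{r,-}$ (whose switching core is a single loop, so the signed-homomorphism dichotomy gives nothing), you do not actually construct the gadget: you only name the general strategy and defer the design of switching-robust gadgets, explicitly flagging it as ``where the risk of subtle errors lies.'' This is precisely the non-trivial content of the hardness proof. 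Your sketched vertex-gadget for $H^{2b}_{r,b}$ (an obstruction ``destroyed precisely by switching $v$'') also does not address the fact that switching at a gadget-internal vertex, or at a vertex of an adjacent gadget, can destroy or create obstructions globally; the proposed fix of making internal vertices ``heavy'' is not carried out. The paper's actual reduction for $H^{2b}_{r,-}$ (an all-red copy of the \textsc{Vertex Cover} instance with a pendant blue edge and a red loop per vertex) resolves exactly these issues, and its verification in both directions is the substance of the proof that your proposal leaves open.
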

\begin{proof}  We begin with the \Ptime cases. 
  \begin{itemize}    
  \item Every $2$-edge-coloured graph maps to $H^1_{rb}$, thus \SWITCH{$H^1_{rb}$} is trivially in \Ptime.
    \item No graph with an edge can be mapped to $H^1_{-}$ (regardless of switchings).
\item  For $H^1_b$, we need to test if the graph can be switched to an all-blue graph in less than $k$ switchings. There are only two sets of switchings that achieve this signature (one is the
    complement of the other). It is in \Ptime to test if the
    graph can be switched to an all-blue graph (see~\cite[Proposition~2.1]{SignedDicho}). Doing that also gives us one of the
    two switching sets; we then need to check if its size is at most $k$
    or at least $|V(G)|-k$. So, \SWITCH{$H^1_b$} is in \Ptime.
\item For $H^{2-}_{r,b}$, we just apply the algorithm for
$H^1_b$ and $H^1_r$ to each connected component, one of the two
 must accept for each of them.
\item For $H^{2rb}_{-,-}$, a graph $G$ is a YES-instance if and only if $G$ (without considering edge-colours) is bipartite, which is polynomially testable.
\item For $H^{2b}_{-,-}$ a graph $G$ is a YES-instance if and only if it is bipartite and maps to $H^1_b$. We just need to check the two
    properties, which are both in \Ptime.
  \item    For $H^{2b}_{r,r}$, a graph $G$ maps to $H^{2b}_{r,r}$ if and only if it has no cycles with an odd number of blue edges (see Lemma~\ref{lemma:duality-H_{r,r}^{2b}}, proved in~\cite{BBM05}). 
   This property is preserved under the switching operation. Thus, switching the graph does not impact the nature of the instance. It is thus in \Ptime (we can test with $k=0$) since \textsc{$H^{2b}_{r,r}$-Colouring} is in \Ptime~\cite{BBM05,BDHQ05}.
  \end{itemize}

We now consider the \NP-complete cases. For every $H$,
\SWITCH{$H$} clearly lies in \NP.
\NP-hardness follows from the above-stated Theorem~\ref{thm:dicho-signed-col} (proved in~\cite{SignedDicho,BS18}) in all but one case: indeed,  $H^{2b}_{r,b}$, $H^{2rb}_{r,b}$, $H^{2rb}_{r,-}$ and $H^{2rb}_{r,r}$ are their own switching cores and have at least three edges, thus when $H$ is one of these, \SWITCH{$H$} is \NP-complete (even with $k=|V(G)|$).

The last case is $H^{2b}_{r,-}$. We give a
  reduction from \textsc{Vertex Cover} to
  \SWITCH{$H^{2b}_{r,-}$}. Given instance $G$ of \textsc{Vertex
    Cover}, we construct an all-red copy $G'$ of $G$, and we attach to
  each vertex $v$ of $G$ a blue edge $vv'$, with a red loop on $v'$ (see
  Figure~\ref{fig:reduc-SWITCH-H^{2b}_{r,-}}).

  \begin{figure}[!htpb]
    \centering
    \scalebox{0.8}{\begin{tikzpicture}[every loop/.style={},scale=1.2]
      \node[blackvertex,label={270:$u$}] (u) at (2,0) {};
      \node[blackvertex,label={270:$v$}] (v) at (3,0) {};
      \node[blackvertex,label={270:$w$}] (w) at (4,0) {};
      \node[blackvertex,label={0:$u'$}] (u') at (2,0.7) {};
      \node[blackvertex,label={0:$v'$}] (v') at (3,0.7) {};
      \node[blackvertex,label={0:$w'$}] (w') at (4,0.7) {};

      \draw[thick,blue] (u') -- (u);
      \draw[thick,blue] (v') -- (v);
      \draw[thick,blue] (w') -- (w);
      \path[thick,red,dashed] (v')   edge[out=60,in=120,loop, min distance=5mm] node  {} (v');
      \path[thick,red,dashed] (u')   edge[out=60,in=120,loop, min distance=5mm] node  {} (u');
      \path[thick,red,dashed] (w')   edge[out=60,in=120,loop, min distance=5mm] node  {} (w');
      
      \draw[thick,dashed,red] (u) -- (v);
      \draw[thick,dashed,red] (u) to[bend right=35] (w);
      
      \path (0,0) node {$G$};
      \path (1,0) node {\ldots};
      
      \draw[line width=1.1pt] (2,0) ellipse (3.2cm and 0.5cm);  
    \end{tikzpicture}}
    \caption{Reduction from \textsc{Vertex Cover} to \SWITCH{$H^{2b}_{r,-}$}.}\label{fig:reduc-SWITCH-H^{2b}_{r,-}}
  \end{figure}

  Denote by $x$ the vertex of $H^{2b}_{r,-}$ with a loop, and by $y$
  the other one. Assume that $G$ has a vertex cover $C$ of size at
  most $k$. Denote by $G''$ the graph obtained from $G'$ by switching
  at the vertices of $C$.  We map every vertex $v'$ to $x$, every
  vertex of $C$ to $x$ and the remaining ones to $y$. Since $C$ is a
  vertex cover, each red edge of $G''$ is either a loop on some vertex
  $v'$, an edge $vv'$ with $v\in C$ or an edge $uv$ with $u,v\in
  C$. In each case, both endpoints are mapped on $x$. The blue edges
  of $G''$ are then either $vv'$ with $v\notin C$ or $uv$ with
  $u\in C$ and $v\notin C$. In both cases, the two endpoints are
  mapped to different vertices of $H^{2b}_{r,-}$; thus,
  $G''\ecto H^{2b}_{r,-}$.

  Conversely, assume that we can switch $G'$ at vertices from a set
  $S$ such that the resulting graph $G''$ maps to $H^{2b}_{r,-}$. Let
  $C$ be the set of vertices $v$ of $G$ such that $v$ or $v'$ lies in
  $S$. Note that $C$ has size at most $|S|$. We claim that $C$ is a
  vertex cover of $G$. Assume that there is an edge $uv$ in $G$ with
  $u,v\notin C$. By construction, $u,u',v,v'\notin S$, so $uu',vv'$
  are blue in $G''$, and $uv$ is red. Thus, $u,v$ have to be mapped to
  $x$, and $u',v'$ to $y$, a contradiction since $u'$ has a incident
  red loop in $G''$. Therefore $C$ is a vertex cover of $G$.
%
\end{proof}

\section{Parameterized complexity results}\label{sec:param}



\subsection{\VDEL{$H$} and \EDEL{$H$}}\label{sec:FPT-VDEL-EDEL}

For many edge-coloured graphs $H$ of order at most~$2$, we can show that \VDEL{$H$} and \EDEL{$H$} are FPT by giving ad-hoc reductions to \textsc{Vertex Cover}, \textsc{Odd Cycle Transversal} or a combination of both. However, a more powerful method is to generalise a technique from~\cite{BDHQ05} used to prove that \textsc{$H$-Colouring} is in \Ptime by reduction to \textsc{$2$-Sat} (see also~\cite{Bthesis}):

\begin{theorem}[Brewster et al. \cite{BDHQ05}]\label{thm:order2-poly-2SAT-original}
  Let $H$ be an edge-coloured graph of order at most~$2$. Then, for each instance $G$ of \textsc{$H$-Colouring}, there exists a polynomially computable \textsc{$2$-Sat} formula $F(G)$ that is satisfiable if and only if $G\ecto H$. Thus, \textsc{$H$-Colouring} is in \Ptime.
\end{theorem}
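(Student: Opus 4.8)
The plan is to turn a homomorphism $\varphi\colon V(G)\to V(H)$ into a truth assignment and back. The degenerate case $|V(H)|\le 1$ is immediate: if $H$ is empty then $G\ecto H$ iff $G$ has no vertex, and if $H$ is a single vertex then $G\ecto H$ iff every colour occurring on an edge or loop of $G$ labels a loop of $H$; each of these is decidable in polynomial time, so $F(G)$ can be taken to be $\top$ or $\bot$ accordingly. So assume from now on that $V(H)=\{0,1\}$.

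For each $v\in V(G)$ I would introduce a Boolean variable $x_v$, with the intended meaning $x_v=1 \iff \varphi(v)=1$; truth assignments to $(x_v)_{v\in V(G)}$ are then in bijection with maps $V(G)\to\{0,1\}$. For a colour $i$, let $S_i\subseteq\{0,1\}^2$ consist of the pairs $(a,b)$ such that $ab$ is an edge of colour $i$ in $H$ (a loop of colour $i$ when $a=b$); note that $S_i$ is symmetric and that $S_i=\emptyset$ if colour $i$ does not occur in $H$. The single fact driving the argument is that every binary relation on $\{0,1\}$ is definable by a $2$-CNF: writing $x^1$ for $x$ and $x^0$ for $\lnot x$, set
\[
\psi_R(x,y)=\bigwedge_{(a,b)\notin R}\bigl(x^{1-a}\vee y^{1-b}\bigr).
\]
The clause associated with a forbidden pair $(a,b)$ is false exactly on the assignment $x=a$, $y=b$, so $\psi_R(x,y)$ holds precisely when $(x,y)\in R$; it has at most four clauses of width at most two. (When $y$ is identified with $x$ each surviving clause becomes a unit clause, and $\psi_\emptyset$ is unsatisfiable, as it must be.)

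I would then define
\[
F(G)=\bigwedge_{\substack{uv\in E(G)\ \text{of}\\ \text{colour }i}}\psi_{S_i}(x_u,x_v)\ \wedge\ \bigwedge_{\substack{\text{loop at }v\\ \text{of colour }i}}\psi_{S_i}(x_v,x_v),
\]
where the conjunctions range over all coloured edges and loops of $G$, multiedges included. This is a $2$-CNF with $O(|E(G)|)$ clauses and is computable in polynomial time. Correctness is checked block by block: a map $\varphi$, viewed as an assignment, satisfies $\psi_{S_i}(x_u,x_v)$ iff $(\varphi(u),\varphi(v))\in S_i$, i.e. iff $\varphi$ sends the colour-$i$ edge $uv$ to a colour-$i$ edge of $H$ (and similarly for loops). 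Hence $F(G)$ is satisfiable iff some $\varphi\colon V(G)\to\{0,1\}$ respects every edge and loop of $G$, that is, iff $G\ecto H$. Since \textsc{$2$-Sat} is solvable in polynomial (indeed linear) time, it follows that \textsc{$H$-Colouring} is in \Ptime.

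I do not expect a genuine obstacle here: the mathematical content is the elementary observation that binary Boolean relations are $2$-CNF-expressible, and the only real work is bookkeeping — checking that the translation correctly handles loops of $G$ (unit constraints), parallel edges carrying several colours (conjoin one block per colour), and colours present in $G$ but absent from $H$ (which produce the unsatisfiable block $\psi_\emptyset$, matching the non-existence of a homomorphism in that case). If clauses of width exactly two are insisted upon, each unit clause $(\ell)$ can be rewritten as $(\ell\vee\ell)$.
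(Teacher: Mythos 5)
Your construction is correct and is essentially the one the paper takes from~\cite{BDHQ05}: one Boolean variable per vertex of $G$ identifying which of the two vertices of $H$ it maps to, and, for each coloured edge, a $2$-CNF block expressing membership in the set of colour-$i$ edges of $H$. Your uniform formula $\psi_R$ reproduces (up to logical equivalence) exactly the case table the paper displays, and your handling of loops, parallel edges, stray colours, and the degenerate targets of order at most~$1$ is all sound.
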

\begin{proof}[Proof (sketch)]
   The formula $F(G)$ from Theorem~\ref{thm:order2-poly-2SAT-original} contains a variable $x_v$ for each vertex $v$ of $G$, and for each edge $uv$, a set of clauses that depends on $H$, as described in Table~\ref{table:2SAT-order2} (reproduced from~\cite{BDHQ05}).
The idea is to see the two vertices of $H$ as ``true'' ($1$) and ``false'' ($0$), and for each edge $uv$ of a certain colour, to express the possible valid assignments of $x_u$ and $x_v$ based on the edges of that colour that are present in $H$. For example, if $H$ has, for colour $i$, a loop at vertex $0$ and an edge $01$, but no other edge of colour $i$, for each edge $uv$ of $G$ of colour $i$, we add the clause $(\overline{x_u}+\overline{x_v})$ to $F(G)$, indeed the constraint for edge $uv$ is satisfied if at least one of $u,v$ is mapped to $0$.
\end{proof}

\begin{table}[htpb!]
 \centering
   \scalebox{1}{\begin{tabular}{p{2.5cm}l}
       $E_i(H)$ & Clause\\[2mm]
       \hline\\
       $\emptyset$ & $(x_u)(\overline{x_u})$\\[2mm]
       $\{00\}$ & $(\overline{x_u})(\overline{x_v})$\\[2mm]
       $\{01\}$ & $(x_u+x_v)(\overline{x_u}+\overline{x_v})$\\[2mm]
       $\{11\}$ & $(x_u)(x_v)$\\[2mm]
       $\{00,01\}$ & $(\overline{x_u}+\overline{x_v})$ \\[2mm]
       $\{01,11\}$ & $(x_u+x_v)$ \\[2mm]
       $\{00,11\}$ & $(x_u+\overline{x_v})(\overline{x_u}+x_v)$ \\[2mm]
       $\{00,01,11\}$ & $(x_u+\overline{x_u})$      
     \end{tabular}}
 \caption{Clauses appearing in the \textsc{$2$-Sat} formula $F(G)$ of Theorem~\ref{thm:order2-poly-2SAT-original} proved in~\cite{BDHQ05}, for each edge $uv$ of $G$ coloured $i$. The clauses depend on the edge set of $H$ in colour $i$, described in the rows (where $V(H)=\{0,1\}$).} \label{table:2SAT-order2}
\end{table}

We will show how to generalise this idea to \VDEL{$H$} and \EDEL{$H$}. We will need the following parameterized variant of \textsc{$2$-Sat}:

\problemparam{\textsc{Variable Deletion Almost $2$-Sat}}{A $2$-CNF Boolean formula $F$, an integer $k$.}{$k$.}{Is there a set of $k$ variables that can be deleted from $F$ (together with the clauses containing them) so that the resulting formula is satisfiable?}

\textsc{Variable Deletion Almost $2$-Sat} and another similar variant, \textsc{Clause Deletion Almost $2$-Sat} (where instead of $k$ variables, $k$ clauses may be deleted), are known to be FPT (see~\cite[Chapter 3.4]{BookParamAlgo} and~\cite{RO08}). We need to introduce a more general variant, that we call \textsc{Group Deletion Almost $2$-Sat}, defined as follows.

\problemparam{\textsc{Group Deletion Almost $2$-Sat}}{A $2$-CNF Boolean formula $F$, an integer $k$, and a partition of the clauses of $F$ into groups such that each group has a variable which is present in all of its clauses.}{$k$.}{Is there a set of $k$ groups of clauses that can be deleted from $F$ so that the resulting formula is satisfiable?}

By a generalisation of \cite[Exercise 3.21]{BookParamAlgo} for
\textsc{Clause Deletion Almost $2$-Sat}, we obtain the following
complexity result for \textsc{Group Deletion Almost $2$-Sat}. 

\begin{proposition}\label{prop:groupSAT}
\textsc{Group Deletion Almost $2$-Sat} is FPT.
\end{proposition}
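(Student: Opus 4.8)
The plan is to reduce \textsc{Group Deletion Almost $2$-Sat} to \textsc{Variable Deletion Almost $2$-Sat}, which is known to be FPT. The key observation is that in a group, by hypothesis there is a distinguished variable, say $z_g$, that occurs in every clause of the group $g$. So deleting the group $g$ is ``almost'' the same as deleting that variable: deleting $z_g$ removes exactly the clauses of $g$ that mention $z_g$, which is all of them. The only subtlety is that a single variable might be the distinguished variable of several groups, and also that the distinguished variable may occur in clauses outside its group. To handle this, I would first make the distinguished variables ``private'' to their groups.

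First I would describe the reduction. Given an instance $(F,k,\{g_1,\dots,g_p\})$ of \textsc{Group Deletion Almost $2$-Sat}, build a new $2$-CNF formula $F'$ as follows: for each group $g_j$ with distinguished variable $z_j$, introduce a fresh variable $w_j$, and replace every occurrence of $z_j$ \emph{inside the clauses of $g_j$} by $w_j$ (keeping any occurrences of $z_j$ outside $g_j$ unchanged); then add the two clauses $(z_j + \overline{w_j})(\overline{z_j} + w_j)$, which force $w_j = z_j$. Put these two ``consistency'' clauses into $g_j$ as well (they contain $w_j$, which is now the distinguished variable of $g_j$ and occurs nowhere else). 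The resulting formula $F'$ is equisatisfiable with $F$, since the $w_j$'s are just copies of the $z_j$'s. Now ask for \textsc{Variable Deletion Almost $2$-Sat} on $(F', k)$, but restrict the deletable variables to $\{w_1,\dots,w_p\}$ (this restricted version is still FPT — either observe that the standard iterative-compression / LP-branching algorithm for \textsc{Variable Deletion Almost $2$-Sat} works verbatim when the set of candidate deletion variables is restricted, or reduce to the unrestricted version by a standard padding trick that makes the non-$w$ variables impossible to profitably delete, e.g.\ duplicate each such variable many times with forced-equality clauses).

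Next I would prove the equivalence. If $S$ is a set of at most $k$ groups whose deletion makes $F$ satisfiable via an assignment $\varphi$, then deleting the variables $\{w_j : g_j \in S\}$ from $F'$ removes exactly the clauses of the groups in $S$ (every clause of $g_j$ contains $w_j$, and no other clause does), and the assignment extending $\varphi$ by $w_j := \varphi(z_j)$ satisfies the remaining clauses of $F'$. Conversely, if deleting a set $W \subseteq \{w_1,\dots,w_p\}$ of size at most $k$ makes $F'$ satisfiable, then the corresponding groups $\{g_j : w_j \in W\}$ removed from $F$ leave a satisfiable formula: take a satisfying assignment of $F' - W$, read off the values of the surviving variables (for a surviving group $g_j$, $w_j$ is present and equals $z_j$ by the consistency clauses, so the restriction to the original variables satisfies the surviving clauses of $F$; for a removed group we don't care). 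Since $|W| \le k$, at most $k$ groups are deleted. This completes the reduction, and since \textsc{Variable Deletion Almost $2$-Sat} (with a restricted deletion set) is FPT, so is \textsc{Group Deletion Almost $2$-Sat}.

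The main obstacle is the bookkeeping around shared distinguished variables and around the distinguished variable of a group occurring in clauses outside that group — that is exactly what the renaming-to-fresh-$w_j$ step is designed to neutralize, so that ``delete group $g_j$'' becomes cleanly equivalent to ``delete variable $w_j$'' with $w_j$ local to $g_j$. A secondary technical point is justifying that restricting the allowed deletion variables in \textsc{Variable Deletion Almost $2$-Sat} does not break fixed-parameter tractability; this is routine (the known FPT algorithm's branching never needs to delete outside a prescribed set, or one uses the padding trick above), and the paper cites \cite[Chapter 3.4]{BookParamAlgo} and \cite[Exercise 3.21]{BookParamAlgo} precisely because this kind of variation is standard.
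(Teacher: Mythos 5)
Your reduction is, at its core, the same as the paper's: create group-local copies of variables, tie them to the originals with forced-equality clauses, and translate ``delete group $g_j$'' into ``delete the fresh variable $w_j$''. The paper does this more heavy-handedly --- it replaces \emph{every} occurrence of \emph{every} variable $x$ in group $g_i$ by a copy $x_i$ and links all copies pairwise --- whereas you rename only the distinguished variable of each group. Your version is leaner, but the price is that you must solve \textsc{Variable Deletion Almost $2$-Sat} with the deletable variables restricted to $\{w_1,\dots,w_p\}$; you correctly observe (and your own analysis implicitly shows) that the unrestricted problem would be wrong here, since deleting a shared non-$w$ variable could wipe out clauses from more than $k$ groups at once. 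The paper's full renaming is exactly the device that makes the restriction unnecessary: after renaming, every variable is local to a single group, so deleting \emph{any} variable $x_i$ removes only clauses of $g_i$ (plus equality clauses), and one can safely charge that deletion to the removal of the whole group $g_i$.

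The gap in your write-up is the justification that the restricted-deletion-set variant is FPT. The claim is true, but neither of your two arguments establishes it. The ``works verbatim'' claim about the algorithm of \cite[Chapter 3.4]{BookParamAlgo} is asserted, not checked. More importantly, the padding trick as you state it --- ``duplicate each such variable many times with forced-equality clauses'' --- does not work: if a clause $C$ containing $x$ still mentions one particular copy $x^{(1)}$, then deleting $x^{(1)}$ still removes $C$ (and only a few equality clauses besides), so deleting a single copy remains exactly as profitable as deleting $x$ was, or at best degrades to clause deletion, which can still be profitable. To make a variable genuinely undeletable you must also replicate each clause containing it $k+1$ times, once per copy, so that no budget-$k$ deletion can eliminate all copies of any original constraint. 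With that correction (or by simply adopting the paper's renaming of all variables, which reduces to the standard unrestricted problem), your argument goes through.
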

\begin{proof}
We will reduce the problem \textsc{Group Deletion Almost $2$-Sat} to the problem \textsc{Variable Deletion Almost $2$-Sat}. 

Take an instance $\mathcal{G}$ of \textsc{Group Deletion Almost
  $2$-Sat} with groups $g_1,\dots,g_p$. We construct an instance
$\mathcal{V}$ of \textsc{Variable Deletion Almost $2$-Sat} as
follows. For $i\in [1,p]$, we replace each occurrence of variable $x$
in the clauses of group $g_i$ by a new variable $x_i$. Moreover, for each
variable $x$ and for each $i,j$, such that $1\leq i<j \leq p$, we add
the two clauses $(\overline{x_i} + x_j)$ and $(x_i + \overline{x_j})$
to $\mathcal{V}$ (i.e. $x_i = x_j$). The parameter for $\mathcal{V}$ remains $k$.

Suppose that $\mathcal{V}$ is a positive instance, i.e. that after removing
up to $k$ variables, the resulting set of clauses $\mathcal{V}'$ is
satisfied by a truth assignment $v$. For each removed variable $x_i$,
we remove the group of clauses $g_i$ in $\mathcal{G}$. Note that at
most $k$ groups are removed since we removed at most $k$ variables in
$\mathcal{V}$. We have to show that the new set of clauses
$\mathcal{G}'$ is satisfiable. 

Note that if $x_i$ and $x_j$ are not removed, then $v$ satisfies
$(\overline{x_i} + x_j)$ and $(x_i + \overline{x_j})$, which ensures
that $v(x_i)=v(x_j)$. Thus, defining the truth value of $x$ by the
value of $v(x_i)$ (for some non-removed $x_i$) is well-defined. Take a
clause $(x+y)$ of $\mathcal{G}'$, then $(x_i + y_i)$ is a satisfied
clause of $\mathcal{V}'$ for some $i\in [1,p]$. By definition of our
truth assignment, $(x+y)$ is satisfied, so $\mathcal{G'}$ is
satisfiable. Therefore, $\mathcal{G}$ is a positive instance.

Conversely, suppose that we can remove $k$ groups from $\mathcal{G}$
such that the resulting set of clauses $\mathcal{G}'$ is satisfied by
$v$. If we removed the group $g_i$ in the solution, then we remove
$x_i$ in $\mathcal{V}$ where $x_i$ is a variable of $g_i$ that appears
in each of its clauses. Such a variable exists by definition of
$\mathcal{G}$. This removes all the clauses corresponding to the
clauses of the group $g_i$ in $\mathcal{V}$. Thus, taking the truth
assignment that assigns to each $x_i$ the value $v(x)$ satisfies the
instance $\mathcal{V}$.
\end{proof}

We are now able to prove the following theorem.
 
\begin{theorem}\label{thm:VDEL-order2-FPT} \label{thm:EDEL-order2-FPT}
  For every edge-coloured graph $H$ of order at most~$2$, \VDEL{$H$} and \EDEL{$H$} are FPT.
\end{theorem}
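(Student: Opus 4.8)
The plan is to mimic the \textsc{$2$-Sat} reduction of Theorem~\ref{thm:order2-poly-2SAT-original}, but instead of reducing \textsc{$H$-Colouring} to \textsc{$2$-Sat}, reduce \VDEL{$H$} (resp. \EDEL{$H$}) to \textsc{Group Deletion Almost $2$-Sat}, which is FPT by Proposition~\ref{prop:groupSAT}. Starting from an instance $(G,k)$, I first build the formula $F(G)$ from Table~\ref{table:2SAT-order2}: one variable $x_v$ per vertex $v$ of $G$, and for each edge $uv$ of $G$ the corresponding group of clauses. The key observation making this work is that each row of Table~\ref{table:2SAT-order2} is a set of clauses all of which contain a common variable (for an edge $uv$ every clause mentions $x_u$, and also $x_v$ except in the degenerate ``$\emptyset$'' row where $E_i(H)=\emptyset$; there every clause mentions $x_u$). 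Hence the clauses coming from a single edge of $G$ form a legitimate \emph{group} in the sense of \textsc{Group Deletion Almost $2$-Sat}. For a vertex $v$ I will additionally need to be able to ``kill'' all constraints touching $v$ at once — see below.

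For \EDEL{$H$}, deleting an edge $uv$ of $G$ should correspond exactly to deleting the group of clauses associated with $uv$. So $(G,k)$ is a YES-instance of \EDEL{$H$} iff one can delete $k$ groups of $F(G)$ to make it satisfiable: given an edge-deletion set $S$ with $(G-S)\ecto H$, delete the $|S|$ groups of the removed edges and read off a satisfying assignment from the homomorphism (truth value of $x_v$ = which vertex of $H$ the image is), and conversely a group-deletion solution tells us which $\le k$ edges to delete and the satisfying assignment gives the homomorphism; both directions are exactly the correctness argument of Theorem~\ref{thm:order2-poly-2SAT-original} restricted to $G-S$. For \VDEL{$H$}, deleting a vertex $v$ should remove \emph{all} groups associated with edges incident to $v$ — but that is several groups at once, not one, so a naive encoding would make the parameter blow up. The fix is to route all constraints at $v$ through a single ``switch'' variable: for each vertex $v$ introduce a fresh variable $z_v$, and for each edge $uv$ replace the group of clauses $\{C_1,\dots,C_r\}$ (built from Table~\ref{table:2SAT-order2}) by the group $\{C_1\vee z_u\vee z_v,\dots\}$ — wait, that makes clauses of width $3$ or $4$. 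Instead: for each edge $uv$, add to each of its clauses a single extra literal, namely $z_{uv}$ for a brand-new variable $z_{uv}$ shared by that edge's clauses only, and then add, for each vertex $v$ and each incident edge $uv$, the clause $(\overline{z_v}\vee z_{uv})$; delete the group generated by $z_v$'s equalities to turn all $z_{uv}$ with $uv\ni v$ true simultaneously. The cleanest realisation is: the group associated to vertex $v$ contains (i) for every incident edge $uv$, the Table~\ref{table:2SAT-order2} clauses with the extra literal $z_v\vee z_u$ absorbed appropriately so as to keep width $2$ — which one does by the standard trick of first normalising to implications — and (ii) the equalities binding $z_v$. I will present this as: each vertex $v$ gets a group $G_v$; removing $G_v$ from the formula exactly simulates deleting $v$ from $G$, and every clause of $G_v$ contains the variable $z_v$, so the groups are valid.

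The main obstacle, and the part that needs the most care, is precisely this width-$2$ bookkeeping for \VDEL{$H$}: the constraints in Table~\ref{table:2SAT-order2} are genuinely binary, and we must attach a vertex-killing literal to them without increasing the arity. I expect to handle this by the equivalent formulation of binary constraints as a conjunction of implications $(\ell_1\Rightarrow\ell_2)$ over the two vertices of the edge, and to observe that ``vertex $v$ is deleted'' should vacuously satisfy every implication touching $v$; this is achieved by conditioning each such implication on $\neg(\text{$v$ deleted})$, which, after choosing the deletion indicator to be a literal rather than a formula, stays within $2$-CNF — standard and used in the Almost-$2$-Sat literature. Once the gadget is set up, both directions of correctness follow by specialising the equivalence in Theorem~\ref{thm:order2-poly-2SAT-original} to the graph obtained after the deletions, and the parameter is preserved exactly (one deleted vertex/edge $\leftrightarrow$ one deleted group), so Proposition~\ref{prop:groupSAT} gives the FPT algorithm for both \VDEL{$H$} and \EDEL{$H$}. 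A minor additional point to check is the degenerate rows of the table ($E_i(H)=\emptyset$, giving the unsatisfiable pair $(x_u)(\overline{x_u})$, and $E_i(H)=\{00,01,11\}$, giving the trivial tautology): in the first case the only way to satisfy $F(G-S)$ is to have removed $u$ or $v$, which correctly forces the corresponding vertex/edge into the deletion set; in the second the group is vacuous and can be dropped. I will remark that this simultaneously re-proves Theorem~\ref{thm:order2-poly-2SAT-original} as the $k=0$ case.
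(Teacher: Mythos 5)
Your overall strategy -- encode the homomorphism as the \textsc{$2$-Sat} formula $F(G)$ of Theorem~\ref{thm:order2-poly-2SAT-original} and then invoke the FPT deletion variants of \textsc{$2$-Sat} -- is exactly the paper's, and your treatment of \EDEL{$H$} (one group per edge, fed to \textsc{Group Deletion Almost $2$-Sat} via Proposition~\ref{prop:groupSAT}) matches the paper's. However, your ``key observation'' that every clause produced by an edge $uv$ contains $x_u$ (and $x_v$) is false for two rows of Table~\ref{table:2SAT-order2}: when $E_i(H)=\{00\}$ the clauses are $(\overline{x_u})(\overline{x_v})$ and when $E_i(H)=\{11\}$ they are $(x_u)(x_v)$; in both cases the two clauses share \emph{no} variable, so they do not form a legitimate group. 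This is precisely the case the paper has to patch, replacing $(x_u)(x_v)$ by $(c+x_u)(c+x_v)(\overline{c})$ for a fresh variable $c$ common to all three clauses (and symmetrically for $\{00\}$). Without this (or an equivalent re-encoding such as $(x_u+x_v)(x_u+\overline{x_v})(\overline{x_u}+x_v)$, whose clauses all contain both variables), your reduction for \EDEL{$H$} is not a valid instance of \textsc{Group Deletion Almost $2$-Sat}.

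For \VDEL{$H$} you depart from the paper and the construction is not actually completed. The paper's route is much simpler: it reduces to \textsc{Variable Deletion Almost $2$-Sat} by identifying ``delete vertex $v$'' with ``delete variable $x_v$'' -- no killing literals, no per-vertex groups. Your alternative, attaching a vertex-deletion indicator to every clause of every incident edge, runs straight into the arity problem you yourself notice (``wait, that makes clauses of width $3$ or $4$''), and none of the subsequent fixes resolves it: adding even the single literal $z_{uv}$ to a binary clause of Table~\ref{table:2SAT-order2} already yields a width-$3$ clause, and the promised ``standard trick of normalising to implications'' is never exhibited. So as written the \VDEL{$H$} reduction does not produce a $2$-CNF instance and the appeal to Proposition~\ref{prop:groupSAT} (or to \textsc{Variable Deletion Almost $2$-Sat}) does not go through. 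The underlying worry that motivates your gadget -- that a clause of edge $uv$ not containing $x_v$ would survive the deletion of $x_v$ -- is a real subtlety, but again it only concerns the rows $\{00\}$ and $\{11\}$, and the clean fix is to re-encode those rows so that every clause contains both $x_u$ and $x_v$, after which variable deletion simulates vertex deletion exactly and the paper's one-line argument applies.
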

\begin{proof}
  For an instance $G,k$ of \VDEL{$H$} or \EDEL{$H$}, we consider the formula $F(G)$ from Theorem~\ref{thm:order2-poly-2SAT-original} (see~Table~\ref{table:2SAT-order2}). In $F(G)$, to each vertex of $G$ corresponds a variable $x_v$. Deleting $v$ from $G$ when mapping $G$ to $H$ has the same effect as deleting $x_v$ when satisfying $F(G)$. Thus, this is an FPT reduction from \VDEL{$H$} to \textsc{Variable Deletion Almost $2$-Sat}.

Moreover, each edge $uv$ of $G$ corresponds to one or two clauses of $F(G)$. This naturally defines the groups of \textsc{Group Deletion Almost $2$-Sat} by grouping the clauses corresponding to the same edge. Removing an edge is equivalent to removing its corresponding group. To finish, we have to make sure that we can have one variable common to all the clauses of each group. This is the case in the reduction in~\cite{BDHQ05} for every case except when $E_i(H)$ (the set of
edges of colour $i$ in $H$) is just a loop. Assume without loss of generality that the loop is on vertex~$1$ (the other loop can be treated the same way). Suppose $uv$ has colour $i$ in $G$; then $uv$ must be mapped to
the loop on vertex~$1$. The original reduction added the clauses $(x_u)(x_v)$; we modify this part and add instead the clauses $(c + x_u)(c+x_v)(\overline{c})$ where $c$ is a new variable. This is now a valid and equivalent instance of \textsc{Group Deletion Almost $2$-Sat}, which is FPT by Proposition~\ref{prop:groupSAT}. 
\end{proof}



\subsection{\SWITCH{$H$}: FPT cases}

We now consider the parameterized complexity of \SWITCH{$H$}. By Theorem~\ref{thm:SWITCH-order2-poly/NPc}, there are five $2$-edge-coloured graphs $H$ of order at most~$2$ with \SWITCH{$H$} \NP-complete. We first show that two of them are FPT:

\begin{theorem}\label{thm:SWITCH-order2-FPT}
  \SWITCH{$H^{2b}_{r,b}$} and \SWITCH{$H^{2b}_{r,-}$} are FPT.
\end{theorem}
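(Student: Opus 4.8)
The plan is to handle the two targets separately, reducing each \SWITCH{$H$} problem to one we already know is FPT.

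For \SWITCH{$H^{2b}_{r,-}$}, the key observation is that the switching operation interacts nicely with the homomorphism condition here. Recall that $G'\ecto H^{2b}_{r,-}$ means that, writing $x$ for the looped vertex and $y$ for the non-looped one, the set $Y$ of vertices mapped to $y$ must be an independent set in red and must contain no blue edge with both endpoints in $Y$ --- in fact, the preimage of $y$ must be a red-independent set with no incident red loop, and $\{x\}$ absorbs everything else. Unpacking this, $G'\ecto H^{2b}_{r,-}$ iff there is a set $Y$ of vertices, none of which carries a red loop, such that $Y$ spans no red edge and $Y$ spans no blue edge either, i.e. $Y$ must be independent in all of $G'$ after switching. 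The freedom to switch at $k$ vertices only changes the colours of edges in the cut $(S, V\setminus S)$. First I would show, as in the proof of Theorem~\ref{thm:SWITCH-order2-poly/NPc}, that switching at $S$ and then mapping is equivalent to: partition $V(G)$ into the part sent to $x$ and the part $Y$ sent to $y$, where $Y$ must avoid red loops, and after switching at $S$ every edge inside $Y$ and from $Y$ to $x$-part that is ``wrong'' gets fixed. Since each switch is a vertex operation and $|S|\le k$, I would branch on whether each obstruction (a short forbidden configuration under the duality/obstruction description of $H^{2b}_{r,-}$-colouring from~\cite{BBM05}) is destroyed by switching one of its (boundedly many) vertices or by assigning its vertices appropriately; this gives a bounded search tree of height $k$, hence FPT — essentially the argument of Proposition~\ref{prop:finite-duality-FPT} once one checks $H^{2b}_{r,-}$ has a suitable bounded-size obstruction set.

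For \SWITCH{$H^{2b}_{r,b}$}, recall a $2$-edge-coloured graph maps to $H^{2b}_{r,b}$ iff it contains no homomorphic image of a red--blue--red $3$-edge-path~\cite{BBM05}; equivalently, the red edges must form a disjoint union of components each of which is a single red edge or a red loop or isolated, with no blue edge incident to an interior vertex... more precisely, no vertex simultaneously has an incident red edge and lies on a red edge whose other endpoint has an incident red edge — I would restate it cleanly as: $G'\ecto H^{2b}_{r,b}$ iff every vertex incident to a red edge has all its other incident edges blue AND no red edge has both endpoints incident to further red edges, which after switching reduces to a local, bounded-radius condition. Again the obstruction (the red-blue-red path) has only $4$ vertices, so $H^{2b}_{r,b}$ has a finite obstruction set of bounded size; but note Proposition~\ref{prop:finite-duality-FPT} already gives the result directly, since "contains no homomorphic image of the red-blue-red $P_4$" is exactly a finite-duality condition with $\mathcal F(H^{2b}_{r,b})$ a single graph. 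So here I would simply invoke Proposition~\ref{prop:finite-duality-FPT}.

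I expect the main obstacle to be \SWITCH{$H^{2b}_{r,-}$}: the target $H^{2b}_{r,-}$ has an \emph{infinite} obstruction set for ordinary $H$-colouring (the obstructions are odd-length alternating paths / figure-eight–type structures, as in the $\{00,01,11\}$-row analysis of Table~\ref{table:2SAT-order2}), so Proposition~\ref{prop:finite-duality-FPT} does not apply out of the box. The fix is to exploit that switching can only affect the cut of $S$: I would guess the $\le 2^k$ possible patterns of which of the $k$ switched vertices lies in the $x$-part versus $y$-part is not quite enough; instead I would reduce to \textsc{Variable Deletion Almost $2$-Sat} in the spirit of Section~\ref{sec:FPT-VDEL-EDEL}, by writing a $2$-Sat formula whose variables encode both the $x/y$ assignment and (for each vertex) a "switched or not" bit, with the caveat that a switch bit may appear in unboundedly many clauses. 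Since \textsc{Variable Deletion Almost $2$-Sat} deletes variables (and their clauses), deleting a "switch bit" variable is exactly "this vertex may be freely switched", which is what we want — so the real content is checking that, after fixing which vertices are switched, the remaining constraint is genuinely expressible as a fixed $2$-Sat formula of the type in Table~\ref{table:2SAT-order2} with the switched edges' colours flipped, and that the number of deleted variables equals $k$. Verifying this correspondence carefully — in particular that a solution to the SAT instance with $k$ deleted variables yields a valid switching set of size $\le k$ and conversely — is the step I would spend the most care on.
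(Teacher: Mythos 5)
Your treatment of \SWITCH{$H^{2b}_{r,b}$} is correct and is exactly the paper's argument: the single obstruction is the red--blue--red $3$-edge-path, so Proposition~\ref{prop:finite-duality-FPT} applies directly. The problem is \SWITCH{$H^{2b}_{r,-}$}, where you correctly identify the obstacle (the duality set is infinite, consisting of the paths $RB^{2p-1}R$ and the cycles with an odd number of blue edges), but your proposed fix does not work. Encoding a per-vertex switch bit $s_v$ alongside the assignment bit $x_v$ produces, for each edge $uv$, a constraint on the four variables $x_u,x_v,s_u,s_v$ (the effective colour is determined by $s_u\oplus s_v$, and the admissible pairs $(x_u,x_v)$ depend on that colour). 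For a blue edge and target $H^{2b}_{r,-}$ this relation is ``if $s_u=s_v$ then $x_u\neq x_v$, else $x_u=x_v=0$'', which is not closed under the coordinatewise majority operation (take the satisfying tuples $(0,1,0,0)$, $(1,0,0,0)$, $(0,0,0,1)$; their majority $(0,0,0,0)$ is not satisfying) and hence is not expressible as a conjunction of $2$-clauses. So there is no ``fixed $2$-Sat formula of the type in Table~\ref{table:2SAT-order2}'' to delete variables from. Worse, even if there were, deleting the variable $s_v$ in \textsc{Variable Deletion Almost $2$-Sat} removes \emph{every clause containing $s_v$}, i.e.\ it discards all constraints on all edges incident to $v$. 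That is the semantics of deleting the vertex $v$, not of switching at it: after a switch, the edges at $v$ still impose constraints, just with flipped colours. A YES-instance of your SAT encoding therefore would not yield a valid switching set.

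The paper closes the gap with a much more elementary observation that your plan misses: although the obstructions $RB^{2p-1}R$ are unbounded in size, any successful switching set must intersect a \emph{bounded} part of each one. Cycles with an odd number of blue edges are switch-invariant (switching changes only cut edges, and a cycle meets any cut in an even number of edges), so their presence means immediate rejection. For a path $RB^{2p-1}R$, switching only at internal vertices of the blue segment flips an even number of its edges, so an odd-length maximal blue run bounded by red edges survives; hence some solution must switch one of the four endpoints of the two red edges. Branching on these four choices gives a search tree of height $k$ and degree $4$, which is the FPT algorithm. If you want to salvage your write-up, replace the SAT reduction for $H^{2b}_{r,-}$ by this bounded-branching argument.
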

\begin{proof}
  The graph $H^{2b}_{r,b}$ has the finite duality property by~\cite{BBM05}, see Lemma~\ref{lemma:duality-H_{r,b}^{2b}}: $G \ecto H^{2b}_{r,b}$ if and only if $G$ does not contain a walk $abcd$ where $ab$ and $cd$ are red edges and $bc$ is a blue edge. This implies FPT time for \SWITCH{$H^{2b}_{r,b}$} by a simple bounded search tree algorithm (Proposition~\ref{prop:finite-duality-FPT}).
  
For the graph $H^{2b}_{r,-}$, as mentioned in Lemma~\ref{lemma:duality-H_{r,-}^{2b}}, the duality set $\mathcal F(H)$ discovered in~\cite{BBM05} is composed of walks of the form $RB^{2p-1}R$ (where $R$ is a red edge, $B$ a blue edge and $p\geq 1$ is an integer) and of closed walks with an odd number of blue edges. As seen before, if the graph $G$ has such a cycle then switching will not remove it, thus we can reject. 

If the graph has a $RB^{2p-1}R$ walk and is a positive instance, then we claim that we need to switch one of the four vertices incident with the red edges. Indeed, if we switch only at the vertices inside the blue walk (those not incident with one of the red edges) then the parity of the number of blue edges will not change and we will still have some maximal odd blue subwalk, the two edges next to the extremities being red. Thus we would still have a $RB^{2q-1}R$ path.

Thus, since we need to switch at one of these four vertices, we branch on this configuration using the classic bounded search tree technique. This is an FPT algorithm.
\end{proof}

\subsection{\SWITCH{$H$}: \W-hard cases}


The remaining cases, $H^{2rb}_{r,b}$, $H^{2rb}_{r,-}$ and $H^{2rb}_{r,r}$, yield \W-hard \SWITCH{$H$} problems, even for input graphs of large girth (the \emph{girth} of a graph is the smallest length of one of its cycles, and by the girth of an edge-coloured graph we mean the girth of its underlying uncoloured graph):

\begin{theorem}
  \label{thm:W1-general}
  Let $x\in\{r,b,-\}$. Then for any integer $g\geq 3$, the problem \SWITCH{$H_{r,x}^{2br}$} is $\W$-hard, even for graphs $G'$ with girth at least $g$ and that would map to $H_{r,x}^{2br}$ if the number of switchings was unbounded.
Under the same conditions, \SWITCH{$H_{r,x}^{2br}$} cannot be solved in time $f(k)|G|^{o(k)}$ for any computable function  $f$, assuming the ETH.
\end{theorem}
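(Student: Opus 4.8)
The plan is to prove \W-hardness by giving, for each $x\in\{r,b,-\}$ and each fixed $q\ge 3$, a polynomial-time reduction from \textsc{Multicoloured Independent Set} parameterised by the number $k$ of colour classes (a problem that is \W-hard and, under the ETH, admits no $f(k)n^{o(k)}$ algorithm, which follows from the corresponding bound for $k$-\textsc{Clique}). The reduction will output an instance of \SWITCH{$H^{2rb}_{r,x}$} with parameter $k'$ linear in $k$, of size polynomial in the input (with exponent independent of $k$), and with girth at least $q$; an $f(k')|G'|^{o(k')}$ algorithm for \SWITCH{$H^{2rb}_{r,x}$} would then give an $f'(k)n^{o(k)}$ algorithm for \textsc{Multicoloured Independent Set}, contradicting the ETH, and \W-hardness transfers likewise. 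We use a single construction whose skeleton is common to the three targets and adjust only the local gadgets to the obstruction set of $H^{2rb}_{r,x}$: recall from~\cite{BBM05} that a graph maps to $H^{2rb}_{r,r}$ iff its blue subgraph is bipartite, while $H^{2rb}_{r,b}$ (and, with small changes, $H^{2rb}_{r,-}$) is governed by the \emph{odd figure eight} obstruction; in all three cases the presence of both a red and a blue edge between the two vertices of $H$ means that ``mixed'' edges carry no constraint, so the backbone only needs to enforce a bipartition-type condition.

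Given colour classes $V_1,\dots,V_k$, each of size $n$, we build for every $i$ a \emph{selection gadget}: a long cycle decomposed into $n$ consecutive segments $Q_{i,1},\dots,Q_{i,n}$, each a path of length about $q$, coloured so that the cycle is an $H^{2rb}_{r,x}$-obstruction entirely contained in the gadget (for $x=r$, a blue cycle of odd length) as long as no switch is performed inside it. Removing an obstruction contained in the gadget requires flipping one of its edges, hence switching one of its vertices; we set the budget $k':=k$ (plus a constant number of forced ``structural'' switches per gadget if needed, keeping $k'=O(k)$), which forces exactly one switched vertex per selection gadget, and we read the choice ``$V_i$ picks its $j$-th vertex'' from the segment $Q_{i,j}$ containing it. For each edge $ab$ of the input graph with $a\in V_i$ and $b\in V_{i'}$ we add a \emph{consistency gadget} attached to $Q_{i,a}$ and $Q_{i',b}$, engineered so that switching inside $Q_{i,a}$ turns one of its incident red edges blue in a way that closes a new $H^{2rb}_{r,x}$-obstruction \emph{precisely when} a vertex inside $Q_{i',b}$ is also switched, while nothing bad happens when at most one of the two segments is hit. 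Finally, we subdivide every edge an appropriate odd number of times so that $G'$ has girth at least $q$; odd subdivisions increase all cycle lengths while preserving the parities on which the obstruction arguments rely.

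Correctness then reduces to two implications. If $\{a_i\in V_i\}$ is an independent transversal of the input graph, switching one interior vertex of each $Q_{i,a_i}$ destroys every selection obstruction, and, since the $a_i$ are pairwise non-adjacent, no consistency gadget has both of its segments hit, so no new obstruction appears and the switched graph maps to $H^{2rb}_{r,x}$ using exactly $k'$ switches. Conversely, a switch set $S$ with $|S|\le k'$ must, by the budget argument, consist of exactly one vertex per selection gadget; reading off $a_i$ from the relevant segment, the absence of any new obstruction forces, for every input edge $ab$, that $Q_{i,a}$ and $Q_{i',b}$ are not both hit, so $\{a_i\}$ is independent. I expect the crux to be the design of the consistency gadget together with the \emph{tightness} of the budget: one must show that no unintended global use of switches (sharing a vertex between several obstructions, or exploiting overlaps between gadgets) can beat the bound, and that the only relevant ``side effect'' of a selection switch -- the red edges at that vertex becoming blue -- is channelled exactly through the intended consistency gadget. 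The remaining technical burden is to carry this out uniformly for all three $x\in\{r,b,-\}$ (recolouring the selection and consistency gadgets to produce figure-eight-type obstructions when $x\in\{b,-\}$) and to keep the girth and parity bookkeeping valid under these changes and under the subdivisions.
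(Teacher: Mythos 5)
Your high-level architecture is the right one and matches the paper's: reduce from \textsc{Multicoloured Independent Set}, use one ``selection'' gadget per colour class that is an obstruction forcing at least one switch (so the budget $k$ forces exactly one per class), add per-edge gadgets that create a new obstruction exactly when both endpoints are selected, and control girth by making all gadget cycles long. The parameter is preserved linearly and the reduction is polynomial, so the ETH transfer is fine in principle. However, the proposal has a genuine gap: the entire technical content of the theorem is the construction and verification of the gadgets, and you leave the crucial one -- the consistency gadget -- explicitly ``to be engineered.'' Worse, the one concrete design decision you do make undermines it. You take the selection gadget to be a single obstruction cycle partitioned into segments $Q_{i,1},\dots,Q_{i,n}$ and read the selected vertex off from \emph{which segment} contains the switched vertex. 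A switch at an interior vertex $w$ of a segment only recolours the edges incident to $w$, so a consistency gadget can only ``detect'' the switch if it is attached at $w$ itself; to detect a switch anywhere in a length-$\approx q$ segment it would have to be attached at every vertex of that segment, which conflicts with the girth requirement, blows up the case analysis of which pairs of attachment points close an obstruction, and is not what your description (``turns one of its incident red edges blue'') provides. The paper avoids this by making the partition gadget an even all-blue cycle with a chord between antipodal vertices: it contains \emph{two} all-blue odd cycles sharing the chord, so the unique allowed switch is forced onto one of the $|V_i|$ chord vertices, and the edge gadgets are then attached directly at those designated vertices, with a global condition $(SP)$ guaranteeing that every minimal obstruction after switching a valid set lies inside a single gadget.

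A second over-optimistic point is the claim that the cases $x=b$ and $x=-$ follow by ``recolouring'' the $x=r$ gadgets. The obstruction for $H^{2rb}_{r,b}$ is an alternating odd figure eight and for $H^{2rb}_{r,-}$ a different kind of bad closed walk; in the paper these require qualitatively different partition and edge gadgets (e.g.\ two figure-eight obstructions glued along a path of special vertices), and the verification of properties $(P2)$, $(E1)$, $(E2)$ and $(SP)$ there rests on arguments about monochromatic length-two paths blocking bad walks -- none of which is a parity relabelling of the bipartiteness argument. Your proposed edge subdivisions would also disturb the alternation patterns these obstructions depend on. So while the skeleton is correct, the proof as written is incomplete and its one committed design choice would need to be replaced by something like the paper's chord-based forcing mechanism.
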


We will prove Theorem~\ref{thm:W1-general} by three reductions from \textsc{Multicoloured Independent Set}, which is \W-complete~\cite{PIETRZAK2003} and defined as follows.

\problemparam{\textsc{Multicoloured Independent Set}}{A graph $G$, an integer $k$ and a partition of $V(G)$ into $k$ sets $V_1$,\dots,$V_k$.}{$k$.}{Is there a set $S$ of exactly $k$ vertices of $G$, such that each $V_i$ contains exactly one element of $S$, that forms an independent set of $G$?}

Our three reductions (one for each possible choice of $x$) follow the
same pattern. In Section~\ref{sub:generic}, we describe this idea,
together with the required properties of the gadgets. In
Sections~\ref{sub:double-coton-tige},~\ref{sub:double-sucette}
and~\ref{sub:double-coton-tige-sale}, we show how to construct the
gadgets. Since the reduction preserves the parameter and is actually polynomial, the ETH-based lower bound follows from \cite{Chen2006}.

\subsubsection{Generic reduction}
\label{sub:generic}

Let $(G,k)$ be an instance of \textsc{Multicoloured Independent Set}, and denote
by $V_1,\ldots,V_k$ the partition of $G$. We begin by replacing each
$V_i$ by a \emph{partition gadget} $G_i$. This gadget must have
$|V_i|$ special vertices $x_j \in V_i$, in order to associate a vertex of $G_i$
to each vertex of $V_i$. Moreover, $G_i$ must satisfy the following:
\begin{itemize}
\item[$(P1)$] We do not have $G_i\ecto H_{r,x}^{2rb}$.
\item[$(P2)$] If we switch $G_i$ at exactly one vertex $v$, then the
  obtained graph maps to $H_{r,x}^{2rb}$ (without switching) if
  and only if $v$ is one of the special vertices of $G_i$.
\item[$(P3)$] $G_i$ has girth at least $g$.
\item[$(P4)$] $G_i$ has two \emph{reset vertices} $x$ and $y$ that are different from the $x_i$'s and such that $G_i$ switched at $x$ and $y$ maps to $H_{r,x}^{2rb}$ (without further switching).
\end{itemize}

Let $uv$ be an edge of $G$. Recall that $u$ and $v$ can be seen as
vertices of $G'$. We then add an \emph{edge gadget} $G_{uv}$ between $u$
and $v$. This gadget must satisfy the following:
\begin{itemize}
\item[$(E1)$] Let $H$ be the graph obtained from $G_{uv}$ by switching
  at a subset $S$ of $\{u,v\}$. Then, $H\ecto H_{r,x}^{2rb}$ if
  $S\neq\{u,v\}$.
\item[$(E2)$] Assume that $u \in V_i$ and $v \in V_j$ and let $H$ be
  the graph obtained from $G_{uv}\cup G_i\cup G_j$ by switching $u$
  and $v$. Then, we do \emph{not} have $H\ecto H_{r,x}^{2rb}$.
\item[$(E3)$] $G_e$ has girth at least $g$.
\item[$(E4)$] In $G_e$, $u$ and $v$ are at distance at least $g$.
\end{itemize}

Let $G'$ be the graph obtained from $G$ by replacing each $V_i$ by a
partition gadget $G_i$, and each edge $uv$ by an edge gadget $G_{uv}$
such that for every $u\in V_i$ and $v$ such that $uv$ is an edge, we
identify the special vertex $u$ in $G_i$ with the special vertex $u$
in $G_{uv}$. (Note in particular that every vertex of $G$ is present in
$G'$.)

We say that a set $S$ of vertices of $G$ is \emph{valid} if, when seen
in $G'$, it contains at most one special vertex in each edge
gadget. We need a last condition about $G'$:
\begin{itemize}
\item[$(SP)$] If, after switching a valid set in $G'$, the obtained
  graph does not map to $H_{r,x}^{2rb}$, then this is because a partition gadget or an edge gadget does not map to $H_{r,x}^{2rb}$ (that is, each minimal obstruction is entirely contained in an edge gadget or a partition gadget).
\end{itemize}
With this Property~$(SP)$, we can prove that $(G,k)\mapsto (G',k)$ is
a valid reduction. 
\begin{proposition}
  $(G',k)$ is a positive instance of \SWITCH{$H_{r,x}^{2rb}$} if and
  only if $(G,k)$ is a positive instance of \textsc{Multicoloured Independent Set}.
\end{proposition}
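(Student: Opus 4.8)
The plan is to prove the equivalence by establishing each direction separately, relying crucially on Property $(SP)$ to localise every obstruction inside a single gadget.

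\medskip

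\textbf{From \textsc{Multicoloured Independent Set} to \SWITCH{$H_{r,x}^{2rb}$}.} Suppose $(G,k)$ is a positive instance, and let $S=\{s_1,\ldots,s_k\}$ be a multicoloured independent set with $s_i\in V_i$. Viewing $S$ as a set of vertices of $G'$, I first check that $S$ is \emph{valid}: since $S$ is an independent set, no edge $uv$ of $G$ has both endpoints in $S$, so the corresponding edge gadget $G_{uv}$ contains at most one special vertex of $S$. Now switch $G'$ at the vertices of $S$. I claim the resulting graph maps to $H_{r,x}^{2rb}$. By Property $(SP)$, it suffices to verify that no partition gadget and no edge gadget produces an obstruction. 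For the partition gadget $G_i$: exactly one of its special vertices, namely $s_i$, is switched, and the edge gadgets attached to special vertices of $G_i$ only switch vertices that, restricted to $G_i$, are special; so by $(P2)$, $G_i$ maps to $H_{r,x}^{2rb}$ after switching (I need to observe here that switching several distinct special vertices of $G_i$ is still fine — this should follow from the intended construction of $(P2)$, or be stated as such; only $s_i$ lies in $V_i\cap S$ but other special vertices of $G_i$ associated to edges may also be switched, so I must make sure $(P2)$ or an easy strengthening covers switching an arbitrary subset of special vertices). For an edge gadget $G_{uv}$: the switched subset $S\cap\{u,v\}$ has size at most one (validity), hence it is $\neq\{u,v\}$, and $(E1)$ gives that $G_{uv}$ maps to $H_{r,x}^{2rb}$ after switching. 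Combining via $(SP)$, the switched $G'$ maps to $H_{r,x}^{2rb}$, so $(G',k)$ is a positive instance.

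\medskip

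\textbf{From \SWITCH{$H_{r,x}^{2rb}$} to \textsc{Multicoloured Independent Set}.} Suppose $(G',k)$ is a positive instance; let $T$ be a set of at most $k$ vertices of $G'$ such that switching at $T$ yields a graph mapping to $H_{r,x}^{2rb}$. The first step is to argue that $T$ can be taken to be valid and to consist of special vertices, one per partition gadget. Consider a partition gadget $G_i$. After switching at $T$, the induced sub-instance on $G_i$ is obtained by switching $G_i$ at $T\cap V(G_i)$ (a switch outside $G_i$ never changes an edge inside $G_i$), and must map to $H_{r,x}^{2rb}$. By $(P1)$, $T\cap V(G_i)\neq\emptyset$, so $T$ meets every $G_i$; since there are $k$ partition gadgets and $|T|\le k$, we get $|T\cap V(G_i)|=1$ for each $i$, and these are the only vertices of $T$. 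By $(P2)$, the unique vertex of $T$ in $G_i$ must be one of its special vertices — call it $s_i$, which corresponds to a vertex of $V_i$. Set $S=\{s_1,\ldots,s_k\}$; this has exactly one vertex per part. It remains to show $S$ is independent in $G$. Suppose not, so some edge $uv$ of $G$ has $u=s_i$ and $v=s_j$ with $i\neq j$. Then in $G'$ the switching set $T$ restricted to $G_{uv}\cup G_i\cup G_j$ switches exactly $u$ and $v$ (the special vertices identified across the gadgets), so by $(E2)$ this union does \emph{not} map to $H_{r,x}^{2rb}$ — contradicting that the whole switched $G'$ does. Hence $S$ is a multicoloured independent set and $(G,k)$ is a positive instance.

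\medskip

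\textbf{Main obstacle.} The routine part is the bookkeeping above; the delicate point — and the reason the gadget \emph{constructions} in the later subsections are needed — is twofold. First, Property $(SP)$ is doing all the heavy lifting: it is precisely what lets me reason gadget-by-gadget rather than wrestling with a global homomorphism, and establishing $(SP)$ for $G'$ (which is why conditions $(P3)$, $(E3)$, $(E4)$ on girth and distances are imposed — large girth forces every short obstruction to be local) is the real content, presumably discharged once the gadgets are in hand. Second, I must be careful in the forward direction that switching at several special vertices of the \emph{same} partition gadget (one from $S$, possibly others coming from incident edge gadgets that were identified) still leaves $G_i$ mapping to $H_{r,x}^{2rb}$; strictly, $(P2)$ as stated only covers switching at exactly one vertex, so either the gadget constructions will make the special vertices "independent" enough that switching any subset of them is harmless, or a mild reformulation of $(P2)$ is used. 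I would flag this and rely on the explicit constructions in Sections~\ref{sub:double-coton-tige},~\ref{sub:double-sucette} and~\ref{sub:double-coton-tige-sale} to settle it. The ETH lower bound and $\W$-hardness then follow immediately because the map $(G,k)\mapsto(G',k)$ is polynomial and preserves the parameter $k$, and \textsc{Multicoloured Independent Set} is $\W$-complete with no $f(k)n^{o(k)}$ algorithm under ETH.
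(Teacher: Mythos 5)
Your proof is correct and follows essentially the same route as the paper: $(P1)$ and $(P2)$ force the switching set to consist of exactly one special vertex per partition gadget, $(E2)$ forces independence, and in the converse direction $(SP)$ localises obstructions to gadgets, which are then ruled out by $(P2)$ and $(E1)$. The worry you flag about switching several special vertices of the same $G_i$ is vacuous: the special vertices of $G_i$ are precisely the vertices of $V_i$, and a multicoloured independent set contains exactly one vertex of $V_i$, so only $s_i$ is ever switched in $G_i$ and $(P2)$ as stated suffices.
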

\begin{proof}
  Assume we can switch at most $k$ vertices of $G'$ such that the
  obtained graph maps to $H_{r,x}^{2rb}$. Let $S$ be the set of those
  vertices. We claim that $S$ is a valid set of $G'$. First note that,
  due to $(P1)$, $S$ must contain at least one vertex in each
  $V_i$. This enforces $|S|=k$, thus $S$ contains exactly one vertex
  $v_i$ in each $V_i$. By $(P2)$, each of these $v_i$ has to be one of the special vertices of $G_i$.
  This means that $S$ contains only vertices that are present
  in $G$.
    
  We claim that $S$ induces an independent set in $G$. Assume by contradiction that
  there is an edge $uv$ in $G$ with $u,v\in S$. Then, by
  construction, there is an edge gadget whose special vertices are $u$
  and $v$, such that the edge gadget and the two partition gadgets
  associated with $u$ and $v$ map to $H_{r,x}^{2rb}$ when we switch
  only at $u$ and $v$, contradicting $(E2)$. (Note that $S$
  does not contain any other vertex of the edge gadget nor any other
  vertex of the partition gadgets.) Therefore, $G$ has an
  independent set of size $k$ containing exactly one vertex in each
  set $V_i$.

  Conversely, assume that $G$ has an independent set $S$ intersecting
  each $V_i$ at one vertex. Then, we denote by $H$ the graph obtained
  by switching all vertices of $S$ in $G'$. By construction, this is a
  valid set, hence by $(SP)$ every obstruction for mapping to $H_{r,x}^{2rb}$ in $H$ is actually
  contained in some gadget. However, it cannot be contained in a
  partition gadget due to $(P2)$, nor in an edge gadget due to
  $(E1)$. Therefore, we have $H\ecto H_{r,x}^{2rb}$.
\end{proof}

Observe moreover that, due to $(P3)$, $(E3)$ and $(E4)$, $G'$ has
girth at least $g$. Moreover, let $S$ be the set of all reset vertices
of $G'$. Let $H$ be the $2$-coloured graph obtained by switching every
vertex of $S$. By $(P4)$, no partition gadget in $H$ contains an
obstruction. Furthermore, no edge gadget contains an obstruction by
$(E1)$. Therefore, using $(SP)$, we obtain that $H$ does not contain
any obstruction, hence $H\ecto H_{r,x}^{2rb}$. Thus to prove
Theorem~\ref{thm:W1-general} it suffices to construct the gadgets.


\subsubsection{Gadgets for $H_{r,r}^{2rb}$}
\label{sub:double-coton-tige}

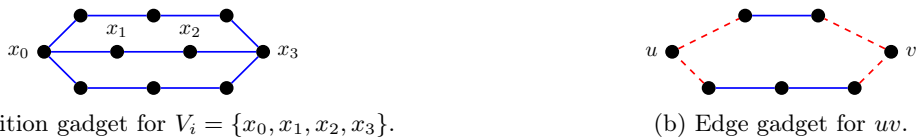
\begin{figure}[!htpb]
  \begin{subfigure}[t]{0.45\textwidth}
        \centering
      \scalebox{0.8}{\begin{tikzpicture}[every loop/.style={},scale=1.2]
      
      \node[blackvertex,label=left:$x_0$] (a0) at (0,0) {};
      \node[blackvertex,label=above:$x_1$] (a1) at (1,0) {};
      \node[blackvertex,label=above:$x_2$] (a2) at (2,0) {};
      \node[blackvertex,label=right:$x_3$] (a3) at (3,0) {};
      \node[blackvertex] (t1) at (0.5,0.5) {};
      \node[blackvertex,label=above:$r_1$] (t2) at (1.5,0.5) {};%
      \node[blackvertex] (t3) at (2.5,0.5) {};
      
      \node[blackvertex] (b1) at (0.5,-0.5) {};
      \node[blackvertex,label=above:$r_2$] (b2) at (1.5,-0.5) {};
      \node[blackvertex] (b3) at (2.5,-0.5) {};
      
      \draw[thick,blue] (a0) -- (t1) -- (t2) -- (t3) -- (a3) -- (b3) -- (b2) -- (b1) -- (a0);
      \draw[thick,blue] (a0) -- (a1) -- (a2) -- (a3);
	  
      \end{tikzpicture}}
      \caption{Partition gadget for $V_i =\{x_0,x_1,x_2,x_3\}$ with the two reset vertices $r_1$, $r_2$.}
          \label{fig:partitionrr}
    \end{subfigure}\hfill
  \begin{subfigure}[t]{0.45\textwidth}
        \centering
   \scalebox{0.8}{\begin{tikzpicture}[every loop/.style={},scale=1.2]
      
      \node[blackvertex,label=left:$u$] (a0) at (0,0) {};
      \node[blackvertex,label=right:$v$] (a3) at (3,0) {};
      
      \node[blackvertex] (t1) at (1,0.5) {};
      \node[blackvertex] (t3) at (2,0.5) {};
      
      \node[blackvertex] (b1) at (0.5,-0.5) {};
      \node[blackvertex] (b2) at (1.5,-0.5) {};
      \node[blackvertex] (b3) at (2.5,-0.5) {};
      
      \draw[thick,blue]  (t1) --  (t3);
      \draw[thick,blue] (b3) -- (b2) -- (b1) ;
      \draw[thick,red,dashed] (b1) -- (a0) -- (t1);
      \draw[thick,red,dashed] (b3) -- (a3) -- (t3);
   \end{tikzpicture}}
   \caption{Edge gadget for $uv$.}
       \label{fig:edgerr}
    \end{subfigure}
    \caption{Partition and edge gadgets in the $H_{r,r}^{2rb}$-reduction when $g = 3$.}

  \end{figure}
   
  We now describe the gadgets for \SWITCH{$H_{r,r}^{2rb}$}. As mentioned in Lemma~\ref{lemma:duality-H_{r,r}^{2rb}}, for every $2$-edge-coloured graph
$G$, we have $G\ecto H_{r,r}^{2rb}$ if and only if it does not
contain an all-blue odd cycle.

The partition gadget $G_i$ is an all-blue cycle of length $2g$ if
$g$ and $|V_i|$ have the same parity (resp. $2g +2$ is they do not have
the same parity) with a chord of order $|V_i|$ between two antipodal
vertices. The special vertices are those on the
chord (see Figure~\ref{fig:partitionrr}).
The reset vertices are defined as any two vertices on the initial cycle, one on each side of the chord.

Property $(P3)$ directly follows from the construction. Moreover,
since $G_i$ contains an all-blue odd cycle, we have $(P1)$. If we
switch $G_i$ at exactly one vertex, then either this vertex is a
special vertex and the obtained graph does not have any all-blue odd
cycle (and thus maps to $H_{r,r}^{2rb}$), or it is not a special
vertex and there is still an all-blue odd cycle. Therefore,
property $(P2)$ also holds.

Finally, if we switch at the two reset vertices, then there is no more all-blue odd cycle, thus $(P4)$ also holds.

We now consider the edge gadget. It is formed by an all-blue odd
cycle of length $2g+1$ where two vertices $u,v$ at distance $g$ have
been switched (see Figure~\ref{fig:edgerr}). These vertices are the
special vertices of the gadget. By construction, properties $(E3)$ and
$(E4)$ hold. Moreover, consider a set $S\subset\{u,v\}$. The only way
for switching the vertices of $S$ to yield a graph containing an
all-blue odd cycle is to switch both $u$ and $v$. This proves
$(E1)$. If we switch at both special vertices then
we do not have $G_{uv} \ecto H^{2rb}_{r,r}$, which implies $(E2)$.

It remains to prove Property~$(SP)$. Let $S$ be a valid set, and
let $H$ be the graph obtained from $G'$ when switching all vertices of
$S$. Assume that $H$ contains an all-blue odd cycle. Since $S$ is
valid set, at most one vertex has been switched in each edge
gadget. Therefore, no all-blue odd cycle of $H$ can contain an
edge from an edge gadget. It is thus contained in some partition
gadget, ensuring that $(SP)$ holds.

\subsubsection{Gadgets for $H_{r,-}^{2rb}$}
\label{sub:double-sucette}

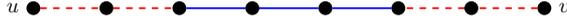
\begin{figure}[!htpb]
    \centering
    \scalebox{0.8}{\begin{tikzpicture}[every loop/.style={},scale=1.2]
      
      \node[blackvertex,label=left:$u$] (a0) at (-1,0) {};
      \node[blackvertex,label=right:$v$] (a3) at (6,0) {};
      
      \node[blackvertex] (b0) at (0,0) {};
      \node[blackvertex] (b1) at (1,0) {};
      \node[blackvertex] (b2) at (2,0) {};
      \node[blackvertex] (b3) at (3,0) {};
      \node[blackvertex] (b4) at (4,0) {};
      \node[blackvertex] (b5) at (5,0) {};
      
      \draw[thick, blue] (b1) -- (b2) -- (b3) -- (b4);
      \draw[thick, red, dashed] (a0) --(b0) --  (b1);
      \draw[thick, red, dashed] (a3) -- (b5) -- (b4);


    \end{tikzpicture}}
    \caption{The edge gadget for $uv$ in the $H_{r,-}^{2rb}$-reduction when $g = 6$.}
    \label{fig:edger-}
  \end{figure}

  We now describe the gadgets for \SWITCH{$H_{r,-}^{2rb}$}. As mentioned in lemma~\ref{lemma:duality-H_{r,-}^{2rb}},
  for every $2$-edge-coloured graph $G$, we have $G\ecto H_{r,-}^{2rb}$ if and only if
  it does not contain a \emph{bad} walk, i.e. an odd figure eight $v_0$, $v_1$,  \dots, $v_{2j}$, $v_0$, $v_{2j+2}$, \dots, $v_{2p-1}$, $v_0$ such
  that all edges $v_{2i}v_{2i+1}$ are blue~\cite{BBM05}.

The partition gadget $G_i$ is the same as in the previous case (see
Figure~\ref{fig:partitionrr}).

The edge gadget is an odd path of length at least $g$, whose edges are
all blue except for the two first and two last ones (see
Figure~\ref{fig:edger-}).

Since the partition gadget $G_i$ is the same as for $H_{r,r}^{2rb}$, Property $(P3)$ still
holds. Moreover, since all-blue odd-cycles still are obstructions,
we have $(P1)$.

Observe that if a graph $H$ contains an obstruction, then so does its
subgraph obtained by removing recursively its leaves. Note that
switching exactly one vertex $v$ in $G_i$ makes its neighborhood
all-red. Therefore, $v$ cannot be contained in a bad walk
anymore. In this case, the obstruction is contained in a possibly
empty cycle $C_v$ (obtained by removing from $G_i$ the vertex $v$ and
the leaves of $G_i$ recursively).

If we switch $G_i$ at exactly one vertex, then either this vertex is a
special vertex and $C_v$ is empty or an all-blue even cycle (and
thus maps to $H_{r,-}^{2rb}$), or it is not a special vertex and $C_v$
is still an all-blue odd cycle. Therefore, property $(P2)$ also
holds.

Finally, if we switch at the two reset vertices $u,v$, then $G_i\setminus\{u,v\}$ is a tree, thus $G_i$ does not contain any obstruction, hence $(P4)$ also holds.

By construction, properties $(E3)$ and
$(E4)$ hold. Moreover, observe that the edge gadget does not contain a
bad walk since it is a path. Thus $(E1)$ holds. If $H$ is the graph
defined in property $(E2)$ then there is a bad walk starting from $u$,
then turning around one odd cycle in the partition gadget containing
$u$, crossing the edge gadget to $v$, taking a similar turn around an
odd cycle of the partition gadget containing $v$ and then going back
to $u$ by the edge gadget. So $(E2)$ holds. 

It remains to prove $(SP)$. Let $S$ be a valid set, and
$H$ be the graph obtained from $G'$ by switching $S$. Observe that no
bad walk contains to consecutive red edges. Moreover, in $H$,
every edge gadget contains two such edges (since its two endpoints
cannot be both in $S$). Therefore, no bad walk crosses an edge gadget
$G_{uv}$, which implies that no bad walk contains edges in
$G_{uv}$. Hence, every bad walk is contained in some partition gadget,
thus ensuring that $(SP)$ holds.

\subsubsection{Gadgets for $H_{r,b}^{2rb}$}
\label{sub:double-coton-tige-sale}

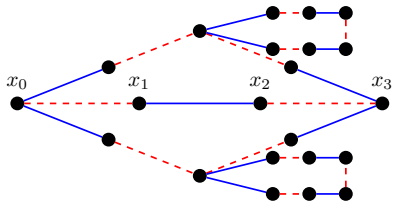
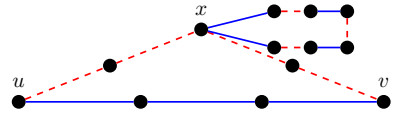
\begin{figure}[!htpb]
  \begin{subfigure}[t]{0.45\textwidth}
    \centering
    \scalebox{0.8}{\begin{tikzpicture}[every loop/.style={},scale=1.2]
\tikzstyle{n} = [blackvertex];
\tikzstyle{b} = [thick, color=blue]; 
\tikzstyle{r} = [thick, color=red, dashed];
\def\ratio{100} 

\node[n, label=above:$x_0$] (v1) at (-250.00/\ratio,50.00/\ratio) {};
\node[n, label=above:$x_3$] (v3) at (250.00/\ratio,50.00/\ratio) {};

\node[n, label=above:$x_1$] (v48) at (-83.00/\ratio,50.00/\ratio) {};
\node[n, label=above:$x_2$] (v49) at (83.00/\ratio,50.00/\ratio) {};
\node[n] (v50) at (-125.00/\ratio,100.00/\ratio) {};
\node[n,label=above:$r_1$] (v51) at (0.00/\ratio,150.00/\ratio) {};
\node[n] (v52) at (125.00/\ratio,100.00/\ratio) {};
\node[n] (v53) at (100/\ratio,175/\ratio) {};
\node[n] (v54) at (150/\ratio,175/\ratio) {};
\node[n] (v55) at (200/\ratio,175/\ratio) {};
\node[n] (v56) at (100/\ratio,125/\ratio) {};
\node[n] (v57) at (150/\ratio,125/\ratio) {};
\node[n] (v58) at (200/\ratio,125/\ratio) {};

\node[n] (b50) at (-125.00/\ratio,00.00/\ratio) {};
\node[n,label=above:$r_2$] (b51) at (0.00/\ratio,-50.00/\ratio) {};
\node[n] (b52) at (125.00/\ratio,00.00/\ratio) {};
\node[n] (b53) at (100/\ratio,-75/\ratio) {};
\node[n] (b54) at (150/\ratio,-75/\ratio) {};
\node[n] (b55) at (200/\ratio,-75/\ratio) {};
\node[n] (b56) at (100/\ratio,-25/\ratio) {};
\node[n] (b57) at (150/\ratio,-25/\ratio) {};
\node[n] (b58) at (200/\ratio,-25/\ratio) {};

\draw (v1) edge[b] (v50);
\draw (v50) edge[r] (v51);
\draw (v51) edge[r] (v52);
\draw (v52) edge[b] (v3);
\draw (v53) edge[r] (v54);
\draw (v56) edge[r] (v57);
\draw (v55) edge[r] (v58);
\draw (v53) edge[b] (v51);
\draw (v51) edge[b] (v56);
\draw (v54) edge[b] (v55);
\draw (v57) edge[b] (v58);
\draw (v1) edge[r] (v48);
\draw (v48) edge[b] (v49);
\draw (v49) edge[r] (v3);

\draw (v1) edge[b] (b50);
\draw (b50) edge[r] (b51);
\draw (b51) edge[r] (b52);
\draw (b52) edge[b] (v3);
\draw (b53) edge[r] (b54);
\draw (b56) edge[r] (b57);
\draw (b55) edge[r] (b58);
\draw (b53) edge[b] (b51);
\draw (b51) edge[b] (b56);
\draw (b54) edge[b] (b55);
\draw (b57) edge[b] (b58);
	  
    \end{tikzpicture}}
    \caption{Partition gadget for $V_i =\{x_0,x_1,x_2,x_3\}$, with the two reset vertices $r_1$, $r_2$.}
        \label{fig:partitionrb}
    \end{subfigure}\hfill
  \begin{subfigure}[t]{0.45\textwidth}
    \scalebox{0.8}{\begin{tikzpicture}[every loop/.style={},scale=1.2]
      \tikzstyle{n} = [blackvertex];
\tikzstyle{b} = [thick, color=blue]; 
\tikzstyle{r} = [thick, color=red, dashed];
\def\ratio{100} 

\node[n, label=above:$u$] (v1) at (-250.00/\ratio,50.00/\ratio) {};

\node[n, label=above:$v$] (v3) at (250.00/\ratio,50.00/\ratio) {};

\node[n] (v48) at (-83.00/\ratio,50.00/\ratio) {};
\node[n] (v49) at (83.00/\ratio,50.00/\ratio) {};
\node[n] (v50) at (-125.00/\ratio,100.00/\ratio) {};
\node[n, label=above:$x$] (v51) at (0.00/\ratio,150.00/\ratio) {};
\node[n] (v52) at (125.00/\ratio,100.00/\ratio) {};
\node[n] (v53) at (100/\ratio,175/\ratio) {};
\node[n] (v54) at (150/\ratio,175/\ratio) {};
\node[n] (v55) at (200/\ratio,175/\ratio) {};
\node[n] (v56) at (100/\ratio,125/\ratio) {};
\node[n] (v57) at (150/\ratio,125/\ratio) {};
\node[n] (v58) at (200/\ratio,125/\ratio) {};

\draw (v1) edge[r] (v50);
\draw (v50) edge[r] (v51);
\draw (v51) edge[r] (v52);
\draw (v52) edge[r] (v3);
\draw (v53) edge[r] (v54);
\draw (v56) edge[r] (v57);
\draw (v55) edge[r] (v58);
\draw (v53) edge[b] (v51);
\draw (v51) edge[b] (v56);
\draw (v54) edge[b] (v55);
\draw (v57) edge[b] (v58);
\draw (v1) edge[b] (v48);
\draw (v48) edge[b] (v49);
\draw (v49) edge[b] (v3);
    \end{tikzpicture}}
    \caption{Edge gadget for $uv$. The vertex $x$ is where the two alternating cycles were identified.}
    \label{fig:edgerb}
  \end{subfigure}
\caption{Partition and edge gadgets in the $H_{r,b}^{2rb}$-reduction when $g = 3$.}
  \end{figure}

We now describe the gadgets for \SWITCH{$H_{r,b}^{2rb}$}. As mentioned in Lemma~\ref{lemma:duality-H_{r,b}^{2rb}}, for every $2$-edge-coloured graph
$G$, we have $G\ecto H_{r,b}^{2rb}$ if and only if it does not
contain \emph{alternating odd figure eight}, that is, an alternating
  closed walk $v_0$, $v_1$,  \dots, $v_{2j}$, $v_0$, $v_{2j+2}$, \dots, $v_{2p-1}$, $v_0$ for some integers $j$ and $p$~\cite{BBM05}.

The partition gadget $G_i$ is defined by gluing two obstructions with
large girth along a path of length $|V_i|$ (see
Figure~\ref{fig:partitionrb}). More precisely, consider an alternating
odd cycle $C$ of size $|V_i|+g$ (or $|V_i|+g+1$). Note that $C$
contains a vertex $u$ adjacent to two red edges. We attach an
alternating odd cycle $C'$ of length $g$ (or $g+1$) to $u$, such that
the edges of $C'$ adjacent to $u$ are blue. To obtain $G_i$, we
take two copies of this obstruction, and glue their respective largest
cycle along a path of length $|V_i|$. The vertices of this path are
the special vertices of $G_i$, and the two copies of $u$ are the reset vertices of $G_i$.

The edge gadget is formed by identifying the vertices with
monochromatic neighbourhood of two alternating odd cycles of length
$2g+1$, in such a way that the common vertex has two blue edges in
one cycle and two red edges in the other one. To obtain the edge
gadget, we switch this graph at two vertices $u,v$ in the same cycle,
at distance $g$ from each other (see Figure \ref{fig:edgerb}).

Observe that $G_i$ has girth at least $g$, hence Property $(P3)$
holds. Moreover, by construction, $G_i$ contains an obstruction, hence
$(P1)$ holds. Note that there are exactly two (minimal) obstructions
in $G_i$, the ones used to construct it. Therefore, if we switch $G_i$
at a non-distinguished vertex, one of the these obstructions is
unchanged, and the obtained graph does not map to
$H_{r,b}^{2rb}$. Conversely, assume that we switch $G_i$ at a
distinguished vertex $u$ and there remains an obstruction. Note that
all the paths of length two starting from $u$ are now monochromatic,
hence no alternating odd figure eight can go through $u$. This implies that every alternating odd figure eight
in this graph does not use the internal vertices of the chord. When
removing these vertices from $G$, the former endpoints of the chord
have monochromatic neighborhood, hence they cannot be contained in an alternating odd figure eight. Removing the whole chord and (recursively) the leaves of
$G_i$ gives two disjoint alternating odd cycles, which do not contain
any alternating odd figure eight. Thus we have $(P2)$.

Finally, if we switch the two reset vertices of $G_i$, all the paths of length~$2$ starting at these vertices are monochromatic, hence no alternating odd figure eight goes through them. Removing the reset vertices, and recursively the obtained leaves gives the empty graph. Therefore, there is no alternating odd figure eight in $G_i$, it thus maps to $H_{r,b}^{2rb}$, and $(P4)$ holds.

The construction of the edge gadget ensures that $(E3)$ and $(E4)$
are satisfied. Moreover, if we switch at $u$ and $v$, we obtain an
obstruction, ensuring that $(E2)$ holds. Finally, let $H$ be the graph
obtained from $G_{uv}$ by possibly switching $v$. Then every path of
length two starting at $u$ is monochromatic, hence no alternating odd figure eight in $H$
contains $u$. Removing $u$ and leaves of $H$ yields an alternating odd
cycle, which does not contain any alternating odd figure eight. Therefore, $H$ maps to
$H_{r,b}^{2rb}$, and by exchanging $u$ with $v$, we obtain $(E1)$.

It remains to prove $(SP)$. Let $S$ be a valid set and
$H$ be the graph obtained from $G'$ by switching at every vertex of
$S$. Consider an alternating odd figure eight containing an edge from an edge gadget and an
edge from a partition gadget. This walk goes through a vertex
$u\in V_i$ such that the edge before $u$ in the walk lies in $G_i$ and
the other one lies in some $G_{uv}$. If $u\in S$, the paths of length
$2$ starting from $u$ in $G_i$ are monochromatic. Conversely, if
$v\notin S$, the paths of length $2$ starting at $u$ in $G_{uv}$ are
monochromatic. In both cases we reach a contradiction with the
existence of an alternating odd figure eight going through $v$. Therefore, every alternating odd figure eight
of $H$ is contained either in an edge gadget or in a partition gadget.

\section{Conclusion and perspectives}\label{sec:conclu}

We have introduced \VDEL{$H$}, \EDEL{$H$} and \SWITCH{$H$} and characterised their complexity for some small edge-coloured graphs $H$. The full complexity landscape still needs to be determined. We have fully classified the classic complexity of \VDEL{$H$} problems. It remains to do the same for \EDEL{$H$} and \SWITCH{$H$}.

We proved that both \VDEL{$H$} and \EDEL{$H$} are FPT when $H$ has order at most~$2$. However, if $H$ has order~$3$, for example if $H$ is a monochromatic triangle, we obtain \textsc{$3$-Colouring}, which is not even in \XP. \SWITCH{$H$} seems particularly interesting, since we obtained an FPT/\W-hard dichotomy when $H$ has order at most~$2$ (in which case the problem is always in \XP). But again for some $H$ of order~$3$, \SWITCH{$H$} is not in \XP. It would be very interesting to obtain FPT/\W/\XP trichotomies for \VDEL{$H$}, \EDEL{$H$} and \SWITCH{$H$}, as least for some interesting classes of targets $H$ such as, for example, trees or cycles.

One may also study restricted classes of inputs, such as planar graphs (studied in the context of switching homomorphisms in~\cite{planar}). For example, do the \W-hard cases of \SWITCH{$H$} become FPT (or even polynomial) when the input is planar?

Another variation that seems of interest, recently studied in~\cite{lists} for signed graphs and $2$-edge-coloured graphs, is the one when lists are involved (the input is given with a list function that assigns to each vertex, an allowed set of vertices from the target graph $H$). What are the complexities of list versions of \VDEL{$H$}, \EDEL{$H$} and \SWITCH{$H$} for $2$-edge-coloured graphs $H$?

One could also study VD or ED versions of \textsc{Signed $H$-Colouring}. For example, in~\cite{HBN10}, the authors proved that \textsc{ED Balanced Subgraph}, the problem of deciding whether a given signed graph becomes balanced after $k$ edge-deletions, is FPT for parameter $k$. (Note that the minimum number of edge/vertex-deletions required to make a signed graph balanced is studied under the name of \emph{frustration index/number} of that signed graph~\cite{Zpb}.) This problem is equivalent to the ED-version of \textsc{Signed $H_{b,b}^{2r}$-Colouring}.

Finally, we note that it could be interesting to study analogues of \VDEL{$H$} and \EDEL{$H$} for arbitrary fixed-template CSP problems, not just when $H$ is an edge-colored graph. 
To the best of our knowledge, this has not been done.

\end{document}